\documentclass[nobib,nofonts,justified]{tufte-handout-mct} %

\pdfoutput=1 %

\usepackage{mmap}

\newcommand{\ccs}[1]{}
\newcommand{\ndss}[1]{}
\newcommand{\tufte}[1]{#1}

\newenvironment{fw}{\vspace{-3ex}\begin{fullwidth}}{\end{fullwidth}}
\newcommand{\brk}{}
\newcommand{\aln}{}
\newcommand{\brkaln}[1]{}
\newcommand{\brkalnaln}[1]{&}
\newcommand{\DP}{differential privacy}

\setcaptionlabelfont{\bf}

\setcounter{secnumdepth}{2}

\titleformat{\section}%
  [hang]%
  {\normalfont\Large\bf}%
  {\thesection.}%
  {1em}%
  {}%
  []%

\titleformat{\subsection}%
  [hang]%
  {\normalfont\large}%
  {\thesubsection}%
  {1em}%
  {}%
  []%

\titleformat{\subsection}%
  [hang]%
  {\normalfont\large}%
  {\thesubsection}%
  {1em}%
  {}%
  []%

\titleformat{\subsubsection}%
  [hang]%
  {\normalfont\normalsize\textit}%
  {\thesubsubsection}%
  {1em}%
  {}%
  []%

\newenvironment{DIFnomarkup}{}{}

\usepackage[utf8]{inputenc}
\usepackage{newunicodechar}%
\clearpage{}%
\newunicodechar{£}{\sterling}
\newunicodechar{¥}{\yen}
\newunicodechar{¬}{\neg}
\newunicodechar{±}{\pm}
\newunicodechar{·}{\cdotp}
\newunicodechar{×}{\times}
\newunicodechar{ð}{\matheth}
\newunicodechar{÷}{\div}
\newunicodechar{Ƶ}{\Zbar}
\newunicodechar{Α}{\Alpha}
\newunicodechar{Β}{\Beta}
\newunicodechar{Γ}{\Gamma}
\newunicodechar{Δ}{\Delta}
\newunicodechar{Ε}{\Epsilon}
\newunicodechar{Ζ}{\Zeta}
\newunicodechar{Η}{\Eta}
\newunicodechar{Θ}{\Theta}
\newunicodechar{Ι}{\Iota}
\newunicodechar{Κ}{\Kappa}
\newunicodechar{Λ}{\Lambda}
\newunicodechar{Μ}{\Mu}
\newunicodechar{Ν}{\Nu}
\newunicodechar{Ξ}{\Xi}
\newunicodechar{Ο}{\Omicron}
\newunicodechar{Π}{\Pi}
\newunicodechar{Ρ}{\Rho}
\newunicodechar{Σ}{\Sigma}
\newunicodechar{Τ}{\Tau}
\newunicodechar{Υ}{\Upsilon}
\newunicodechar{Φ}{\Phi}
\newunicodechar{Χ}{\Chi}
\newunicodechar{Ψ}{\Psi}
\newunicodechar{Ω}{\Omega}
\newunicodechar{α}{\alpha}
\newunicodechar{β}{\beta}
\newunicodechar{γ}{\gamma}
\newunicodechar{δ}{\delta}
\newunicodechar{ε}{\epsilon}
\newunicodechar{ζ}{\zeta}
\newunicodechar{η}{\eta}
\newunicodechar{θ}{\theta}
\newunicodechar{ι}{\iota}
\newunicodechar{κ}{\kappa}
\newunicodechar{λ}{\lambda}
\newunicodechar{μ}{\mu}
\newunicodechar{ν}{\nu}
\newunicodechar{ξ}{\xi}
\newunicodechar{ο}{\omicron}
\newunicodechar{π}{\pi}
\newunicodechar{ρ}{\rho}
\newunicodechar{ς}{\varsigma}
\newunicodechar{σ}{\sigma}
\newunicodechar{τ}{\tau}
\newunicodechar{υ}{\upsilon}
\newunicodechar{φ}{\varphi}
\newunicodechar{χ}{\chi}
\newunicodechar{ψ}{\psi}
\newunicodechar{ω}{\omega}
\newunicodechar{ϐ}{\varbeta}
\newunicodechar{ϑ}{\vartheta}
\newunicodechar{ϕ}{\phi}
\newunicodechar{ϖ}{\varpi}
\newunicodechar{Ϙ}{\oldKoppa}
\newunicodechar{ϙ}{\oldkoppa}
\newunicodechar{Ϛ}{\Stigma}
\newunicodechar{ϛ}{\stigma}
\newunicodechar{Ϝ}{\Digamma}
\newunicodechar{ϝ}{\digamma}
\newunicodechar{Ϟ}{\Koppa}
\newunicodechar{ϟ}{\koppa}
\newunicodechar{Ϡ}{\Sampi}
\newunicodechar{ϡ}{\sampi}
\newunicodechar{ϰ}{\varkappa}
\newunicodechar{ϱ}{\varrho}
\newunicodechar{ϴ}{\varTheta}
\newunicodechar{ϵ}{\varepsilon}
\newunicodechar{϶}{\backepsilon}
\newunicodechar{―}{\horizbar}
\newunicodechar{‖}{\Vert}
\newunicodechar{‗}{\twolowline}
\newunicodechar{†}{\dagger}
\newunicodechar{‡}{\ddagger}
\newunicodechar{•}{\smblkcircle}
\newunicodechar{‥}{\enleadertwodots}
\newunicodechar{…}{\unicodeellipsis}
\newunicodechar{′}{\prime}
\newunicodechar{″}{\dprime}
\newunicodechar{‴}{\trprime}
\newunicodechar{‵}{\backprime}
\newunicodechar{‶}{\backdprime}
\newunicodechar{‷}{\backtrprime}
\newunicodechar{‸}{\caretinsert}
\newunicodechar{‼}{\Exclam}
\newunicodechar{⁀}{\tieconcat}
\newunicodechar{⁃}{\hyphenbullet}
\newunicodechar{⁄}{\fracslash}
\newunicodechar{⁇}{\Question}
\newunicodechar{⁐}{\closure}
\newunicodechar{⁗}{\qprime}
\newunicodechar{€}{\euro}
\newunicodechar{⃐}{\leftharpoonaccent}
\newunicodechar{⃑}{\rightharpoonaccent}
\newunicodechar{⃒}{\vertoverlay}
\newunicodechar{⃖}{\overleftarrow}
\newunicodechar{⃗}{\vec}
\newunicodechar{⃛}{\dddot}
\newunicodechar{⃜}{\ddddot}
\newunicodechar{⃝}{\enclosecircle}
\newunicodechar{⃞}{\enclosesquare}
\newunicodechar{⃟}{\enclosediamond}
\newunicodechar{⃡}{\overleftrightarrow}
\newunicodechar{⃤}{\enclosetriangle}
\newunicodechar{⃧}{\annuity}
\newunicodechar{ℂ}{\BbbC}
\newunicodechar{ℇ}{\Eulerconst}
\newunicodechar{ℊ}{\mscrg}
\newunicodechar{ℋ}{\mscrH}
\newunicodechar{ℌ}{\mfrakH}
\newunicodechar{ℍ}{\BbbH}
\newunicodechar{ℎ}{\Planckconst}
\newunicodechar{ℏ}{\hslash}
\newunicodechar{ℐ}{\mscrI}
\newunicodechar{ℑ}{\Im}
\newunicodechar{ℒ}{\mscrL}
\newunicodechar{ℓ}{\ell}
\newunicodechar{ℕ}{\BbbN}
\newunicodechar{℘}{\wp}
\newunicodechar{ℙ}{\BbbP}
\newunicodechar{ℚ}{\BbbQ}
\newunicodechar{ℛ}{\mscrR}
\newunicodechar{ℜ}{\Re}
\newunicodechar{ℝ}{\BbbR}
\newunicodechar{ℤ}{\BbbZ}
\newunicodechar{℧}{\mho}
\newunicodechar{ℨ}{\mfrakZ}
\newunicodechar{℩}{\turnediota}
\newunicodechar{Å}{\Angstrom}
\newunicodechar{ℬ}{\mscrB}
\newunicodechar{ℭ}{\mfrakC}
\newunicodechar{ℯ}{\mscre}
\newunicodechar{ℰ}{\mscrE}
\newunicodechar{ℱ}{\mscrF}
\newunicodechar{Ⅎ}{\Finv}
\newunicodechar{ℳ}{\mscrM}
\newunicodechar{ℴ}{\mscro}
\newunicodechar{ℵ}{\aleph}
\newunicodechar{ℶ}{\beth}
\newunicodechar{ℷ}{\gimel}
\newunicodechar{ℸ}{\daleth}
\newunicodechar{ℼ}{\Bbbpi}
\newunicodechar{ℽ}{\Bbbgamma}
\newunicodechar{ℾ}{\BbbGamma}
\newunicodechar{ℿ}{\BbbPi}
\newunicodechar{⅀}{\Bbbsum}
\newunicodechar{⅁}{\Game}
\newunicodechar{⅂}{\sansLturned}
\newunicodechar{⅃}{\sansLmirrored}
\newunicodechar{⅄}{\Yup}
\newunicodechar{ⅅ}{\mitBbbD}
\newunicodechar{ⅆ}{\mitBbbd}
\newunicodechar{ⅇ}{\mitBbbe}
\newunicodechar{ⅈ}{\mitBbbi}
\newunicodechar{ⅉ}{\mitBbbj}
\newunicodechar{⅊}{\PropertyLine}
\newunicodechar{⅋}{\upand} %
\newunicodechar{←}{\leftarrow}
\newunicodechar{↑}{\uparrow}
\newunicodechar{→}{\rightarrow}
\newunicodechar{↓}{\downarrow}
\newunicodechar{↔}{\leftrightarrow}
\newunicodechar{↕}{\updownarrow}
\newunicodechar{↖}{\nwarrow}
\newunicodechar{↗}{\nearrow}
\newunicodechar{↘}{\searrow}
\newunicodechar{↙}{\swarrow}
\newunicodechar{↚}{\nleftarrow}
\newunicodechar{↛}{\nrightarrow}
\newunicodechar{↜}{\leftwavearrow}
\newunicodechar{↝}{\rightwavearrow}
\newunicodechar{↞}{\twoheadleftarrow}
\newunicodechar{↟}{\twoheaduparrow}
\newunicodechar{↠}{\twoheadrightarrow}
\newunicodechar{↡}{\twoheaddownarrow}
\newunicodechar{↢}{\leftarrowtail}
\newunicodechar{↣}{\rightarrowtail}
\newunicodechar{↤}{\mapsfrom}
\newunicodechar{↥}{\mapsup}
\newunicodechar{↦}{\mapsto}
\newunicodechar{↧}{\mapsdown}
\newunicodechar{↨}{\updownarrowbar}
\newunicodechar{↩}{\hookleftarrow}
\newunicodechar{↪}{\hookrightarrow}
\newunicodechar{↫}{\looparrowleft}
\newunicodechar{↬}{\looparrowright}
\newunicodechar{↭}{\leftrightsquigarrow}
\newunicodechar{↮}{\nleftrightarrow}
\newunicodechar{↯}{\downzigzagarrow}
\newunicodechar{↰}{\Lsh}
\newunicodechar{↱}{\Rsh}
\newunicodechar{↲}{\Ldsh}
\newunicodechar{↳}{\Rdsh}
\newunicodechar{↴}{\linefeed}
\newunicodechar{↵}{\carriagereturn}
\newunicodechar{↶}{\curvearrowleft}
\newunicodechar{↷}{\curvearrowright}
\newunicodechar{↸}{\barovernorthwestarrow}
\newunicodechar{↹}{\barleftarrowrightarrowbar}
\newunicodechar{↺}{\acwopencirclearrow}
\newunicodechar{↻}{\cwopencirclearrow}
\newunicodechar{↼}{\leftharpoonup}
\newunicodechar{↽}{\leftharpoondown}
\newunicodechar{↾}{\upharpoonright}
\newunicodechar{↿}{\upharpoonleft}
\newunicodechar{⇀}{\rightharpoonup}
\newunicodechar{⇁}{\rightharpoondown}
\newunicodechar{⇂}{\downharpoonright}
\newunicodechar{⇃}{\downharpoonleft}
\newunicodechar{⇄}{\rightleftarrows}
\newunicodechar{⇅}{\updownarrows}
\newunicodechar{⇆}{\leftrightarrows}
\newunicodechar{⇇}{\leftleftarrows}
\newunicodechar{⇈}{\upuparrows}
\newunicodechar{⇉}{\rightrightarrows}
\newunicodechar{⇊}{\downdownarrows}
\newunicodechar{⇋}{\leftrightharpoons}
\newunicodechar{⇌}{\rightleftharpoons}
\newunicodechar{⇍}{\nLeftarrow}
\newunicodechar{⇎}{\nLeftrightarrow}
\newunicodechar{⇏}{\nRightarrow}
\newunicodechar{⇐}{\Leftarrow}
\newunicodechar{⇑}{\Uparrow}
\newunicodechar{⇒}{\Rightarrow}
\newunicodechar{⇓}{\Downarrow}
\newunicodechar{⇔}{\Leftrightarrow}
\newunicodechar{⇕}{\Updownarrow}
\newunicodechar{⇖}{\Nwarrow}
\newunicodechar{⇗}{\Nearrow}
\newunicodechar{⇘}{\Searrow}
\newunicodechar{⇙}{\Swarrow}
\newunicodechar{⇚}{\Lleftarrow}
\newunicodechar{⇛}{\Rrightarrow}
\newunicodechar{⇜}{\leftsquigarrow}
\newunicodechar{⇝}{\rightsquigarrow}
\newunicodechar{⇞}{\nHuparrow}
\newunicodechar{⇟}{\nHdownarrow}
\newunicodechar{⇠}{\leftdasharrow}
\newunicodechar{⇡}{\updasharrow}
\newunicodechar{⇢}{\rightdasharrow}
\newunicodechar{⇣}{\downdasharrow}
\newunicodechar{⇤}{\barleftarrow}
\newunicodechar{⇥}{\rightarrowbar}
\newunicodechar{⇦}{\leftwhitearrow}
\newunicodechar{⇧}{\upwhitearrow}
\newunicodechar{⇨}{\rightwhitearrow}
\newunicodechar{⇩}{\downwhitearrow}
\newunicodechar{⇪}{\whitearrowupfrombar}
\newunicodechar{⇴}{\circleonrightarrow}
\newunicodechar{⇵}{\downuparrows}
\newunicodechar{⇶}{\rightthreearrows}
\newunicodechar{⇷}{\nvleftarrow}
\newunicodechar{⇸}{\nvrightarrow}
\newunicodechar{⇹}{\nvleftrightarrow}
\newunicodechar{⇺}{\nVleftarrow}
\newunicodechar{⇻}{\nVrightarrow}
\newunicodechar{⇼}{\nVleftrightarrow}
\newunicodechar{⇽}{\leftarrowtriangle}
\newunicodechar{⇾}{\rightarrowtriangle}
\newunicodechar{⇿}{\leftrightarrowtriangle}
\newunicodechar{∀}{\forall}
\newunicodechar{∁}{\complement}
\newunicodechar{∂}{\partial}
\newunicodechar{∃}{\exists}
\newunicodechar{∄}{\nexists}
\newunicodechar{∅}{\varnothing}
\newunicodechar{∆}{\increment}
\newunicodechar{∇}{\nabla}
\newunicodechar{∈}{\in}
\newunicodechar{∉}{\notin}
\newunicodechar{∊}{\smallin}
\newunicodechar{∋}{\ni}
\newunicodechar{∌}{\nni}
\newunicodechar{∍}{\smallni}
\newunicodechar{∎}{\QED}
\newunicodechar{∏}{\prod}
\newunicodechar{∐}{\coprod}
\newunicodechar{∑}{\sum}
\newunicodechar{−}{\minus}
\newunicodechar{∓}{\mp}
\newunicodechar{∔}{\dotplus}
\newunicodechar{∕}{\divslash}
\newunicodechar{∖}{\smallsetminus}
\newunicodechar{∗}{\ast}
\newunicodechar{∘}{\vysmwhtcircle}
\newunicodechar{∙}{\vysmblkcircle}
\newunicodechar{√}{\sqrt}
\newunicodechar{∛}{\cuberoot}
\newunicodechar{∜}{\fourthroot}
\newunicodechar{∝}{\propto}
\newunicodechar{∞}{\infty}
\newunicodechar{∟}{\rightangle}
\newunicodechar{∠}{\angle}
\newunicodechar{∡}{\measuredangle}
\newunicodechar{∢}{\sphericalangle}
\newunicodechar{∣}{\mid}
\newunicodechar{∤}{\nmid}
\newunicodechar{∥}{\parallel}
\newunicodechar{∦}{\nparallel}
\newunicodechar{∧}{\wedge}
\newunicodechar{∨}{\vee}
\newunicodechar{∩}{\cap}
\newunicodechar{∪}{\cup}
\newunicodechar{∫}{\int}
\newunicodechar{∬}{\iint}
\newunicodechar{∭}{\iiint}
\newunicodechar{∮}{\oint}
\newunicodechar{∯}{\oiint}
\newunicodechar{∰}{\oiiint}
\newunicodechar{∱}{\intclockwise}
\newunicodechar{∲}{\varointclockwise}
\newunicodechar{∳}{\ointctrclockwise}
\newunicodechar{∴}{\therefore}
\newunicodechar{∵}{\because}
\newunicodechar{∶}{\mathratio}
\newunicodechar{∷}{\Colon}
\newunicodechar{∸}{\dotminus}
\newunicodechar{∹}{\dashcolon}
\newunicodechar{∺}{\dotsminusdots}
\newunicodechar{∻}{\kernelcontraction}
\newunicodechar{∼}{\sim}
\newunicodechar{∽}{\backsim}
\newunicodechar{∾}{\invlazys}
\newunicodechar{∿}{\sinewave}
\newunicodechar{≀}{\wr}
\newunicodechar{≁}{\nsim}
\newunicodechar{≂}{\eqsim}
\newunicodechar{≃}{\simeq}
\newunicodechar{≄}{\nsime}
\newunicodechar{≅}{\cong}
\newunicodechar{≆}{\simneqq}
\newunicodechar{≇}{\ncong}
\newunicodechar{≈}{\approx}
\newunicodechar{≉}{\napprox}
\newunicodechar{≊}{\approxeq}
\newunicodechar{≋}{\approxident}
\newunicodechar{≌}{\backcong}
\newunicodechar{≍}{\asymp}
\newunicodechar{≎}{\Bumpeq}
\newunicodechar{≏}{\bumpeq}
\newunicodechar{≐}{\doteq}
\newunicodechar{≑}{\Doteq}
\newunicodechar{≒}{\fallingdotseq}
\newunicodechar{≓}{\risingdotseq}
\newunicodechar{≔}{\coloneq}
\newunicodechar{≕}{\eqcolon}
\newunicodechar{≖}{\eqcirc}
\newunicodechar{≗}{\circeq}
\newunicodechar{≘}{\arceq}
\newunicodechar{≙}{\wedgeq}
\newunicodechar{≚}{\veeeq}
\newunicodechar{≛}{\stareq}
\newunicodechar{≜}{\triangleq}
\newunicodechar{≝}{\eqdef}
\newunicodechar{≞}{\measeq}
\newunicodechar{≟}{\questeq}
\newunicodechar{≠}{\ne}
\newunicodechar{≡}{\equiv}
\newunicodechar{≢}{\nequiv}
\newunicodechar{≣}{\Equiv}
\newunicodechar{≤}{\leq}
\newunicodechar{≥}{\geq}
\newunicodechar{≦}{\leqq}
\newunicodechar{≧}{\geqq}
\newunicodechar{≨}{\lneqq}
\newunicodechar{≩}{\gneqq}
\newunicodechar{≪}{\ll}
\newunicodechar{≫}{\gg}
\newunicodechar{≬}{\between}
\newunicodechar{≭}{\nasymp}
\newunicodechar{≮}{\nless}
\newunicodechar{≯}{\ngtr}
\newunicodechar{≰}{\nleq}
\newunicodechar{≱}{\ngeq}
\newunicodechar{≲}{\lesssim}
\newunicodechar{≳}{\gtrsim}
\newunicodechar{≴}{\nlesssim}
\newunicodechar{≵}{\ngtrsim}
\newunicodechar{≶}{\lessgtr}
\newunicodechar{≷}{\gtrless}
\newunicodechar{≸}{\nlessgtr}
\newunicodechar{≹}{\ngtrless}
\newunicodechar{≺}{\prec}
\newunicodechar{≻}{\succ}
\newunicodechar{≼}{\preccurlyeq}
\newunicodechar{≽}{\succcurlyeq}
\newunicodechar{≾}{\precsim}
\newunicodechar{≿}{\succsim}
\newunicodechar{⊀}{\nprec}
\newunicodechar{⊁}{\nsucc}
\newunicodechar{⊂}{\subset}
\newunicodechar{⊃}{\supset}
\newunicodechar{⊄}{\nsubset}
\newunicodechar{⊅}{\nsupset}
\newunicodechar{⊆}{\subseteq}
\newunicodechar{⊇}{\supseteq}
\newunicodechar{⊈}{\nsubseteq}
\newunicodechar{⊉}{\nsupseteq}
\newunicodechar{⊊}{\subsetneq}
\newunicodechar{⊋}{\supsetneq}
\newunicodechar{⊌}{\cupleftarrow}
\newunicodechar{⊍}{\cupdot}
\newunicodechar{⊎}{\uplus}
\newunicodechar{⊏}{\sqsubset}
\newunicodechar{⊐}{\sqsupset}
\newunicodechar{⊑}{\sqsubseteq}
\newunicodechar{⊒}{\sqsupseteq}
\newunicodechar{⊓}{\sqcap}
\newunicodechar{⊔}{\sqcup}
\newunicodechar{⊕}{\oplus}
\newunicodechar{⊖}{\ominus}
\newunicodechar{⊗}{\otimes}
\newunicodechar{⊘}{\oslash}
\newunicodechar{⊙}{\odot}
\newunicodechar{⊚}{\circledcirc}
\newunicodechar{⊛}{\circledast}
\newunicodechar{⊜}{\circledequal}
\newunicodechar{⊝}{\circleddash}
\newunicodechar{⊞}{\boxplus}
\newunicodechar{⊟}{\boxminus}
\newunicodechar{⊠}{\boxtimes}
\newunicodechar{⊡}{\boxdot}
\newunicodechar{⊢}{\vdash}
\newunicodechar{⊣}{\dashv}
\newunicodechar{⊤}{\top}
\newunicodechar{⊥}{\bot}
\newunicodechar{⊦}{\assert}
\newunicodechar{⊧}{\models}
\newunicodechar{⊨}{\vDash}
\newunicodechar{⊩}{\Vdash}
\newunicodechar{⊪}{\Vvdash}
\newunicodechar{⊫}{\VDash}
\newunicodechar{⊬}{\nvdash}
\newunicodechar{⊭}{\nvDash}
\newunicodechar{⊮}{\nVdash}
\newunicodechar{⊯}{\nVDash}
\newunicodechar{⊰}{\prurel}
\newunicodechar{⊱}{\scurel}
\newunicodechar{⊲}{\vartriangleleft}
\newunicodechar{⊳}{\vartriangleright}
\newunicodechar{⊴}{\trianglelefteq}
\newunicodechar{⊵}{\trianglerighteq}
\newunicodechar{⊶}{\origof}
\newunicodechar{⊷}{\imageof}
\newunicodechar{⊸}{\multimap}
\newunicodechar{⊹}{\hermitmatrix}
\newunicodechar{⊺}{\intercal}
\newunicodechar{⊻}{\veebar}
\newunicodechar{⊼}{\barwedge}
\newunicodechar{⊽}{\barvee}
\newunicodechar{⊾}{\measuredrightangle}
\newunicodechar{⊿}{\varlrtriangle}
\newunicodechar{⋀}{\bigwedge}
\newunicodechar{⋁}{\bigvee}
\newunicodechar{⋂}{\bigcap}
\newunicodechar{⋃}{\bigcup}
\newunicodechar{⋄}{\smwhtdiamond}
\newunicodechar{⋅}{\cdot}
\newunicodechar{⋆}{\star}
\newunicodechar{⋇}{\divideontimes}
\newunicodechar{⋈}{\bowtie}
\newunicodechar{⋉}{\ltimes}
\newunicodechar{⋊}{\rtimes}
\newunicodechar{⋋}{\leftthreetimes}
\newunicodechar{⋌}{\rightthreetimes}
\newunicodechar{⋍}{\backsimeq}
\newunicodechar{⋎}{\curlyvee}
\newunicodechar{⋏}{\curlywedge}
\newunicodechar{⋐}{\Subset}
\newunicodechar{⋑}{\Supset}
\newunicodechar{⋒}{\Cap}
\newunicodechar{⋓}{\Cup}
\newunicodechar{⋔}{\pitchfork}
\newunicodechar{⋕}{\equalparallel}
\newunicodechar{⋖}{\lessdot}
\newunicodechar{⋗}{\gtrdot}
\newunicodechar{⋘}{\lll}
\newunicodechar{⋙}{\ggg}
\newunicodechar{⋚}{\lesseqgtr}
\newunicodechar{⋛}{\gtreqless}
\newunicodechar{⋜}{\eqless}
\newunicodechar{⋝}{\eqgtr}
\newunicodechar{⋞}{\curlyeqprec}
\newunicodechar{⋟}{\curlyeqsucc}
\newunicodechar{⋠}{\npreccurlyeq}
\newunicodechar{⋡}{\nsucccurlyeq}
\newunicodechar{⋢}{\nsqsubseteq}
\newunicodechar{⋣}{\nsqsupseteq}
\newunicodechar{⋤}{\sqsubsetneq}
\newunicodechar{⋥}{\sqsupsetneq}
\newunicodechar{⋦}{\lnsim}
\newunicodechar{⋧}{\gnsim}
\newunicodechar{⋨}{\precnsim}
\newunicodechar{⋩}{\succnsim}
\newunicodechar{⋪}{\nvartriangleleft}
\newunicodechar{⋫}{\nvartriangleright}
\newunicodechar{⋬}{\ntrianglelefteq}
\newunicodechar{⋭}{\ntrianglerighteq}
\newunicodechar{⋮}{\vdots}
\newunicodechar{⋯}{\unicodecdots}
\newunicodechar{⋰}{\adots}
\newunicodechar{⋱}{\ddots}
\newunicodechar{⋲}{\disin}
\newunicodechar{⋳}{\varisins}
\newunicodechar{⋴}{\isins}
\newunicodechar{⋵}{\isindot}
\newunicodechar{⋶}{\varisinobar}
\newunicodechar{⋷}{\isinobar}
\newunicodechar{⋸}{\isinvb}
\newunicodechar{⋹}{\isinE}
\newunicodechar{⋺}{\nisd}
\newunicodechar{⋻}{\varnis}
\newunicodechar{⋼}{\nis}
\newunicodechar{⋽}{\varniobar}
\newunicodechar{⋾}{\niobar}
\newunicodechar{⋿}{\bagmember}
\newunicodechar{⌀}{\diameter}
\newunicodechar{⌂}{\house}
\newunicodechar{⌅}{\varbarwedge}
\newunicodechar{⌆}{\vardoublebarwedge}
\newunicodechar{⌈}{\lceil}
\newunicodechar{⌉}{\rceil}
\newunicodechar{⌊}{\lfloor}
\newunicodechar{⌋}{\rfloor}
\newunicodechar{⌐}{\invnot}
\newunicodechar{⌑}{\sqlozenge}
\newunicodechar{⌒}{\profline}
\newunicodechar{⌓}{\profsurf}
\newunicodechar{⌗}{\viewdata}
\newunicodechar{⌙}{\turnednot}
\newunicodechar{⌜}{\ulcorner}
\newunicodechar{⌝}{\urcorner}
\newunicodechar{⌞}{\llcorner}
\newunicodechar{⌟}{\lrcorner}
\newunicodechar{⌠}{\inttop}
\newunicodechar{⌡}{\intbottom}
\newunicodechar{⌢}{\frown}
\newunicodechar{⌣}{\smile}
\newunicodechar{⌬}{\varhexagonlrbonds}
\newunicodechar{⌲}{\conictaper}
\newunicodechar{⌶}{\topbot}
\newunicodechar{⌽}{\obar}
\newunicodechar{⌿}{\APLnotslash}
\newunicodechar{⍀}{\APLnotbackslash}
\newunicodechar{⍓}{\APLboxupcaret}
\newunicodechar{⍰}{\APLboxquestion}
\newunicodechar{⍼}{\rangledownzigzagarrow}
\newunicodechar{⎔}{\hexagon}
\newunicodechar{⎛}{\lparenuend}
\newunicodechar{⎜}{\lparenextender}
\newunicodechar{⎝}{\lparenlend}
\newunicodechar{⎞}{\rparenuend}
\newunicodechar{⎟}{\rparenextender}
\newunicodechar{⎠}{\rparenlend}
\newunicodechar{⎡}{\lbrackuend}
\newunicodechar{⎢}{\lbrackextender}
\newunicodechar{⎣}{\lbracklend}
\newunicodechar{⎤}{\rbrackuend}
\newunicodechar{⎥}{\rbrackextender}
\newunicodechar{⎦}{\rbracklend}
\newunicodechar{⎧}{\lbraceuend}
\newunicodechar{⎨}{\lbracemid}
\newunicodechar{⎩}{\lbracelend}
\newunicodechar{⎪}{\vbraceextender}
\newunicodechar{⎫}{\rbraceuend}
\newunicodechar{⎬}{\rbracemid}
\newunicodechar{⎭}{\rbracelend}
\newunicodechar{⎮}{\intextender}
\newunicodechar{⎯}{\harrowextender}
\newunicodechar{⎰}{\lmoustache}
\newunicodechar{⎱}{\rmoustache}
\newunicodechar{⎲}{\sumtop}
\newunicodechar{⎳}{\sumbottom}
\newunicodechar{⎴}{\overbracket}
\newunicodechar{⎵}{\underbracket}
\newunicodechar{⎶}{\bbrktbrk}
\newunicodechar{⎷}{\sqrtbottom}
\newunicodechar{⎸}{\lvboxline}
\newunicodechar{⎹}{\rvboxline}
\newunicodechar{⏎}{\varcarriagereturn}
\newunicodechar{⏜}{\overparen}
\newunicodechar{⏝}{\underparen}
\newunicodechar{⏞}{\overbrace}
\newunicodechar{⏟}{\underbrace}
\newunicodechar{⏠}{\obrbrak}
\newunicodechar{⏡}{\ubrbrak}
\newunicodechar{⏢}{\trapezium}
\newunicodechar{⏣}{\benzenr}
\newunicodechar{⏤}{\strns}
\newunicodechar{⏥}{\fltns}
\newunicodechar{⏦}{\accurrent}
\newunicodechar{⏧}{\elinters}
\newunicodechar{␢}{\blanksymbol}
\newunicodechar{␣}{\mathvisiblespace}
\newunicodechar{┆}{\bdtriplevdash}
\newunicodechar{▀}{\blockuphalf}
\newunicodechar{▄}{\blocklowhalf}
\newunicodechar{█}{\blockfull}
\newunicodechar{▌}{\blocklefthalf}
\newunicodechar{▐}{\blockrighthalf}
\newunicodechar{░}{\blockqtrshaded}
\newunicodechar{▒}{\blockhalfshaded}
\newunicodechar{▓}{\blockthreeqtrshaded}
\newunicodechar{■}{\mdlgblksquare}
\newunicodechar{□}{\mdlgwhtsquare}
\newunicodechar{▢}{\squoval}
\newunicodechar{▣}{\blackinwhitesquare}
\newunicodechar{▤}{\squarehfill}
\newunicodechar{▥}{\squarevfill}
\newunicodechar{▦}{\squarehvfill}
\newunicodechar{▧}{\squarenwsefill}
\newunicodechar{▨}{\squareneswfill}
\newunicodechar{▩}{\squarecrossfill}
\newunicodechar{▪}{\smblksquare}
\newunicodechar{▫}{\smwhtsquare}
\newunicodechar{▬}{\hrectangleblack}
\newunicodechar{▭}{\hrectangle}
\newunicodechar{▮}{\vrectangleblack}
\newunicodechar{▯}{\vrectangle}
\newunicodechar{▰}{\parallelogramblack}
\newunicodechar{▱}{\parallelogram}
\newunicodechar{▲}{\bigblacktriangleup}
\newunicodechar{△}{\bigtriangleup}
\newunicodechar{▴}{\blacktriangle}
\newunicodechar{▵}{\vartriangle}
\newunicodechar{▶}{\blacktriangleright}
\newunicodechar{▷}{\triangleright}
\newunicodechar{▸}{\smallblacktriangleright}
\newunicodechar{▹}{\smalltriangleright}
\newunicodechar{►}{\blackpointerright}
\newunicodechar{▻}{\whitepointerright}
\newunicodechar{▼}{\bigblacktriangledown}
\newunicodechar{▽}{\bigtriangledown}
\newunicodechar{▾}{\blacktriangledown}
\newunicodechar{▿}{\triangledown}
\newunicodechar{◀}{\blacktriangleleft}
\newunicodechar{◁}{\triangleleft}
\newunicodechar{◂}{\smallblacktriangleleft}
\newunicodechar{◃}{\smalltriangleleft}
\newunicodechar{◄}{\blackpointerleft}
\newunicodechar{◅}{\whitepointerleft}
\newunicodechar{◆}{\mdlgblkdiamond}
\newunicodechar{◇}{\mdlgwhtdiamond}
\newunicodechar{◈}{\blackinwhitediamond}
\newunicodechar{◉}{\fisheye}
\newunicodechar{◊}{\mdlgwhtlozenge}
\newunicodechar{○}{\mdlgwhtcircle}
\newunicodechar{◌}{\dottedcircle}
\newunicodechar{◍}{\circlevertfill}
\newunicodechar{◎}{\bullseye}
\newunicodechar{●}{\mdlgblkcircle}
\newunicodechar{◐}{\circlelefthalfblack}
\newunicodechar{◑}{\circlerighthalfblack}
\newunicodechar{◒}{\circlebottomhalfblack}
\newunicodechar{◓}{\circletophalfblack}
\newunicodechar{◔}{\circleurquadblack}
\newunicodechar{◕}{\blackcircleulquadwhite}
\newunicodechar{◖}{\blacklefthalfcircle}
\newunicodechar{◗}{\blackrighthalfcircle}
\newunicodechar{◘}{\inversebullet}
\newunicodechar{◙}{\inversewhitecircle}
\newunicodechar{◚}{\invwhiteupperhalfcircle}
\newunicodechar{◛}{\invwhitelowerhalfcircle}
\newunicodechar{◜}{\ularc}
\newunicodechar{◝}{\urarc}
\newunicodechar{◞}{\lrarc}
\newunicodechar{◟}{\llarc}
\newunicodechar{◠}{\topsemicircle}
\newunicodechar{◡}{\botsemicircle}
\newunicodechar{◢}{\lrblacktriangle}
\newunicodechar{◣}{\llblacktriangle}
\newunicodechar{◤}{\ulblacktriangle}
\newunicodechar{◥}{\urblacktriangle}
\newunicodechar{◦}{\smwhtcircle}
\newunicodechar{◧}{\squareleftblack}
\newunicodechar{◨}{\squarerightblack}
\newunicodechar{◩}{\squareulblack}
\newunicodechar{◪}{\squarelrblack}
\newunicodechar{◫}{\boxbar}
\newunicodechar{◬}{\trianglecdot}
\newunicodechar{◭}{\triangleleftblack}
\newunicodechar{◮}{\trianglerightblack}
\newunicodechar{◯}{\lgwhtcircle}
\newunicodechar{◰}{\squareulquad}
\newunicodechar{◱}{\squarellquad}
\newunicodechar{◲}{\squarelrquad}
\newunicodechar{◳}{\squareurquad}
\newunicodechar{◴}{\circleulquad}
\newunicodechar{◵}{\circlellquad}
\newunicodechar{◶}{\circlelrquad}
\newunicodechar{◷}{\circleurquad}
\newunicodechar{◸}{\ultriangle}
\newunicodechar{◹}{\urtriangle}
\newunicodechar{◺}{\lltriangle}
\newunicodechar{◻}{\mdwhtsquare}
\newunicodechar{◼}{\mdblksquare}
\newunicodechar{◽}{\mdsmwhtsquare}
\newunicodechar{◾}{\mdsmblksquare}
\newunicodechar{◿}{\lrtriangle}
\newunicodechar{★}{\bigstar}
\newunicodechar{☆}{\bigwhitestar}
\newunicodechar{☉}{\astrosun}
\newunicodechar{☡}{\danger}
\newunicodechar{☻}{\blacksmiley}
\newunicodechar{☼}{\sun}
\newunicodechar{☽}{\rightmoon}
\newunicodechar{☾}{\leftmoon}
\newunicodechar{♀}{\female}
\newunicodechar{♂}{\male}
\newunicodechar{♠}{\spadesuit}
\newunicodechar{♡}{\heartsuit}
\newunicodechar{♢}{\diamondsuit}
\newunicodechar{♣}{\clubsuit}
\newunicodechar{♤}{\varspadesuit}
\newunicodechar{♥}{\varheartsuit}
\newunicodechar{♦}{\vardiamondsuit}
\newunicodechar{♧}{\varclubsuit}
\newunicodechar{♩}{\quarternote}
\newunicodechar{♪}{\eighthnote}
\newunicodechar{♫}{\twonotes}
\newunicodechar{♭}{\flat}
\newunicodechar{♮}{\natural}
\newunicodechar{♯}{\sharp}
\newunicodechar{♾}{\acidfree}
\newunicodechar{⚀}{\dicei}
\newunicodechar{⚁}{\diceii}
\newunicodechar{⚂}{\diceiii}
\newunicodechar{⚃}{\diceiv}
\newunicodechar{⚄}{\dicev}
\newunicodechar{⚅}{\dicevi}
\newunicodechar{⚆}{\circledrightdot}
\newunicodechar{⚇}{\circledtwodots}
\newunicodechar{⚈}{\blackcircledrightdot}
\newunicodechar{⚉}{\blackcircledtwodots}
\newunicodechar{⚥}{\Hermaphrodite}
\newunicodechar{⚪}{\mdwhtcircle}
\newunicodechar{⚫}{\mdblkcircle}
\newunicodechar{⚬}{\mdsmwhtcircle}
\newunicodechar{⚲}{\neuter}
\newunicodechar{✓}{\checkmark}
\newunicodechar{✠}{\maltese}
\newunicodechar{✪}{\circledstar}
\newunicodechar{✶}{\varstar}
\newunicodechar{✽}{\dingasterisk}
\newunicodechar{❲}{\lbrbrak}
\newunicodechar{❳}{\rbrbrak}
\newunicodechar{➛}{\draftingarrow}
\newunicodechar{⟀}{\threedangle}
\newunicodechar{⟁}{\whiteinwhitetriangle}
\newunicodechar{⟂}{\perp}
\newunicodechar{⟃}{\subsetcirc}
\newunicodechar{⟄}{\supsetcirc}
\newunicodechar{⟅}{\lbag}
\newunicodechar{⟆}{\rbag}
\newunicodechar{⟇}{\veedot}
\newunicodechar{⟈}{\bsolhsub}
\newunicodechar{⟉}{\suphsol}
\newunicodechar{⟌}{\longdivision}
\newunicodechar{⟐}{\diamondcdot}
\newunicodechar{⟑}{\wedgedot}
\newunicodechar{⟒}{\upin}
\newunicodechar{⟓}{\pullback}
\newunicodechar{⟔}{\pushout}
\newunicodechar{⟕}{\leftouterjoin}
\newunicodechar{⟖}{\rightouterjoin}
\newunicodechar{⟗}{\fullouterjoin}
\newunicodechar{⟘}{\bigbot}
\newunicodechar{⟙}{\bigtop}
\newunicodechar{⟚}{\DashVDash}
\newunicodechar{⟛}{\dashVdash}
\newunicodechar{⟜}{\multimapinv}
\newunicodechar{⟝}{\vlongdash}
\newunicodechar{⟞}{\longdashv}
\newunicodechar{⟟}{\cirbot}
\newunicodechar{⟠}{\lozengeminus}
\newunicodechar{⟡}{\concavediamond}
\newunicodechar{⟢}{\concavediamondtickleft}
\newunicodechar{⟣}{\concavediamondtickright}
\newunicodechar{⟤}{\whitesquaretickleft}
\newunicodechar{⟥}{\whitesquaretickright}
\newunicodechar{⟦}{\lBrack}
\newunicodechar{⟧}{\rBrack}
\newunicodechar{⟨}{\langle}
\newunicodechar{⟩}{\rangle}
\newunicodechar{⟪}{\lAngle}
\newunicodechar{⟫}{\rAngle}
\newunicodechar{⟬}{\Lbrbrak}
\newunicodechar{⟭}{\Rbrbrak}
\newunicodechar{⟮}{\lgroup}
\newunicodechar{⟯}{\rgroup}
\newunicodechar{⟰}{\UUparrow}
\newunicodechar{⟱}{\DDownarrow}
\newunicodechar{⟲}{\acwgapcirclearrow}
\newunicodechar{⟳}{\cwgapcirclearrow}
\newunicodechar{⟴}{\rightarrowonoplus}
\newunicodechar{⟵}{\longleftarrow}
\newunicodechar{⟶}{\longrightarrow}
\newunicodechar{⟷}{\longleftrightarrow}
\newunicodechar{⟸}{\Longleftarrow}
\newunicodechar{⟹}{\Longrightarrow}
\newunicodechar{⟺}{\Longleftrightarrow}
\newunicodechar{⟻}{\longmapsfrom}
\newunicodechar{⟼}{\longmapsto}
\newunicodechar{⟽}{\Longmapsfrom}
\newunicodechar{⟾}{\Longmapsto}
\newunicodechar{⟿}{\longrightsquigarrow}
\newunicodechar{⤀}{\nvtwoheadrightarrow}
\newunicodechar{⤁}{\nVtwoheadrightarrow}
\newunicodechar{⤂}{\nvLeftarrow}
\newunicodechar{⤃}{\nvRightarrow}
\newunicodechar{⤄}{\nvLeftrightarrow}
\newunicodechar{⤅}{\twoheadmapsto}
\newunicodechar{⤆}{\Mapsfrom}
\newunicodechar{⤇}{\Mapsto}
\newunicodechar{⤈}{\downarrowbarred}
\newunicodechar{⤉}{\uparrowbarred}
\newunicodechar{⤊}{\Uuparrow}
\newunicodechar{⤋}{\Ddownarrow}
\newunicodechar{⤌}{\leftbkarrow}
\newunicodechar{⤍}{\rightbkarrow}
\newunicodechar{⤎}{\leftdbkarrow}
\newunicodechar{⤏}{\dbkarow}
\newunicodechar{⤐}{\drbkarow}
\newunicodechar{⤑}{\rightdotarrow}
\newunicodechar{⤒}{\baruparrow}
\newunicodechar{⤓}{\downarrowbar}
\newunicodechar{⤔}{\nvrightarrowtail}
\newunicodechar{⤕}{\nVrightarrowtail}
\newunicodechar{⤖}{\twoheadrightarrowtail}
\newunicodechar{⤗}{\nvtwoheadrightarrowtail}
\newunicodechar{⤘}{\nVtwoheadrightarrowtail}
\newunicodechar{⤙}{\lefttail}
\newunicodechar{⤚}{\righttail}
\newunicodechar{⤛}{\leftdbltail}
\newunicodechar{⤜}{\rightdbltail}
\newunicodechar{⤝}{\diamondleftarrow}
\newunicodechar{⤞}{\rightarrowdiamond}
\newunicodechar{⤟}{\diamondleftarrowbar}
\newunicodechar{⤠}{\barrightarrowdiamond}
\newunicodechar{⤡}{\nwsearrow}
\newunicodechar{⤢}{\neswarrow}
\newunicodechar{⤣}{\hknwarrow}
\newunicodechar{⤤}{\hknearrow}
\newunicodechar{⤥}{\hksearow}
\newunicodechar{⤦}{\hkswarow}
\newunicodechar{⤧}{\tona}
\newunicodechar{⤨}{\toea}
\newunicodechar{⤩}{\tosa}
\newunicodechar{⤪}{\towa}
\newunicodechar{⤫}{\rdiagovfdiag}
\newunicodechar{⤬}{\fdiagovrdiag}
\newunicodechar{⤭}{\seovnearrow}
\newunicodechar{⤮}{\neovsearrow}
\newunicodechar{⤯}{\fdiagovnearrow}
\newunicodechar{⤰}{\rdiagovsearrow}
\newunicodechar{⤱}{\neovnwarrow}
\newunicodechar{⤲}{\nwovnearrow}
\newunicodechar{⤳}{\rightcurvedarrow}
\newunicodechar{⤴}{\uprightcurvearrow}
\newunicodechar{⤵}{\downrightcurvedarrow}
\newunicodechar{⤶}{\leftdowncurvedarrow}
\newunicodechar{⤷}{\rightdowncurvedarrow}
\newunicodechar{⤸}{\cwrightarcarrow}
\newunicodechar{⤹}{\acwleftarcarrow}
\newunicodechar{⤺}{\acwoverarcarrow}
\newunicodechar{⤻}{\acwunderarcarrow}
\newunicodechar{⤼}{\curvearrowrightminus}
\newunicodechar{⤽}{\curvearrowleftplus}
\newunicodechar{⤾}{\cwundercurvearrow}
\newunicodechar{⤿}{\ccwundercurvearrow}
\newunicodechar{⥀}{\acwcirclearrow}
\newunicodechar{⥁}{\cwcirclearrow}
\newunicodechar{⥂}{\rightarrowshortleftarrow}
\newunicodechar{⥃}{\leftarrowshortrightarrow}
\newunicodechar{⥄}{\shortrightarrowleftarrow}
\newunicodechar{⥅}{\rightarrowplus}
\newunicodechar{⥆}{\leftarrowplus}
\newunicodechar{⥇}{\rightarrowx}
\newunicodechar{⥈}{\leftrightarrowcircle}
\newunicodechar{⥉}{\twoheaduparrowcircle}
\newunicodechar{⥊}{\leftrightharpoonupdown}
\newunicodechar{⥋}{\leftrightharpoondownup}
\newunicodechar{⥌}{\updownharpoonrightleft}
\newunicodechar{⥍}{\updownharpoonleftright}
\newunicodechar{⥎}{\leftrightharpoonupup}
\newunicodechar{⥏}{\updownharpoonrightright}
\newunicodechar{⥐}{\leftrightharpoondowndown}
\newunicodechar{⥑}{\updownharpoonleftleft}
\newunicodechar{⥒}{\barleftharpoonup}
\newunicodechar{⥓}{\rightharpoonupbar}
\newunicodechar{⥔}{\barupharpoonright}
\newunicodechar{⥕}{\downharpoonrightbar}
\newunicodechar{⥖}{\barleftharpoondown}
\newunicodechar{⥗}{\rightharpoondownbar}
\newunicodechar{⥘}{\barupharpoonleft}
\newunicodechar{⥙}{\downharpoonleftbar}
\newunicodechar{⥚}{\leftharpoonupbar}
\newunicodechar{⥛}{\barrightharpoonup}
\newunicodechar{⥜}{\upharpoonrightbar}
\newunicodechar{⥝}{\bardownharpoonright}
\newunicodechar{⥞}{\leftharpoondownbar}
\newunicodechar{⥟}{\barrightharpoondown}
\newunicodechar{⥠}{\upharpoonleftbar}
\newunicodechar{⥡}{\bardownharpoonleft}
\newunicodechar{⥢}{\leftharpoonsupdown}
\newunicodechar{⥣}{\upharpoonsleftright}
\newunicodechar{⥤}{\rightharpoonsupdown}
\newunicodechar{⥥}{\downharpoonsleftright}
\newunicodechar{⥦}{\leftrightharpoonsup}
\newunicodechar{⥧}{\leftrightharpoonsdown}
\newunicodechar{⥨}{\rightleftharpoonsup}
\newunicodechar{⥩}{\rightleftharpoonsdown}
\newunicodechar{⥪}{\leftharpoonupdash}
\newunicodechar{⥫}{\dashleftharpoondown}
\newunicodechar{⥬}{\rightharpoonupdash}
\newunicodechar{⥭}{\dashrightharpoondown}
\newunicodechar{⥮}{\updownharpoonsleftright}
\newunicodechar{⥯}{\downupharpoonsleftright}
\newunicodechar{⥰}{\rightimply}
\newunicodechar{⥱}{\equalrightarrow}
\newunicodechar{⥲}{\similarrightarrow}
\newunicodechar{⥳}{\leftarrowsimilar}
\newunicodechar{⥴}{\rightarrowsimilar}
\newunicodechar{⥵}{\rightarrowapprox}
\newunicodechar{⥶}{\ltlarr}
\newunicodechar{⥷}{\leftarrowless}
\newunicodechar{⥸}{\gtrarr}
\newunicodechar{⥹}{\subrarr}
\newunicodechar{⥺}{\leftarrowsubset}
\newunicodechar{⥻}{\suplarr}
\newunicodechar{⥼}{\leftfishtail}
\newunicodechar{⥽}{\rightfishtail}
\newunicodechar{⥾}{\upfishtail}
\newunicodechar{⥿}{\downfishtail}
\newunicodechar{⦀}{\Vvert}
\newunicodechar{⦁}{\mdsmblkcircle}
\newunicodechar{⦂}{\typecolon}
\newunicodechar{⦃}{\lBrace}
\newunicodechar{⦄}{\rBrace}
\newunicodechar{⦅}{\lParen}
\newunicodechar{⦆}{\rParen}
\newunicodechar{⦇}{\llparenthesis}
\newunicodechar{⦈}{\rrparenthesis}
\newunicodechar{⦉}{\llangle}
\newunicodechar{⦊}{\rrangle}
\newunicodechar{⦋}{\lbrackubar}
\newunicodechar{⦌}{\rbrackubar}
\newunicodechar{⦍}{\lbrackultick}
\newunicodechar{⦎}{\rbracklrtick}
\newunicodechar{⦏}{\lbracklltick}
\newunicodechar{⦐}{\rbrackurtick}
\newunicodechar{⦑}{\langledot}
\newunicodechar{⦒}{\rangledot}
\newunicodechar{⦓}{\lparenless}
\newunicodechar{⦔}{\rparengtr}
\newunicodechar{⦕}{\Lparengtr}
\newunicodechar{⦖}{\Rparenless}
\newunicodechar{⦗}{\lblkbrbrak}
\newunicodechar{⦘}{\rblkbrbrak}
\newunicodechar{⦙}{\fourvdots}
\newunicodechar{⦚}{\vzigzag}
\newunicodechar{⦛}{\measuredangleleft}
\newunicodechar{⦜}{\rightanglesqr}
\newunicodechar{⦝}{\rightanglemdot}
\newunicodechar{⦞}{\angles}
\newunicodechar{⦟}{\angdnr}
\newunicodechar{⦠}{\gtlpar}
\newunicodechar{⦡}{\sphericalangleup}
\newunicodechar{⦢}{\turnangle}
\newunicodechar{⦣}{\revangle}
\newunicodechar{⦤}{\angleubar}
\newunicodechar{⦥}{\revangleubar}
\newunicodechar{⦦}{\wideangledown}
\newunicodechar{⦧}{\wideangleup}
\newunicodechar{⦨}{\measanglerutone}
\newunicodechar{⦩}{\measanglelutonw}
\newunicodechar{⦪}{\measanglerdtose}
\newunicodechar{⦫}{\measangleldtosw}
\newunicodechar{⦬}{\measangleurtone}
\newunicodechar{⦭}{\measangleultonw}
\newunicodechar{⦮}{\measangledrtose}
\newunicodechar{⦯}{\measangledltosw}
\newunicodechar{⦰}{\revemptyset}
\newunicodechar{⦱}{\emptysetobar}
\newunicodechar{⦲}{\emptysetocirc}
\newunicodechar{⦳}{\emptysetoarr}
\newunicodechar{⦴}{\emptysetoarrl}
\newunicodechar{⦵}{\circlehbar}
\newunicodechar{⦶}{\circledvert}
\newunicodechar{⦷}{\circledparallel}
\newunicodechar{⦸}{\obslash}
\newunicodechar{⦹}{\operp}
\newunicodechar{⦺}{\obot}
\newunicodechar{⦻}{\olcross}
\newunicodechar{⦼}{\odotslashdot}
\newunicodechar{⦽}{\uparrowoncircle}
\newunicodechar{⦾}{\circledwhitebullet}
\newunicodechar{⦿}{\circledbullet}
\newunicodechar{⧀}{\olessthan}
\newunicodechar{⧁}{\ogreaterthan}
\newunicodechar{⧂}{\cirscir}
\newunicodechar{⧃}{\cirE}
\newunicodechar{⧄}{\boxdiag}
\newunicodechar{⧅}{\boxbslash}
\newunicodechar{⧆}{\boxast}
\newunicodechar{⧇}{\boxcircle}
\newunicodechar{⧈}{\boxbox}
\newunicodechar{⧉}{\boxonbox}
\newunicodechar{⧊}{\triangleodot}
\newunicodechar{⧋}{\triangleubar}
\newunicodechar{⧌}{\triangles}
\newunicodechar{⧍}{\triangleserifs}
\newunicodechar{⧎}{\rtriltri}
\newunicodechar{⧏}{\ltrivb}
\newunicodechar{⧐}{\vbrtri}
\newunicodechar{⧑}{\lfbowtie}
\newunicodechar{⧒}{\rfbowtie}
\newunicodechar{⧓}{\fbowtie}
\newunicodechar{⧔}{\lftimes}
\newunicodechar{⧕}{\rftimes}
\newunicodechar{⧖}{\hourglass}
\newunicodechar{⧗}{\blackhourglass}
\newunicodechar{⧘}{\lvzigzag}
\newunicodechar{⧙}{\rvzigzag}
\newunicodechar{⧚}{\Lvzigzag}
\newunicodechar{⧛}{\Rvzigzag}
\newunicodechar{⧜}{\iinfin}
\newunicodechar{⧝}{\tieinfty}
\newunicodechar{⧞}{\nvinfty}
\newunicodechar{⧟}{\dualmap}
\newunicodechar{⧠}{\laplac}
\newunicodechar{⧡}{\lrtriangleeq}
\newunicodechar{⧢}{\shuffle}
\newunicodechar{⧣}{\eparsl}
\newunicodechar{⧤}{\smeparsl}
\newunicodechar{⧥}{\eqvparsl}
\newunicodechar{⧦}{\gleichstark}
\newunicodechar{⧧}{\thermod}
\newunicodechar{⧨}{\downtriangleleftblack}
\newunicodechar{⧩}{\downtrianglerightblack}
\newunicodechar{⧪}{\blackdiamonddownarrow}
\newunicodechar{⧫}{\mdlgblklozenge}
\newunicodechar{⧬}{\circledownarrow}
\newunicodechar{⧭}{\blackcircledownarrow}
\newunicodechar{⧮}{\errbarsquare}
\newunicodechar{⧯}{\errbarblacksquare}
\newunicodechar{⧰}{\errbardiamond}
\newunicodechar{⧱}{\errbarblackdiamond}
\newunicodechar{⧲}{\errbarcircle}
\newunicodechar{⧳}{\errbarblackcircle}
\newunicodechar{⧴}{\ruledelayed}
\newunicodechar{⧵}{\setminus}
\newunicodechar{⧶}{\dsol}
\newunicodechar{⧷}{\rsolbar}
\newunicodechar{⧸}{\xsol}
\newunicodechar{⧹}{\xbsol}
\newunicodechar{⧺}{\doubleplus}
\newunicodechar{⧻}{\tripleplus}
\newunicodechar{⧼}{\lcurvyangle}
\newunicodechar{⧽}{\rcurvyangle}
\newunicodechar{⧾}{\tplus}
\newunicodechar{⧿}{\tminus}
\newunicodechar{⨀}{\bigodot}
\newunicodechar{⨁}{\bigoplus}
\newunicodechar{⨂}{\bigotimes}
\newunicodechar{⨃}{\bigcupdot}
\newunicodechar{⨄}{\biguplus}
\newunicodechar{⨅}{\bigsqcap}
\newunicodechar{⨆}{\bigsqcup}
\newunicodechar{⨇}{\conjquant}
\newunicodechar{⨈}{\disjquant}
\newunicodechar{⨉}{\bigtimes}
\newunicodechar{⨊}{\modtwosum}
\newunicodechar{⨋}{\sumint}
\newunicodechar{⨌}{\iiiint}
\newunicodechar{⨍}{\intbar}
\newunicodechar{⨎}{\intBar}
\newunicodechar{⨏}{\fint}
\newunicodechar{⨐}{\cirfnint}
\newunicodechar{⨑}{\awint}
\newunicodechar{⨒}{\rppolint}
\newunicodechar{⨓}{\scpolint}
\newunicodechar{⨔}{\npolint}
\newunicodechar{⨕}{\pointint}
\newunicodechar{⨖}{\sqint}
\newunicodechar{⨗}{\intlarhk}
\newunicodechar{⨘}{\intx}
\newunicodechar{⨙}{\intcap}
\newunicodechar{⨚}{\intcup}
\newunicodechar{⨛}{\upint}
\newunicodechar{⨜}{\lowint}
\newunicodechar{⨝}{\Join}
\newunicodechar{⨞}{\bigtriangleleft}
\newunicodechar{⨟}{\zcmp}
\newunicodechar{⨠}{\zpipe}
\newunicodechar{⨡}{\zproject}
\newunicodechar{⨢}{\ringplus}
\newunicodechar{⨣}{\plushat}
\newunicodechar{⨤}{\simplus}
\newunicodechar{⨥}{\plusdot}
\newunicodechar{⨦}{\plussim}
\newunicodechar{⨧}{\plussubtwo}
\newunicodechar{⨨}{\plustrif}
\newunicodechar{⨪}{\minusdot}
\newunicodechar{⨫}{\minusfdots}
\newunicodechar{⨬}{\minusrdots}
\newunicodechar{⨭}{\opluslhrim}
\newunicodechar{⨮}{\oplusrhrim}
\newunicodechar{⨯}{\vectimes}
\newunicodechar{⨰}{\dottimes}
\newunicodechar{⨱}{\timesbar}
\newunicodechar{⨲}{\btimes}
\newunicodechar{⨳}{\smashtimes}
\newunicodechar{⨴}{\otimeslhrim}
\newunicodechar{⨵}{\otimesrhrim}
\newunicodechar{⨶}{\otimeshat}
\newunicodechar{⨷}{\Otimes}
\newunicodechar{⨸}{\odiv}
\newunicodechar{⨹}{\triangleplus}
\newunicodechar{⨺}{\triangleminus}
\newunicodechar{⨻}{\triangletimes}
\newunicodechar{⨼}{\intprod}
\newunicodechar{⨽}{\intprodr}
\newunicodechar{⨾}{\fcmp}
\newunicodechar{⨿}{\amalg}
\newunicodechar{⩀}{\capdot}
\newunicodechar{⩁}{\uminus}
\newunicodechar{⩂}{\barcup}
\newunicodechar{⩃}{\barcap}
\newunicodechar{⩄}{\capwedge}
\newunicodechar{⩅}{\cupvee}
\newunicodechar{⩆}{\cupovercap}
\newunicodechar{⩇}{\capovercup}
\newunicodechar{⩈}{\cupbarcap}
\newunicodechar{⩉}{\capbarcup}
\newunicodechar{⩊}{\twocups}
\newunicodechar{⩋}{\twocaps}
\newunicodechar{⩌}{\closedvarcup}
\newunicodechar{⩍}{\closedvarcap}
\newunicodechar{⩎}{\Sqcap}
\newunicodechar{⩏}{\Sqcup}
\newunicodechar{⩐}{\closedvarcupsmashprod}
\newunicodechar{⩑}{\wedgeodot}
\newunicodechar{⩒}{\veeodot}
\newunicodechar{⩓}{\Wedge}
\newunicodechar{⩔}{\Vee}
\newunicodechar{⩕}{\wedgeonwedge}
\newunicodechar{⩖}{\veeonvee}
\newunicodechar{⩗}{\bigslopedvee}
\newunicodechar{⩘}{\bigslopedwedge}
\newunicodechar{⩙}{\veeonwedge}
\newunicodechar{⩚}{\wedgemidvert}
\newunicodechar{⩛}{\veemidvert}
\newunicodechar{⩜}{\midbarwedge}
\newunicodechar{⩝}{\midbarvee}
\newunicodechar{⩞}{\doublebarwedge}
\newunicodechar{⩟}{\wedgebar}
\newunicodechar{⩠}{\wedgedoublebar}
\newunicodechar{⩡}{\varveebar}
\newunicodechar{⩢}{\doublebarvee}
\newunicodechar{⩣}{\veedoublebar}
\newunicodechar{⩤}{\dsub}
\newunicodechar{⩥}{\rsub}
\newunicodechar{⩦}{\eqdot}
\newunicodechar{⩧}{\dotequiv}
\newunicodechar{⩨}{\equivVert}
\newunicodechar{⩩}{\equivVvert}
\newunicodechar{⩪}{\dotsim}
\newunicodechar{⩫}{\simrdots}
\newunicodechar{⩬}{\simminussim}
\newunicodechar{⩭}{\congdot}
\newunicodechar{⩮}{\asteq}
\newunicodechar{⩯}{\hatapprox}
\newunicodechar{⩰}{\approxeqq}
\newunicodechar{⩱}{\eqqplus}
\newunicodechar{⩲}{\pluseqq}
\newunicodechar{⩳}{\eqqsim}
\newunicodechar{⩴}{\Coloneq}
\newunicodechar{⩵}{\eqeq}
\newunicodechar{⩶}{\eqeqeq}
\newunicodechar{⩷}{\ddotseq}
\newunicodechar{⩸}{\equivDD}
\newunicodechar{⩹}{\ltcir}
\newunicodechar{⩺}{\gtcir}
\newunicodechar{⩻}{\ltquest}
\newunicodechar{⩼}{\gtquest}
\newunicodechar{⩽}{\leqslant}
\newunicodechar{⩾}{\geqslant}
\newunicodechar{⩿}{\lesdot}
\newunicodechar{⪀}{\gesdot}
\newunicodechar{⪁}{\lesdoto}
\newunicodechar{⪂}{\gesdoto}
\newunicodechar{⪃}{\lesdotor}
\newunicodechar{⪄}{\gesdotol}
\newunicodechar{⪅}{\lessapprox}
\newunicodechar{⪆}{\gtrapprox}
\newunicodechar{⪇}{\lneq}
\newunicodechar{⪈}{\gneq}
\newunicodechar{⪉}{\lnapprox}
\newunicodechar{⪊}{\gnapprox}
\newunicodechar{⪋}{\lesseqqgtr}
\newunicodechar{⪌}{\gtreqqless}
\newunicodechar{⪍}{\lsime}
\newunicodechar{⪎}{\gsime}
\newunicodechar{⪏}{\lsimg}
\newunicodechar{⪐}{\gsiml}
\newunicodechar{⪑}{\lgE}
\newunicodechar{⪒}{\glE}
\newunicodechar{⪓}{\lesges}
\newunicodechar{⪔}{\gesles}
\newunicodechar{⪕}{\eqslantless}
\newunicodechar{⪖}{\eqslantgtr}
\newunicodechar{⪗}{\elsdot}
\newunicodechar{⪘}{\egsdot}
\newunicodechar{⪙}{\eqqless}
\newunicodechar{⪚}{\eqqgtr}
\newunicodechar{⪛}{\eqqslantless}
\newunicodechar{⪜}{\eqqslantgtr}
\newunicodechar{⪝}{\simless}
\newunicodechar{⪞}{\simgtr}
\newunicodechar{⪟}{\simlE}
\newunicodechar{⪠}{\simgE}
\newunicodechar{⪡}{\Lt}
\newunicodechar{⪢}{\Gt}
\newunicodechar{⪣}{\partialmeetcontraction}
\newunicodechar{⪤}{\glj}
\newunicodechar{⪥}{\gla}
\newunicodechar{⪦}{\ltcc}
\newunicodechar{⪧}{\gtcc}
\newunicodechar{⪨}{\lescc}
\newunicodechar{⪩}{\gescc}
\newunicodechar{⪪}{\smt}
\newunicodechar{⪫}{\lat}
\newunicodechar{⪬}{\smte}
\newunicodechar{⪭}{\late}
\newunicodechar{⪮}{\bumpeqq}
\newunicodechar{⪯}{\preceq}
\newunicodechar{⪰}{\succeq}
\newunicodechar{⪱}{\precneq}
\newunicodechar{⪲}{\succneq}
\newunicodechar{⪳}{\preceqq}
\newunicodechar{⪴}{\succeqq}
\newunicodechar{⪵}{\precneqq}
\newunicodechar{⪶}{\succneqq}
\newunicodechar{⪷}{\precapprox}
\newunicodechar{⪸}{\succapprox}
\newunicodechar{⪹}{\precnapprox}
\newunicodechar{⪺}{\succnapprox}
\newunicodechar{⪻}{\Prec}
\newunicodechar{⪼}{\Succ}
\newunicodechar{⪽}{\subsetdot}
\newunicodechar{⪾}{\supsetdot}
\newunicodechar{⪿}{\subsetplus}
\newunicodechar{⫀}{\supsetplus}
\newunicodechar{⫁}{\submult}
\newunicodechar{⫂}{\supmult}
\newunicodechar{⫃}{\subedot}
\newunicodechar{⫄}{\supedot}
\newunicodechar{⫅}{\subseteqq}
\newunicodechar{⫆}{\supseteqq}
\newunicodechar{⫇}{\subsim}
\newunicodechar{⫈}{\supsim}
\newunicodechar{⫉}{\subsetapprox}
\newunicodechar{⫊}{\supsetapprox}
\newunicodechar{⫋}{\subsetneqq}
\newunicodechar{⫌}{\supsetneqq}
\newunicodechar{⫍}{\lsqhook}
\newunicodechar{⫎}{\rsqhook}
\newunicodechar{⫏}{\csub}
\newunicodechar{⫐}{\csup}
\newunicodechar{⫑}{\csube}
\newunicodechar{⫒}{\csupe}
\newunicodechar{⫓}{\subsup}
\newunicodechar{⫔}{\supsub}
\newunicodechar{⫕}{\subsub}
\newunicodechar{⫖}{\supsup}
\newunicodechar{⫗}{\suphsub}
\newunicodechar{⫘}{\supdsub}
\newunicodechar{⫙}{\forkv}
\newunicodechar{⫚}{\topfork}
\newunicodechar{⫛}{\mlcp}
\newunicodechar{⫝}{\forksnot}
\newunicodechar{⫞}{\shortlefttack}
\newunicodechar{⫟}{\shortdowntack}
\newunicodechar{⫠}{\shortuptack}
\newunicodechar{⫡}{\perps}
\newunicodechar{⫢}{\vDdash}
\newunicodechar{⫣}{\dashV}
\newunicodechar{⫤}{\Dashv}
\newunicodechar{⫥}{\DashV}
\newunicodechar{⫦}{\varVdash}
\newunicodechar{⫧}{\Barv}
\newunicodechar{⫨}{\vBar}
\newunicodechar{⫩}{\vBarv}
\newunicodechar{⫪}{\barV}
\newunicodechar{⫫}{\Vbar}
\newunicodechar{⫬}{\Not}
\newunicodechar{⫭}{\bNot}
\newunicodechar{⫮}{\revnmid}
\newunicodechar{⫯}{\cirmid}
\newunicodechar{⫰}{\midcir}
\newunicodechar{⫱}{\topcir}
\newunicodechar{⫲}{\nhpar}
\newunicodechar{⫳}{\parsim}
\newunicodechar{⫴}{\interleave}
\newunicodechar{⫵}{\nhVvert}
\newunicodechar{⫶}{\threedotcolon}
\newunicodechar{⫷}{\lllnest}
\newunicodechar{⫸}{\gggnest}
\newunicodechar{⫹}{\leqqslant}
\newunicodechar{⫺}{\geqqslant}
\newunicodechar{⫻}{\trslash}
\newunicodechar{⫼}{\biginterleave}
\newunicodechar{⫽}{\sslash}
\newunicodechar{⫾}{\talloblong}
\newunicodechar{⫿}{\bigtalloblong}
\newunicodechar{⬒}{\squaretopblack}
\newunicodechar{⬓}{\squarebotblack}
\newunicodechar{⬔}{\squareurblack}
\newunicodechar{⬕}{\squarellblack}
\newunicodechar{⬖}{\diamondleftblack}
\newunicodechar{⬗}{\diamondrightblack}
\newunicodechar{⬘}{\diamondtopblack}
\newunicodechar{⬙}{\diamondbotblack}
\newunicodechar{⬚}{\dottedsquare}
\newunicodechar{⬛}{\lgblksquare}
\newunicodechar{⬜}{\lgwhtsquare}
\newunicodechar{⬝}{\vysmblksquare}
\newunicodechar{⬞}{\vysmwhtsquare}
\newunicodechar{⬟}{\pentagonblack}
\newunicodechar{⬠}{\pentagon}
\newunicodechar{⬡}{\varhexagon}
\newunicodechar{⬢}{\varhexagonblack}
\newunicodechar{⬣}{\hexagonblack}
\newunicodechar{⬤}{\lgblkcircle}
\newunicodechar{⬥}{\mdblkdiamond}
\newunicodechar{⬦}{\mdwhtdiamond}
\newunicodechar{⬧}{\mdblklozenge}
\newunicodechar{⬨}{\mdwhtlozenge}
\newunicodechar{⬩}{\smblkdiamond}
\newunicodechar{⬪}{\smblklozenge}
\newunicodechar{⬫}{\smwhtlozenge}
\newunicodechar{⬬}{\blkhorzoval}
\newunicodechar{⬭}{\whthorzoval}
\newunicodechar{⬮}{\blkvertoval}
\newunicodechar{⬯}{\whtvertoval}
\newunicodechar{⬰}{\circleonleftarrow}
\newunicodechar{⬱}{\leftthreearrows}
\newunicodechar{⬲}{\leftarrowonoplus}
\newunicodechar{⬳}{\longleftsquigarrow}
\newunicodechar{⬴}{\nvtwoheadleftarrow}
\newunicodechar{⬵}{\nVtwoheadleftarrow}
\newunicodechar{⬶}{\twoheadmapsfrom}
\newunicodechar{⬷}{\twoheadleftdbkarrow}
\newunicodechar{⬸}{\leftdotarrow}
\newunicodechar{⬹}{\nvleftarrowtail}
\newunicodechar{⬺}{\nVleftarrowtail}
\newunicodechar{⬻}{\twoheadleftarrowtail}
\newunicodechar{⬼}{\nvtwoheadleftarrowtail}
\newunicodechar{⬽}{\nVtwoheadleftarrowtail}
\newunicodechar{⬾}{\leftarrowx}
\newunicodechar{⬿}{\leftcurvedarrow}
\newunicodechar{⭀}{\equalleftarrow}
\newunicodechar{⭁}{\bsimilarleftarrow}
\newunicodechar{⭂}{\leftarrowbackapprox}
\newunicodechar{⭃}{\rightarrowgtr}
\newunicodechar{⭄}{\rightarrowsupset}
\newunicodechar{⭅}{\LLeftarrow}
\newunicodechar{⭆}{\RRightarrow}
\newunicodechar{⭇}{\bsimilarrightarrow}
\newunicodechar{⭈}{\rightarrowbackapprox}
\newunicodechar{⭉}{\similarleftarrow}
\newunicodechar{⭊}{\leftarrowapprox}
\newunicodechar{⭋}{\leftarrowbsimilar}
\newunicodechar{⭌}{\rightarrowbsimilar}
\newunicodechar{⭐}{\medwhitestar}
\newunicodechar{⭑}{\medblackstar}
\newunicodechar{⭒}{\smwhitestar}
\newunicodechar{⭓}{\rightpentagonblack}
\newunicodechar{⭔}{\rightpentagon}
\newunicodechar{〒}{\postalmark}
\newunicodechar{〔}{\lbrbrak}
\newunicodechar{〕}{\rbrbrak}
\newunicodechar{〘}{\Lbrbrak}
\newunicodechar{〙}{\Rbrbrak}
\newunicodechar{〰}{\hzigzag}
\clearpage{}%

\usepackage{pdftexcmds} %
\usepackage{fnpct}
\setfnpct{add-punct-marks=!?:;} %

\usepackage{ellipsis}

\usepackage{fancyvrb}

\usepackage[normalem]{ulem} %
\usepackage{pifont} %
\usepackage{xcolor}
\usepackage{fancyhdr}
\usepackage{xspace}

\usepackage{graphicx} %

\usepackage{afterpage}
\usepackage{wrapfig}

\usepackage[shortlabels]{enumitem}
\usepackage{alphalph} %

\usepackage{booktabs} %

\usepackage{array}
\newcolumntype{L}[1]{>{\raggedright\let\newline\\\arraybackslash\hspace{0pt}}m{#1}}
\newcolumntype{C}[1]{>{\centering\let\newline\\\arraybackslash\hspace{0pt}}m{#1}}
\newcolumntype{R}[1]{>{\raggedleft\let\newline\\\arraybackslash\hspace{0pt}}m{#1}}

\usepackage{tabularx} %

\usepackage{environ}

\NewEnviron{tabx}[1]{\renewcommand{\arraystretch}{1.2} %
\tabularx{\textwidth}{@{}#1@{}}
\toprule
\BODY
\bottomrule
}[\endtabularx]

\usepackage{multirow}

\ifdefined\final
 \newcommand{\authorsnote}[2]{}
\else
 \newcommand*{\authorsnote}[2]{\textcolor{#1}{[#2]}}
\fi

\usepackage{amsmath}
\usepackage{amsfonts}
\usepackage{amssymb}
\usepackage{mathtools} %

\usepackage{amsthm}

\theoremstyle{definition}
\newtheorem{thm}     {Theorem}
\newtheorem{lem}     {Lemma}
\newtheorem{prp}     {Proposition}

\newtheorem{dfn}     {Definition}

\newtheoremstyle{taskstyle}%
  {.32\baselineskip±.15\baselineskip}%
  {.32\baselineskip±.15\baselineskip}%
  {\it}%
  {}%
  {\bf}%
  {. }%
  { }%
  {}%
\theoremstyle{taskstyle}

\newcounter{subtasknumber}
\newcounter{subtasklabel}
\renewcommand{\thesubtasklabel}{\textbf{\thetasknumber}}

\newcommand{\mc}{\mathcal}
\newcommand{\msf}{\mathsf}

\newcommand{\given}{\mathrel{|}}
\DeclareMathOperator{\Cr}{Cr}
\DeclareMathOperator{\Fr}{Fr}

\newcommand{\doo}{\mathrm{do}}

\usepackage[T1]{fontenc}

\usepackage{stmaryrd}
\newcommand*{\llb}{\llbracket}
\newcommand*{\rrb}{\rrbracket}

\newcommand{\Expect}{{\rm I\kern-.3em E}}

\newcommand{\bigsum}[1]{
  \mathop{\oalign{$\displaystyle\sum$\hfill\cr
    $\begin{subarray}{l}\mathrlap{#1}\end{subarray}$\hfill\cr}\:}
}%

\newcommand{\smldots}{\text{\scriptsize{...}}}
\usepackage{tikz}
\usetikzlibrary{cd}

\renewcommand{\Fr}{\textstyle\Pr}
\newcommand{\changes}[1]{\underline{#1}}
\newcommand{\bad}{\relax{}}
\newcommand{\groupheader}{\relax}

\DeclareMathOperator{\rp}{RP}
\newcommand*{\mrp}{\bar{RP}}

\newtheorem{chr}[dfn]{Definition}
\newtheorem{view}[dfn]{Definition}
\newtheorem{prpchr}[prp]{Proposition}
\newtheorem{prpview}[prp]{Proposition}

\newcommand{\olangle}{}
\newcommand{\orangle}{}

 \newcommand{\egcite}[1]{ (e.g.,~\cite{#1})}

\newcommand*{\aff}[1]{{\normalsize {#1}}}

\title{Differential Privacy as a Causal Property}
\author{Michael Carl Tschantz, \aff{International Computer Science Institute}\\ %
Shayak Sen, \aff{Carnegie Mellon University} %
\\
Anupam Datta, \aff{Carnegie Mellon University} %
}

\begin{document}

\maketitle

\begin{abstract}
We present associative and causal views of differential privacy.
Under the associative view, the possibility of dependencies between
data points precludes a simple statement of differential privacy's
guarantee as conditioning upon a single changed data point.
However, we show that a simple characterization of differential privacy as
limiting the effect of a single data point does exist under the
causal view, without independence assumptions about data points.
We believe this characterization resolves disagreement and confusion
in prior work about the consequences of differential privacy.
The associative view needing assumptions boils down to the contrapositive of the maxim that correlation
doesn’t imply causation: differential privacy ensuring a lack
of (strong) causation does not imply a lack of (strong) association.
Our characterization also
opens up the possibility of applying results from statistics,
experimental design, and science about causation while studying
differential privacy.
\end{abstract}

\setlength{\parindent}{0.8pc} %

\section{Introduction}
\label{sec:intro}

Differential Privacy\ndss{ (\DP{})} is a precise mathematical property of an algorithm
requiring that it produce almost identical distributions of outputs
for any pair of possible input databases that differ in a single data
point.
Despite the popularity of \DP{} in the research
community, unease with the concept remains.
For example, Cuff and Yu's paper states ``an intuitive understanding can be elusive''
and \tufte{expresses a preference}\ndss{recommends} that \DP{} be
related to more familiar concepts based on statistical associations, such as
mutual information~\cite[p.\,2]{cuff16ccs}.
This and numerous other works exploring similar connections between \DP{} and
statistical association each makes assumptions about the data points
(e.g.,
\cite[p.\,9]{alivim11icalp} %
\cite[p.\,32]{clarkson15mathstructcompsci},
\cite[p.\,4]{barthe11csf},
\cite[p.\,14]{mcgregor11eccc}, %
\cite[p.\,6]{ghosh17arxiv}).

The use of such assumptions
has led to
some papers stating that \DP{}
implicitly requires some assumption:
that it requires the data points to be independent
(e.g.,
\cite[p.\,2]{kifer11sigmod},
\cite[p.\,1]{kifer12pods},
\cite[p.\,2]{li13ccs},
\cite[p.\,3]{he14sigmod}, %
\cite[p.\,7]{chen14vldbj},
\cite[p.\,232]{zhu15tifs},
\cite[p.\,1]{liu16ndss}),
that the adversary must know all but one data point, the
so-called \emph{strong adversary assumption} (e.g.,~\cite[p.\,2]{cuff16ccs}, \cite[p.\,10]{li13ccs}),
or that either assumption will do (e.g.,~\cite[\S1.2]{yang15sigmod}).
(Appendix~\ref{app:history-camp2} provides quotations.)
Conversely,
other works assert that no such assumption exists\egcite{bassily13focs,kasiviswanathan14jpc,mcsherry16blog1,mcsherry16blog2}.
How can such disagreements arise about a precise mathematical property of an algorithm?

We put to rest both the nagging feeling that
\DP{} should be expressed in more basic terms and the
disagreement about whether it makes various implicit assumptions.
We do so by showing that \DP{} is better
understood as a causal property than as an associative one.
We show that \DP{} constrains effect sizes, a basic concept
from empirical science about how much changing one variable changes another.
This view does not require any independence or adversary assumptions.

Furthermore, %
we show that the difference between the two views over whether \DP{} makes assumptions is precisely captured as the difference between association and causation.
That some fail to get what they want out of \DP{} (without making an assumption) comes from the contrapositive of the maxim \emph{correlation doesn't imply causation}: \DP{} ensuring a lack of (strong) causation does not imply a lack of (strong) association.
Given the common confusion of association and causation, and that \DP{} does not make its causal nature explicit in its mathematical statement,
we believe our work explains how disagreement could have arose in the research literature about the what assumptions \DP{} requires.

\subsection{Motivating Example and Intuition}
\label{sec:mot-example}

To provide more details,
let us consider an example of using \DP{}
inspired by Kifer and Machanavajjhala~\cite{kifer11sigmod}.
Suppose Ada and her son Byron are considering participating in a
differentially private survey with $n-2$ other people.
The survey collects a data point from each participant about their
health status with respect to a genetic disease.
Since Ada and Byron are closely related, their data points are
closely related.
This makes them wonder whether the promise of \DP{}
becomes watered down for them, a worrying prospect given
the sensitivity of their health statuses.

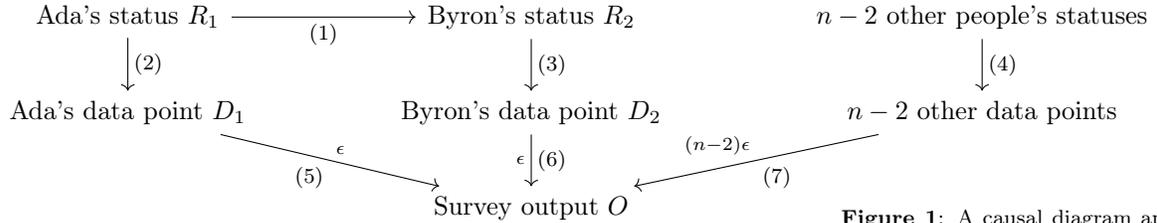
\begin{figure*}[t]
\centering
\begin{tikzcd}[math mode=false, row sep=2em, column sep=5em]
  Ada's status $R_1$ \arrow[r, "{\footnotesize (1)}"']\arrow[d, "{\footnotesize (2)}"] & Byron's status $R_2$ \arrow[d, "{\footnotesize (3)}"] & $n-2$ other people's statuses\arrow[d, "{\footnotesize (4)}"]\\
  Ada's data point $D_1$ \arrow[dr,"{\footnotesize (5)}"',"$\epsilon$"]
  &  Byron's data point $D_2$ \arrow[d, "$\epsilon$"', "{\footnotesize (6)}"]   &   $n-2$ other data points \arrow[dl,"${\footnotesize (n-2)}\epsilon$"',"{\footnotesize (7)}"]\\
  & Survey output $O$ %
\end{tikzcd}
\caption{A causal diagram approximating the process through which the output of a statistical query is generated and used. The arrows represent direct causal effects.
Indirect cause effects can be inferred from taking the transitive closure of the arrows.
$\epsilon$ labels causal effects bounded by $\epsilon$-differential privacy.
(1)--(7) serve as labels naming arrows.}
\label{fig:causal-privacy}
\end{figure*}

Figure~\ref{fig:causal-privacy} summarizes what would happen if both Ada and Byron participate in the survey.
In it,
each solid arrow
represents a causal relationship where the quantity at the start of the arrow causally affects
the quantity at the end of the arrow.
For example, Arrow~(1) represents that Ada's genetics has a causal
effect on her son Byron's genetics.
We use an arrow since causation is directional: Byron's genetics does not have a causal effect on Ada's.
Arrow~(2) represents a mechanism by which
Ada provides her status to the survey.  This information becomes a data point in the survey's data set, that is,
a row in a database.
This database comprises Ada's data point, Byron's data point,
and $n-2$ other people's data points.
Arrows~(5), (6), and (7) together represent the algorithm that
computes the survey's result, that is, the output produced from the database using a differentially private algorithm.

As mentioned, Ada's status also affects the status of her son Byron, shown with Arrow~(1).
Therefore, their statuses are statistically associated (i.e., not probabilistically independent).
While causation is directional, such associations are not:
seeing Byron's status reveals information about Ada's
status despite not causing Ada's status.
Furthermore,
Ada's and Byron's data points will be
statistically associated
because they have a common cause, Ada's status.
Thus, seeing Byron's data point reveals information about Ada's
status and data point\tufte{, and seeing Ada's data point reveals
information about Byron's status and data point}.
Since both Ada's and Byron's data points reveal information about Ada's status,
the output can be informed by two data points about Ada's status.
This double dose of information is what gives Ada pause about participating.
Furthermore, much the same applies to Byron.

\tufte{Now, let us consider what \DP{} provides in this example.}
In the words of Kasiviswanathan and Smith~\cite[p.\,2]{kasiviswanathan14jpc}, \DP{} intuitively
ensures that
\begin{flushright}
\begin{tabular}{rr@{}}
  \begin{minipage}{0.85\columnwidth}
changing a single individual’s data in the database leads
to a small change in the distribution on outputs.
\end{minipage}
& $(*)$\\
\end{tabular}
\end{flushright}
This intuitive consequence of \DP{}, denoted as ``$(*)$'',
does not make explicit the notion of \emph{change} intended.
It implicitly compares the distribution over the output, a random variable $O$,
in two hypothetical worlds, the pre- and post-change worlds.
If we focus on the individual Ada and let $D_1$ be a random variable representing her data point as it changes
values from $d_1$ to $d'_1$, then the comparison is between
$\Fr[O{=}o\ \text{when}\ D_1{=}d_1]$
and $\Fr[O{=}o\ \text{when}\ D_1{=}d_1']$.
The part of this characterization of \DP{} that is informal is the
notion of \emph{when}, which leaves the notion of \emph{change} imprecise.
Our paper
contrasts various interpretations of \emph{change} and \emph{when}.

The most obvious interpretation is that of \emph{conditioning} upon two different values for the changed status.
This interpretation implies an approximation of statistical independence
between an individual's data point and the output:
$\Fr[O{=}o \given D_1{=}d_1] \approx \Fr[O{=}o \given D_1{=}d'_1]$.
Presuming the data points are truthful, such an approximate independence implies (up to a factor) an approximate independence that compares probabilities over a status with or without knowing the output, that is,
$\Fr[R_1{=}r_1\given O{=}o] \approx \Fr[R_1{=}r_1]$.
In this case,
observing the output reveals little about an individual's status,
explaining this interpretation's appeal.

However, as discussed above,
both Ada's and Byron's data points reveal information about each of their statuses
since associations depend upon the full breadth of causal relations.
This double dose of information about their statuses means that \DP{} does not actually imply this appealing form of approximate independence.
Thus, attempts to interpret \DP{} in terms of
conditioning fail to hold in the presence of the associations between
the data points.
Those desiring an associative guarantee from \DP{} must
rule out such double doses of information, for example, by assuming that the data points
lack any associations or that the adversary already knows all but
one data point, making such associations uninformative.

Now, let us instead consider interpreting \DP{} in
terms of \emph{causal interventions}.
This interpretation models artificially altering the value of random variables, as in a randomized experiment.
The key difference between intervening upon a random variable and
conditioning upon it is that while intervening tracks causal effects
by accounting for how the intervention may cause other variables to
change, it does not depend upon all the associations in the database
since such interventions break them.
Thus, while associative definitions using conditioning depends upon the distribution producing data points, causal ones can screen off this distribution to examine the behavior of just the \DP{} algorithm itself by intervening upon all its outputs.

For example, suppose Byron is born without the genetic disease and a scientist flips a coin and ensures that Byron has the disease if it comes up heads and ensures that he does not if it comes up tails.
(While the technology to execute this experiment is currently wanting, it is conceptually possible.)
Since Bryon starts without the disease, the tails outcome does nothing and can be viewed a control treatment while the heads outcome causes a change.
If it comes up heads, the scientist could measure various things about Byron to see what changed from giving him the disease.
In particular, Byron's data point and the output computed from it would change.
On the other hand, nothing would change about Ada since causation is directional.
(Section~\ref{sec:cause} makes this more precise.)
In fact, after the randomization, Bryon's status and data point no longer reveals any information about Ada's status since the randomization broke the association between their statuses.

The scientist can measure the size of any changes to compute an \emph{effect size}.
The effect size for Byron's data point would be large since the data point is supposed to be equal to the status, but the effect size for the output will be small since it is computed by an algorithm with $\epsilon$-differential privacy.
If we instead consider intervening on Ada's status, we find two paths to the output:
one via Ada's data point (Arrows~(2) and (5)) and another via Byron's (Arrows~(1), (3), and (6)).
These two paths mean that the effect size could be as much as double that of changing Byron's status.
Thus, \DP{} cannot be interpreted as limiting the effect of changing Ada's status to just $\epsilon$ in size.

Recall that the intuitive characterization $(*)$ of \DP{} referred to data points, not statuses:
``changing a single individual’s data in the database\ldots''%
~\cite[p.\,2]{kasiviswanathan14jpc}.
So, let us consider intervening upon the data points instead.
Each data point is piped directly into the differentially private algorithm and has no other effects.
Thus, \DP{} does bound the effect size at $\epsilon$ for Ada's data point without making any assumptions about the\tufte{ relationships between} statuses.
For this reason, we believe
\DP{} is better understood as a bound on effect sizes
than as a bound on associations.

We believe that ease of conflating associative and causal properties explains the disagreement in the research literature.
(See Appendix~\ref{app:history} for a history of this disagreement.)
Our observation also reduces the benefits and drawbacks of these implicitly associative and causal views of privacy to those known from studying association and causation in general.
For example, the causal view only requires looking at the system itself (causation is an inherent property of systems)
while the associative view requires looking at the input distribution as well.
This difference explains why papers implicitly with the associative view discuss the distribution over data points despite the definition of \DP{} and the implicitly causal papers do not mention it.

The causal characterization also requires us to distinguish between an
individual's attributes ($R_i$s) and the data that is input to an algorithm
($D_i$s), and intervenes on the latter. Under the assumption that individuals
report their true statuses, the associative interpretation does not require this distinction
since conditioning on one is identical to conditioning on the other. This
distinction captures an aspect of the difference between protecting ``secrets about
you'' ($R_i$) and protecting ``secrets from you'' ($D_i$) pointed out by
McSherry~\cite{mcsherry16blog1,mcsherry16blog2}, where \DP{}
protects the latter in a causal sense.
An individual's attribute $R_i$ is \emph{about} him and its
value is often outside of his control.
On the other hand, an individual's data point $D_i$, at least in the
setting typically envisioned for \DP{}, is under his
control and is volunteered by the individual, making it \emph{from} him.

\subsection{Overview}
\label{sec:overview}

Our main goal is to demonstrate that \DP{} can be understood as a causal property without needing the sorts of assumptions made to view it as an associative property.
We lay out the associative view by surveying definitions presented in prior work to show its awkward fit for \DP{} and how it leads to suggestions that \DP{} makes assumptions.
We then turn to the causal view, replacing conditioning with interventions in the associative
definitions. Doing so reveals three key insights; we find that the causal definitions (1) work without such assumptions, (2) provides a tight characterization of \DP{}, and (3) explains how \DP{} maps to a concept found throughout statistics and science, namely to a measure of effect sizes.

We start our analysis with the associative view, which uses conditioning (Section~\ref{sec:dp-associative}).
We first consider conditioning upon all the data points instead of just the changed one.
After dealing with some annoyances involving the inability to condition on zero-probability data points, we get a precise characterization of \DP{} (Definition~\ref{view:dp-utc}).
However, this associative definition does not correspond well to the intuitive characterization $(*)$ of differential privacy's key consequences:
whereas the above-quoted characterization refers to just the changed data point,
this associative definition refers to them all, thereby blurring the characterization's focus on change.

Next, we modify the associative definition to condition upon just the single changed data point (Definition~\ref{view:dp-ucb}). %
The resulting definition prohibits more than an $\epsilon$ degree of correlation between the data point and the output, hereby limiting what can be learned about the data point.
While this definition is not implied by \DP{} on its own, %
it is implied with an additional assumption of independence between data points (Definition~\ref{view:dp-ica}).
We believe that this explains the claim found in some papers that \DP{} implicitly assumes independence.

However, we do not share this feeling since the independence assumption is not required to get \DP{} to imply the intuitive consequence $(*)$ quoted above when interpreting \emph{change} as a causal intervention instead of as associative conditioning.
After reviewing the core concepts of causal modeling
(Section~\ref{sec:cause}), we consider intervening upon all the
data points (Section~\ref{sec:dp-causal-whole}).
As with conditioning upon all the data points, a definition intervening on all the data points (Definition~\ref{dfn:dp-tc})
characterizes \DP{} (Proposition~\ref{prp:dp-tc}) but
without the intuitive focus on a single data point that we desire.

We then consider characterizing \DP{} as intervening upon a single point (Definition~\ref{chr:dp-spc} of Section~\ref{sec:dp-causal-single}).
A benefit of this causal characterization is that it is implied by \DP{} without any assumptions about independence (Proposition~\ref{prpchr:dp-spc}).
An additional benefit is that, unlike the associative characterizations, we do not need side conditions limiting the characterization to data points with non-zero probabilities.
This benefit follows from causal interventions being defined for zero-probability events unlike conditioning upon them.
These two benefits lead us to believe that \DP{} is better viewed as a causal property than as an associative one.

In addition to considering the consequences of \DP{} through the lenses of association and causation, we also consider how these two approaches can provide definitions equivalent to \DP{}.
Table~\ref{tbl:overview-eqv} shows our key results about definitions that are either equivalent to \DP{} or might be mistaken as such, which, in the sections below, we weave in with our aforementioned results about characterizations of the consequences of \DP{}.

When intervening upon all data points, we get equivalence for free from Definition~\ref{dfn:dp-tc} that we already explored as a characterization of the consequences of \DP{}.
This free equivalence does not occur for conditioning upon all data points since the side condition ruling out zero-probability data points means those data points are not constrained.
Since \DP{} is a restriction on all data points, to get an equivalence, the definition must check all data points.
To achieve this, we further require that the definition hold on all distributions over the data points, not just the naturally occurring distribution.
(Alternatively, we could require the definition to hold for any one distribution with non-zero probabilities for all data points, such as the uniform distribution.)
We also make similar alterations to the definitions looking at a single data point.

\begin{table*}[t]
\tufte{\let\Caption=\caption\renewcommand{\caption}[1]{\Caption[][2ex]{#1}}}
\centering\tufte{\small}
\caption{%
Differential Privacy and Variations upon It.
\normalfont
The left-most column gives the number of its definition later in the text.
The point of comparison is the quantity computed for every pair of values $d_i$ and $d'_i$ for $\changes{d}_i$
to check whether the point of comparison's values for $d_i$ and for $d'_i$ are within a factor of $e^\epsilon$ of one another.
The check is for all values of the index $i$. %
Some of the definitions only perform the comparison when the probability of the changed data point $D_i$ having the value $d_i$ (and $d'_i$, the changed value) is non-zero under $\mc{P}$.
Others only perform the comparison when all the data points $D$ having the values $d$ (and $d'$ for changed value of $D_i$) has non-zero probability.
$\doo$ denotes a causal intervention instead of standard conditioning~\cite{pearl09book}.
The definitions vary in whether they require performing these comparisons for just the actual probability distribution over data points $\mc{P}$ or over all such distributions.
In one case (Definition~\ref{view:dp-ica}), the comparison just applies to distributions where the data points are independent of one another.}
\label{tbl:overview-eqv}
\renewcommand{\arraystretch}{1.5} %
\tufte{\footnotesize}
\begin{tabular}{@{}clllc@{\,}l@{}}
\toprule
Num. & $\mc{P}$ & Conditions on population distribution $\mc{P}$ & Point of comparison (should be stable as $\changes{d}_i$ changes) & \multicolumn{2}{l}{Relation}\\ %
\midrule
\multicolumn{6}{@{}l@{}}{\groupheader{Original Differential Privacy}}\\
\ref{dfn:dp}     & n/a               & & $\Fr_{\mc{A}}[\mc{A}(\langle d_1,\ldots,\changes{d_i},\ldots,d_n\rangle){=}o]$ & is & DP\\
\midrule
\multicolumn{6}{@{}l@{}}{\groupheader{Associative Variants}}\\
\ref{view:dp-utc} & $\forall$      & $\Fr_{\mc{P}}[D_1{=}d_1,\ldots,D_i{=}\changes{d_i},\ldots,D_n{=}d_n] > 0$ & $\Fr_{\mc{P},\mc{A}}[O{=}o \given D_1{=}d_1,\ldots,D_i{=}\changes{d_i},\ldots,D_n{=}d_n]$ & $\leftrightarrow$ & DP\\
\ref{view:dp-ucb} & $\forall$      & $\Fr_{\mc{P}}[D_i{=}d_i] > 0$ & $\Fr_{\mc{P},\mc{A}}[O{=}o \given D_i{=}\changes{d_i}]$ & \bad{$\rightarrow$} & \bad{DP}\\
\ref{view:dp-ica} & $\forall$ indep.\@ $D_i$ & $\Fr_{\mc{P}}[D_i{=}d_i] > 0$ & $\Fr_{\mc{P},\mc{A}}[O{=}o \given D_i{=}\changes{d_i}]$ & $\leftrightarrow$ & DP\\
\midrule
\multicolumn{6}{@{}l@{}}{\groupheader{Causal Variants}}\\
\ref{view:dp-tc-univ} & $\forall$ & & $\Fr_{\mc{P},\mc{A}}[O{=}o \given \doo(D_1{=}d_1,\ldots,D_i{=}\changes{d_i},\ldots,D_n{=}d_n)]$ & $\leftrightarrow$ & DP\\
\ref{dfn:dp-tc}  & given               & & $\Fr_{\mc{P},\mc{A}}[O{=}o \given \doo(D_1{=}d_1,\ldots,D_i{=}\changes{d_i},\ldots,D_n{=}d_n)]$ & $\leftrightarrow$ & DP\\
\ref{chr:dp-spc}& given     & & $\Fr_{\mc{P},\mc{A}}[O{=}o \given \doo(D_i{=}\changes{d_i})]$ & $\leftarrow$ & DP\\
\ref{view:dp-spca}& $\forall$      & & $\Fr_{\mc{P},\mc{A}}[O{=}o \given \doo(D_i{=}\changes{d_i})]$ & $\leftrightarrow$ & DP\\
\bottomrule
\end{tabular}
\end{table*}

Having shown that \DP{} can be viewed as a causal property,
we then consider how this view can inform our understanding of it.
We relate \DP{} to a previously studied
notion of effect size and discuss how this more general notion
can make discussions about privacy more clear (Section~\ref{sec:brp}).
In particular, \DP{} is a bound on the measure
of effect size called \emph{relative probabilities} (also known as
\emph{relative risk} and \emph{risk ratio}).
That is, \DP{} bounds the relative probabilities for
the effects of each data point upon the output.
Since not all research papers are in agreement
about what counts as an individual's data point,
spelling out exactly which random variables have bounded relative probabilities may be more clear than simply asserting that
\DP{} holds for some implicit notion of data point.

We then consider in more detail the relationship between our work and
that of Kasiviswanathan and Smith~\cite{kasiviswanathan14jpc}
(Section~\ref{sec:knowledge}).
In short, Kasiviswanathan and Smith provide a Bayesian interpretation
of \DP{} whereas we provide a complementary causal one.

As we elaborate in the conclusion (Section~\ref{sec:conc}), these
results open up the possibility of using all the methods developed for
working with causation to work with \DP{}.
Furthermore, it explains why researchers have found uses for
\DP{} out side of privacy\egcite{dwork12itcs,dwork15stoc,dwork15nips,dwork15science,lecuyer18arxiv}:
they are really trying to limit effect sizes.

\section{Prior Work}
\label{sec:prior}

The paper coining the term ``differential privacy'' recognized that causation is key to understanding \DP{}: ``it will not be the presence of her data that causes [the disclosure of sensitive information]''~\cite[p.\,8]{dwork06icalp}.  Despite this causal view being present in the understanding of \DP{} from the beginning, we believe we are first to make it mathematically explicit and precise, and to compare it explicitly with the associative view.

Tschantz~et~al.~\cite{tschantz15csf} reduces probabilistic noninterference (a notion of having no flow of information) to having no casual effect at all.
We observe that \DP{} with $\epsilon = 0$ is identical to noninterference, implying that the $\epsilon = 0$ case of \DP{} could be reduced to causal effects.
Our work generalizes from non-interference to \DP{} and thereby differs in having additional bookkeeping to track the size of the effect for handling the $\epsilon > 0$ case, where an effect may be present but must be bounded. Importantly, this generalization allows us to compare the causal and associative views of \DP{}, not a focus of ~\cite{tschantz15csf}.

Our work is largely motivated by wanting to explain
the difference between two lines of research papers that have emerged from \DP{}.
The first line, associated with the inventors of \DP{}, emphasizes differential privacy's ability to ensure that data providers are no worse off for providing data\egcite{dwork06icalp,kasiviswanathan14jpc,mcsherry16blog1,mcsherry16blog2}.
The second line, which formed in response to limitations in differential privacy's guarantee, emphasizes that an adversary should not be able to learn anything sensitive about the data providers from the system's outputs\egcite{kifer11sigmod,kifer12pods,li13ccs,kifer14database,he14sigmod,chen14vldbj,zhu15tifs,liu16ndss}.
The second line notes that \DP{} fails to provide this guarantee when the data points from different data providers are associated with one another unless one assumes that the adversary knows all but one data point.
McSherry provides an informal description of the differences between the two lines~\cite{mcsherry16blog1}. %
While not necessary for understanding our technical development,
Appendix~\ref{app:history} provides a history of the two views of \DP{}.

Kasiviswanathan and Smith look at a different way of comparing the two views of \DP{}, which they call \emph{Semantic Privacy}~\cite{kasiviswanathan14jpc}.
They study the Bayesian probabilities that an adversary seeing the system's outputs would assign to a sensitive property.
Whereas other works looking at an adversary's beliefs, such as Pufferfish~\cite{kifer14database}, bounds the change in the adversary's probabilities before and after seeing the output, Kasiviswanathan and Smith bound the change between
adversary's probabilities after seeing the output for two difference inputs, much as \DP{} compares output distributions for two different inputs.
They conclude that this posterior-to-posterior comparison captures the epistemic consequences of \DP{}, unlike the anterior-to-posterior comparison made by Pufferfish-like definitions, since \DP{} bounds it without additional assumptions, such as independent data points.
Our causal definitions (Def.\,\ref{view:dp-tc-univ}--\ref{view:dp-spca})
instead expose differential privacy's causal nature with a
modification of Pearl's causal framework as a frequentist effect size
and we do not use any Bayesian probabilities in our causal definitions.
We view their Bayesian non-causal characterization of \DP{} as complimentary to our
frequentist causal characterization, with theirs focused on an adversary's knowledge and ours on physical constraints.
(We conjecture that a Bayesian causal characterization should be possible, but
leave that to future work.)
Besides the conceptual difference, our characterization is tighter in that we show an exact equivalence between our central definition (Def.\,\ref{view:dp-spca}) and \DP{} in that each implies the other with the same value of $\epsilon$, whereas their implications hold for an increased value of $\epsilon$.
Section~\ref{sec:knowledge} considers their work in more detail.

Others have explored
how assumptions about the data or adversary enables
alternative reductions of \DP{} to information flow
properties.
Clarkson and Schneider prove an equivalence between \DP{} and an information-theoretic notion of information suppression while making the strong adversary assumption~\cite[p.\,32]{clarkson15mathstructcompsci}. %
After making the strong adversary assumption, Cuff and Yu have argued that \DP{} can be viewed a constraint on mutual
information~\cite[p.\,2]{cuff16ccs}, but McSherry points out that the
connection is rather weak~\cite{mcsherry17blog1}.
Alvim~et~al.\@ bound the min-entropy and mutual
information in terms of $\epsilon$ under assumptions about the data's
distribution~\cite[p.\,9]{alivim11icalp}.
Ghosh and Kleinberg provide inferential privacy bounds
for \DP{} mechanisms under assumptions about restricted background knowledge~\cite[p.\,6]{ghosh17arxiv}.
We avoid such assumptions and our causal version of \DP{} (Def.\,\ref{view:dp-spca}) is equivalent to the original, not merely a bound.

Instead of looking at how much an adversary learns about a single data point, Barthe and K\"opf bound how much adversary learns, in terms of min entropy, about the whole database from a differentially private output, while sometimes making the strong adversary assumption~\cite[p.\,4]{barthe11csf}.
They prove that as the database increases size, the bound increases as well.
McGregor et al.\@ similarly bound the amount of information leaked, in terms of mutual information, about the whole database by a differentially private protocol (the information cost), while sometimes assuming independent data points~\cite[p.\,14]{mcgregor11eccc}.
We focus on privacy consequences to individuals, that is, on one data point at a time.

Other papers have provided flexible or convenient associative definitions not limited to attempting to capture \DP{}.
For example, Pufferfish is a flexible framework for stating associative privacy properties~\cite{kifer14database}.
Lee and Clifton explore bounding the probability that the adversary can assign to an individual being in a data set~\cite{lee12kdd}.
While such probabilities are more intuitive than the $\epsilon$ of \DP{}, their central definition implicitly makes a strong adversary assumption~\cite[Def.\,4]{lee12kdd}.

\section{Differential Privacy as Association}
\label{sec:dp-associative}

Dwork provides a well known expression of \DP{}~\cite[p.\,8]{dwork06icalp}\ndss{.}\tufte{:
\begin{quote}
\noindent\textbf{Definition 2.} \emph{A randomized function $\mc{K}$ gives \emph{$ε$-differential privacy} if for all data sets $D_1$ and $D_2$ differing on at most one element, and all $S ⊆ \mathit{Range}(\mc{K})$,}
\begin{align*}
\Pr[\mc{K}(D_1) ∈ S] &≤ \exp(ε) × \Pr[\mc{K}(D_2) ∈ S] \quad (1)
\end{align*}
\end{quote}}
In our notation, it becomes
\begin{view}[Differential Privacy]
\label{dfn:dp}
A randomized algorithm $\mc{A}$ is
\emph{$ε$-differentially private} %
if
for all $i$, for all data points
$d_1, \ldots, d_n$ in $\mc{D}^n$
and $d'_i$ in $\mc{D}$,
and for all output values $o$,
\[ \Fr_{\mc{A}}[\mc{A}(\langle d_1,\ldots,d_n\rangle){=}o] ≤ e^ε \Fr_{\mc{A}}[\mc{A}(\langle d_1, \ldots, d'_i, \ldots, d_n\rangle){=}o] \]
\end{view}
This formulation differs from Dwork's formulation in four minor ways.
First, for simplicity, we restrict ourselves to only considering programs producing outputs over a finite domain, allowing us to use notationally simpler discrete probabilities.
Second, we change some variable names.
Third, we explicitly represent that the probabilities are over the
randomization within the algorithm $\mc{A}$, which should be
understood as physical probabilities, or \emph{frequencies}, not as
epistemic probabilities, or Bayesian \emph{credences}.
Fourth, we use the \emph{bounded} formulation of \DP{},
in which we presume a maximum number $n$ of individuals potentially providing data.
In this formulation, it is important that one of the possible values
for data points is the null data point containing no information to
represent an individual deciding to not participate.

Both Dwork's expression of and our re-expression of \DP{} make
discussing the concerns about dependencies between data points raised by
some papers difficult since it does not mention any distribution over
data points.
This omission is a reflection of the standard view that \DP{} does not depend upon that distribution.
However, to have a precise discussion of this issue, we should introduce notation for denoting the data points.
We use Yang~et~al.'s expression of \DP{} as a starting point~\cite[p.\,749]{yang15sigmod}:
\begin{quote}
\noindent\textit{Definition 4.} (Differential Privacy) A randomized mechanism $\mc{M}$ satisfies \textit{$ε$-differential privacy}, or $ε$-DP, if
\begin{align*}
DP(\mc{M}) &:= \sup_{i, \mathbf{x}_{-i}, x_i, x'_i, S} \log \frac{\Pr(r ∈ S \given x_i, \mathbf{x}_{-i})}{\Pr(r ∈ S \given x'_i, \mathbf{x}_{-i})} ≤ ε. \tufte{\quad (1)}
\end{align*}
\end{quote}
We rewrite this definition in our notation as follows:
\begin{view}[Strong Adversary Differential Privacy]
\label{view:dp-utc}
A randomized algorithm $\mc{A}$ is
\emph{$ε$-strong adversary differentially private} %
if for all population distributions $\mc{P}$,
for all $i$, for all data points
$d_1, \ldots, d_n$ in $\mc{D}^n$
and $d'_i$ in $\mc{D}$,
and for all output values $o$,
if
\begin{align}
\Fr_{\mc{P}}[D_1{=}d_1, \ldots,D_i{=}d_i, \ldots, D_n{=}d_n] &> 0 \\
\text{and}\quad \Fr_{\mc{P}}[D_1{=}d_1, \ldots,D_i{=}d'_i, \ldots, D_n{=}d_n] &> 0
\end{align}
then
\begin{multline}
\Fr_{\mc{P},\mc{A}}[O{=}o \given D_1{=}d_1, \ldots,D_i{=}d_i, \ldots, D_n{=}d_n] \\≤ e^ε * \Fr_{\mc{P},\mc{A}}[O{=}o \given D_1{=}d_1, \ldots,D_i{=}d'_i, \ldots, D_n{=}d_n]
\label{eqn:conditioning-on-all}
\end{multline}
where $O = \mc{A}(D)$ and $D = \langle D_1, \ldots, D_n\rangle$.
\end{view}

This formulation differs from Yang~et~al.'s formulation in the following ways.
As before, we change some variable names and only consider programs producing outputs over a finite domain.
Also, rather than using shorthand, we write out variables explicitly and denote the distributions from which they are drawn.
For example, for what they denoted as $\Pr(r ∈ S \given x'_i, \mathbf{x}_{-i})$, we write
$\Pr_{\mc{P},\mc{A}}[O{=}o  \given D_1{=}d_1, \ldots, D_i{=}d'_i, \ldots, D_n{=}d_n]$,
where the data points $D_1,\ldots,D_n$ are drawn from the population distribution $\mc{P}$
and the output $O$ uses the algorithm's internal randomization $\mc{A}$.
This allows explicitly discussion of how the data points $D_1,\ldots,D_n$ may be correlated in the population $\mc{P}$ from which they come.

Finally, we explicitly deal with data points potentially having
a probability of zero under $\mc{P}$.
We ensure that we only attempt to calculate the conditional probability for
databases with non-zero probability.
This introduces a new problem: if the probability distribution
$\mc{P}$ over databases assigns zero probability to a data point value $d_i$, we
will never examine the algorithm's behavior for it.
While the algorithm's behavior on zero-probability events may be of little
practical concern, it would allow the algorithm $\mc{A}$ to violate
\DP{}.
(See Appendix~\ref{app:chr-csb} for an example.)
To remove this possibility, we quantify over all probability
distributions, which will include some with non-zero probability for
every combination of data points.

Alternately, we could have used just one distribution that assigns
non-zero probability to all possible input data points.
We instead quantify over all distributions
to make it clear that \DP{} implies a property
for all population distributions $\mc{P}$.
While the population distribution $\mc{P}$ is needed to compute the probabilities
used by Definition~\ref{view:dp-utc} and will change the probability of outcomes,
whether or not $\mc{A}$ has \DP{} does not actually
depend upon the distribution beyond whether it assigns non-zero probability to data points.
This lack of dependence explains why \DP{} is typically
defined without reference to a population distribution $\mc{P}$ and
typically only mentions the algorithm's randomization $\mc{A}$.

For us, the population distribution $\mc{P}$ serves to link the algorithm
to the data on which it is used, explaining the consequences of the algorithm
for that population.
Since the concerns of Yang~et~al.\@ and others deal with differential
privacy's behavior on populations with correlated data points, having
this link proves useful.
The following theorem shows that its introduction does not alter
the concept.

\begin{prpview}\label{prpview:dp-utc}
Definitions~\ref{dfn:dp} and~\ref{view:dp-utc} are equivalent.
\end{prpview}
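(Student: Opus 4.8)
The plan is to reduce both implications to a single observation: conditioning the joint distribution on a \emph{complete} database assignment $D_1{=}d_1,\ldots,D_n{=}d_n$ leaves only the algorithm's internal randomization, so the conditional output distribution coincides with the algorithm's distribution on that fixed input. Concretely, I would first establish the lemma that whenever $\Fr_{\mc{P}}[D_1{=}d_1,\ldots,D_n{=}d_n] > 0$,
\[
\Fr_{\mc{P},\mc{A}}[O{=}o \given D_1{=}d_1,\ldots,D_n{=}d_n] = \Fr_{\mc{A}}[\mc{A}(d_1,\ldots,d_n){=}o].
\]
This holds because $O = \mc{A}(\langle D_1,\ldots,D_n\rangle)$ and the randomization of $\mc{A}$ is independent of the population draw from $\mc{P}$; expanding the conditional probability as a ratio and using this independence to factor $\Fr_{\mc{P},\mc{A}}[O{=}o, D{=}d] = \Fr_{\mc{P}}[D{=}d]\cdot\Fr_{\mc{A}}[\mc{A}(d){=}o]$, the population factor cancels.

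For the forward direction (Definition~\ref{dfn:dp} implies Definition~\ref{view:dp-utc}), I would fix an arbitrary $\mc{P}$, index $i$, data points $d_1,\ldots,d_n$, alternative $d'_i$, and output $o$ for which both conditioning events have positive probability. Applying the lemma to each of the two complete assignments rewrites the left- and right-hand sides of the conditional inequality~\eqref{eqn:conditioning-on-all} as $\Fr_{\mc{A}}[\mc{A}(d_1,\ldots,d_i,\ldots,d_n){=}o]$ and $\Fr_{\mc{A}}[\mc{A}(d_1,\ldots,d'_i,\ldots,d_n){=}o]$, respectively; the required inequality is then exactly the instance of Definition~\ref{dfn:dp} for these arguments, which holds by hypothesis.

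For the backward direction (Definition~\ref{view:dp-utc} implies Definition~\ref{dfn:dp}), I would fix $i$, $d_1,\ldots,d_n$, $d'_i$, and $o$. If $d_i = d'_i$ the two databases coincide and the inequality is trivial, so I assume $d_i \ne d'_i$, making the databases $d = (d_1,\ldots,d_i,\ldots,d_n)$ and $d' = (d_1,\ldots,d'_i,\ldots,d_n)$ distinct. The key step is to supply a witnessing population distribution: take $\mc{P}$ to assign positive probability to both $d$ and $d'$ (for instance the uniform distribution on $\{d, d'\}$). Then both positivity hypotheses of Definition~\ref{view:dp-utc} are satisfied, so its conclusion yields the conditional inequality, which the lemma converts back into the desired instance of Definition~\ref{dfn:dp}.

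The only real subtlety --- and the step I would treat most carefully --- is the lemma, since it rests on making the joint probability space precise: the data points are drawn according to $\mc{P}$ while $\mc{A}$ supplies independent randomization, and it is exactly this independence that both justifies the factorization and explains why the particular choice of $\mc{P}$ is immaterial beyond which data points it gives non-zero probability. Everything else is a direct substitution, so I expect no further obstacles.
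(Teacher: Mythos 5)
Your proposal is correct and takes essentially the same route as the paper: both directions rest on the observation that conditioning on a complete database assignment reduces the conditional output distribution to the algorithm's own distribution on that fixed input (which the paper invokes implicitly and you isolate as an explicit lemma). The only immaterial difference is in the backward direction, where the paper exhibits a single i.i.d.\ full-support population as the witness while you construct a per-instance two-point distribution on $\{d, d'\}$; either choice discharges the positivity side-conditions because Definition~\ref{view:dp-utc} quantifies over all population distributions $\mc{P}$.
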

\begin{proof}
\ndss{\newcommand{\eqnswitch}[2][]{$#2$#1}}
\tufte{\newcommand{\eqnswitch}[2][]{\[#2\]}}
Assume Definition~\ref{dfn:dp} holds.
Consider any
population $\mc{P}$,
index $i$,
data points
$\langle d_1, \ldots, d_n\rangle$ in $\mc{D}^n$
and $d'_i$ in $\mc{D}$,
and output $o$
such that the following holds:
\eqnswitch{\Fr_{\mc{P}}[D_1{=}d_1,\ldots, D_n{=}d_n] > 0}
and
\eqnswitch[.]{\Fr_{\mc{P}}[D_1{=}d_1, \ldots, D_i{=}d'_i, \ldots, D_n{=}d_n] > 0}
Since Definition~\ref{dfn:dp} holds,
\begin{fw}
\begin{align}
\Fr_{\mc{A}}[\mc{A}(\langle d_1, \ldots, d_n\rangle){=}o] &\leq e^\epsilon \Fr_{\mc{A}}[\mc{A}(\langle d_1, \ldots, d'_i, \ldots, d_n\rangle){=}o]\ndss{\notag}
\end{align}
\end{fw}
Letting $O = \mc{A}(D)$ and $D = \langle D_1, \ldots, D_n\rangle$, the above implies
\begin{fw}
\begin{align}
\aln\Fr_{\mc{P},\mc{A}}[O{=}o \given D_1{=}d_1,\ldots, D_n{=}d_n] \brk &\ndss{\:}\leq e^\epsilon * \Fr_{\mc{P},\mc{A}}[O{=}o \given D_1{=}d_1, \ldots, D_i{=}d'_i, \ldots, D_n{=}d_n]\ndss{\notag}
\end{align}
\end{fw}
Thus, Definition~\ref{view:dp-utc} holds.

Assume Definition~\ref{view:dp-utc} holds.
Let $\mc{P}$ be a population that is i.i.d.\@ and assigns non-zero probabilities to all the sequences of $n$ data points.
Consider any
index $i$,
data points
$\langle d_1, \ldots, d_n\rangle$ in $\mc{D}^n$
and $d'_i$ in $\mc{D}$,
and output $o$.
$\mc{P}$ is such that
\eqnswitch{\Fr_{\mc{P}}[D_1{=}d_1,\ldots, D_n{=}d_n] > 0}
and
\eqnswitch{\Fr_{\mc{P}}[D_1{=}d_1, \ldots, D_i{=}d'_i, \ldots, D_n{=}d_n] > 0}
both hold.
Thus, since Definition~\ref{view:dp-utc} holds for $\mc{P}$,
\begin{fw}
\begin{align}
\aln\Fr_{\mc{P},\mc{A}}[O{=}o \given D_1{=}d_1,\ldots, D_n{=}d_n] \brk&\ndss{\:}\leq e^\epsilon * \Fr_{\mc{P},\mc{A}}[O{=}o \given D_1{=}d_1, \ldots, D_i{=}d'_i, \ldots, D_n{=}d_n]\ndss{\notag}
\end{align}
\end{fw}
where $O = \mc{A}(D)$ and $D = \langle D_1, \ldots, D_n\rangle$.
Thus,
\begin{align}
\Fr_{\mc{A}}[\mc{A}(\langle d_1, \ldots, d_n\rangle){=}o] &\leq e^\epsilon \Fr_{\mc{A}}[\mc{A}(\langle d_1, \ldots, d_i, \ldots, d_n\rangle){=}o]\ndss{\notag}
\end{align}
Thus, Definition~\ref{dfn:dp} holds.
\end{proof}

The standard intuition provided for the formulation of differential
privacy found in Definition~\ref{view:dp-utc} is a Bayesian one
in which we think of $\mc{P}$ as being prior information held
by an adversary trying to learn about $D_i$.
We condition upon and fix all the values of
$D_1,\ldots,D_n$ except $D_i$ to model a ``strong adversary'' that knows every
data point except $D_i$, whose value varies in \eqref{eqn:conditioning-on-all}.
As the value of $D_i$ varies, we compare the probabilities of output values $o$.
These probabilities can be thought of as measuring what the adversary
knows about $D_i$ given all the other data points.
The bigger the change in the probabilities as the value of $D_i$
varies, the bigger the flow of information from $D_i$ to $O$.

The origins of this characterization of \DP{} go back to the original
work of Dwork~et~al., who instead call strong adversaries
``informed adversaries''~\cite[App.\,A]{dwork06crypto}.
However, their characterization is somewhat different than what
is now viewed as the strong adversary characterization.
This new characterization has since shown up in numerous places.
For example,
Alvim and Andr\'{e}s rewrite \DP{} this
way~\cite[p.\,5]{alivim11icalp} while %
Yang~et~al.~\cite[Def.\,4]{yang15sigmod} and %
Cuff and Yu~\cite[Def.\,1]{cuff16ccs} even define it thus. %

Despite this intuition, there's no mathematical requirement that we
interpret the probabilities in terms of an adversary's Bayesian
beliefs and we could instead treat them as frequencies over some
population.
In Section~\ref{sec:knowledge}, we return to this issue where
we explicitly mix the two interpretations.
Either way, we term Definition~\ref{view:dp-utc} to be an
\emph{associative characterization} of \DP{} since
\eqref{eqn:conditioning-on-all} compares probabilities that differ in
the value of $D_i$ that is conditioned upon.

While it may seem intuitive that ensuring privacy against such a
``strong'' adversary would imply privacy against other ``weaker''
adversaries that know less, it turns out that the name is misleading.
Suppose we measure privacy in terms of the association between $D_i$ and $O$,
which captures what an adversary learns, as in \eqref{eqn:conditioning-on-all}.
Depending upon the circumstances,
either a more informed ``stronger'' adversary or
a less informed ``weaker'' adversary
will learn more from a data release~\cite{cormode11kdd,kifer11sigmod}.
Intuitively, if the released data is esoteric information and only the informed
adversary has enough context to make use of it, it will learn more.
If, on the other hand, the released data is more basic information
relating something that the
informed adversary already knows but the uninformed one does not, then
the ``weaker'' uninformed one will learn more.

One way to make this issue more precise is to model how informed an adversary is
by the number of data points it knows, that is, the number conditioned
upon.
This leads to Yang~et~al.'s definition of \emph{Bayesian Differential Privacy}~\cite[Def.\,5]{yang15sigmod}.
Despite the name, its probabilities can be interpreted either as Bayesian credences or as frequencies.
For simplicity, we state their definition for just the extreme case where the adversary knows zero data points:

\begin{view}[Bayesian$_0$ Differential Privacy]
\label{view:dp-ucb}
A randomized algorithm $\mc{A}$ is
\emph{$ε$-Bayesian$_0$ differentially private}
if for all population distributions $\mc{P}$,
for all $i$, for all data points
$d_i$ and $d'_i$ in $\mc{D}$,
and for all output values $o$,
if
$\Fr_{\mc{P}}[D_i{=}d_i] > 0$ and
$\Fr_{\mc{P}}[D_i{=}d'_i] > 0$
then
\begin{align}
\Fr_{\mc{P},\mc{A}}[O{=}o \given D_i{=}d_i] ≤ e^ε * \Fr_{\mc{P},\mc{A}}[O{=}o \given D_i{=}d'_i]
\end{align}
where $O = \mc{A}(D)$ and $D = \langle D_1, \ldots, D_n\rangle$.
\end{view}

One might expect that \DP{} would provide
Bayesian Differential Privacy from hearing informal descriptions of
them.
However, Yang~et~al.\@ prove that Bayesian Differential Privacy implies
\DP{} but is strictly stronger~\cite[Thm.\,2]{yang15sigmod}.
Indeed, it was already known that limiting the association between
$D_i$ and the output $O$ requires limiting the associations between
$D_i$ and the other data points~\cite{cormode11kdd,kifer11sigmod}.
Doing so, Yang~et~al.\@ proved that \DP{} implies
Bayesian Differential Privacy under the assumption that the data
points are independent of one another~\cite[Thm.\,1]{yang15sigmod}.
We state the resulting qualified form of \DP{} as
follows:
\begin{view}[Independent Bayesian$_0$ Differential Privacy]
\label{view:dp-ica}
A randomized algorithm $\mc{A}$ is
\emph{$ε$-Bayesian$_0$ differentially private for independent data points}
if for all population distributions $\mc{P}$
such that
for all $i$ and $j$ where $i \neq j$, $D_i$ is independent of $D_j$ conditioned upon the other data points
the following holds:
for all data points
$d_i$ and $d'_i$ in $\mc{D}$,
and for all output values $o$,
if
$\Fr_{\mc{P}}[D_i{=}d_i] > 0$ and
$\Fr_{\mc{P}}[D_i{=}d'_i] > 0$
then
\begin{align}
\Fr_{\mc{P},\mc{A}}[O{=}o \given D_i{=}d_i] ≤ e^ε * \Fr_{\mc{P},\mc{A}}[O{=}o \given D_i{=}d'_i]
\label{eq:view:dp-ica:1}
\end{align}
where $O = \mc{A}(D)$ and $D = \langle D_1, \ldots, D_n\rangle$.
\end{view}

On all the above math, everyone is in agreement, which we summarize in Figure~\ref{fig:ass-dp} and below:
\begin{enumerate}[(a)] %
\item Differential privacy and Strong Adversary Differential Privacy are equivalent,
\item Differential privacy and Independent Bayesian Differential Privacy are equivalent,
\item Bayesian Differential Privacy and related associative properties are strictly stronger than Differential Privacy,
\item\label{enm:strong} If we limit ourselves to strong adversaries, \DP{} and Bayesian Differential Privacy become equivalent, and
\item\label{enm:indep} If we limit ourselves to independent data points, \DP{} and Bayesian Differential Privacy become equivalent.
\end{enumerate}
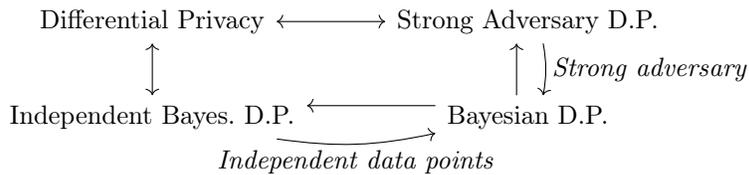
\begin{figure}%
\ndss{\small}
\centering
\begin{DIFnomarkup}
\begin{tikzcd}[math mode=false, row sep=2em, column sep=3em]
Differential Privacy\arrow[r,leftrightarrow]\arrow[d,leftrightarrow] & Strong Adversary D.P.\arrow[d, xshift=1ex, bend left=10, "\emph{Strong adversary}"]\\
Independent Bayes.\@ D.P.\arrow[r, shift right=0ex, bend right=10, "\emph{Independent data points}"']     & Bayesian D.P.\arrow[u, xshift=-1ex]\arrow[l, yshift=1ex]
\end{tikzcd}
\end{DIFnomarkup}
\caption{Relationships between Differential Privacy and Associative
  Characterizations of It.  Arrows show implications.  Curved, labeled
  arrows show, in italics, assumptions required for the implication.  For
  differential privacy to imply Bayesian Differential Privacy, one
  of two assumptions must be made.}
\label{fig:ass-dp}
\end{figure}
More controversially, some papers have pointed to these facts
to say that \DP{} makes
implicit assumptions.
Some have taken \ref{enm:strong} to imply that \DP{}
has an implicit assumption of a strong adversary.
For example, Cuff and Yu's paper states~\cite[p.\,2]{cuff16ccs}: %
\begin{quote}
  The definition of $(\epsilon, \delta)$-DP involves a notion of neighboring
  database instances. Upon examination, one realizes that this has the
  effect of assuming that the adversary has already learned about
  all but one entry in the database and is only trying to gather
  additional information about the remaining entry. We refer to this
  as the strong adversary assumption, which is implicit in the
  definition of differential privacy.
\end{quote}
Others have focused on \ref{enm:indep} and independent data points.
For example,
Liu~et~al.'s paper asserts~\cite[p.\,1]{liu16ndss}:
\begin{quote}
  To provide its guarantees, DP mechanisms assume that the data tuples
  (or records) in the database, each from a different user, are all
  independent.
\end{quote}
Appendix~\ref{app:history-camp2} provides more examples.

Those promoting the original view of \DP{} have
re-asserted that \DP{} was never intended to prevent
all associative, or inferential, privacy threats and that doing so is
impossible~\cite{bassily13focs,kasiviswanathan14jpc,mcsherry16blog1,mcsherry16blog2}.
However, this assertion raises the question: if \DP{} is not
providing some form of association-based inferential privacy, what is
it providing?

\section{A Primer on Causation}
\label{sec:cause}

We believe that the right way of thinking about \DP{}
is that it is providing a causal guarantee.
Before justifying this claim, we will review a framework for precisely reasoning about causation based upon Pearl's~\cite{pearl09book}.
We choose Pearl's since it is the most well known in computer science, but our results can be translated into other frameworks.

To explain causation, let us return to the example of Section~\ref{sec:mot-example}.
Suppose that the statistic being computed is the number of data points showing the genetic disease.
A possible implementation of such a differentially private count algorithm $\mc{A}$ for a fixed number of three data points is
\newcommand{\pos}{\mathtt{pos}}
\newcommand{\negit}{\mathtt{neg}}
\begin{align*}
\mathtt{def}&\ \texttt{prog}_{\mc{A}}(D_1,D_2,D_3):\\
&D := \langle D_1, D_2, D_3\rangle\\
&O := \texttt{Lap}(1/ε) + \sum_{i=1}^3 (1\ \texttt{if}\ D[i] == \pos\ \texttt{else}\ 0)
\end{align*}
It takes in $3$ data points as inputs, representing the statuses reported by survey participants.
It stores them in a database $D$ and
then uses the Laplace Mechanism to provide a differentially private
count of the number of data points recording the status as positive ($\pos$)~\cite[Example~1]{dwork06crypto}.

One could use a tool like the GNU Project Debugger (GDB)
to check the value of a variable as the program executes.
We can think of this as making an observation.
If you observed that $D[3]$ is negative ($\negit$), you would know that $D_3$ and the
third input were also $\negit$.
In a probabilistic setting,
conditioning would
carry out this update in knowledge.

One could also use GDB to intervene on the program's execution and
alter $D[3]$ to be $\pos$.
This would probabilistically increase the output's value.
But would one learn from this that $D_3$ is $\pos$ and no longer $\negit$?
No, since the program uses assignments and not
equalities to shift the value of the right-hand-side variable into the left-hand-side variable.
$D_3$ is a (partial) cause of $D$, but not the other way around.
Altering the value of $D[3]$ only affects variables that it assigns a value to, those they assign values to, and so forth, that is, the ones it causes.
In this example, that is only $O$.
This reflects the difference between association and causation.

More formally, to develop a causal interpretation of \DP{}, we start by replacing the equation $O = \mc{A}(D)$ with a stronger claim.
Such equations say nothing about why this relation holds.
We use a stronger causal relation asserting that the value of the output $O$ is caused by the value of the input $D$, that is, we use a \emph{structural equation}.
We will denote this structural equation by $O := \mc{A}(D)$ since it is closer to an assignment than equality due to its directionality.
To make this more precise, let $\doo(D{=}d)$ denote an intervention setting the value of $D$ to $d$ (Pearl's \emph{do} notation~\cite{pearl09book}).
Using this notation, $\Fr[O{=}o \given \doo(D{=}d)]$ represents what the probability of $O = o$ would be if the value of $D$ were set to $d$ by intervention.
Similar to normal conditioning on $D = d$, $\Fr[O{=}o \given \doo(D{=}d)]$ might not equal $\Fr[O{=}o]$.
However, $\Fr[D{=}d \given \doo(O{=}o)]$ will surely equal $\Fr[D{=}d]$ since $O$ is downstream of $D$, and, thus, changing $O$ has no effects on $D$.

Similarly, we replace $D = \langle D_1, D_2, D_3\rangle$ with $D := \langle D_1, D_2, D_3\rangle$.
That is, we consider the value of the whole database to be caused by the values of the data points and nothing more.
Furthermore, we require that $D_1, D_2, D_3$ only cause $D$ and do not have any other effects.
In particular, we do not allow $D_i$ to affect $D_j$ for $i \neq j$.
Looking at our example program $\texttt{prog}_{\mc{A}}$, this is the case.

This requirement might seem to prevent one person's attribute from affecting another's, for example, preventing a mother's genetic condition from affecting her child's genetic condition.
This is not the case since $D_1, D_2, D_3$ represent the data points provided as inputs to the algorithm and not the actual attributes themselves.
One could model the actual attributes, such as genetics itself, as random variables $R_1, R_2, R_3$ where $D_i := R_i$ for all $i$ and allow $R_i$ to affect $R_j$ without changing how intervening on the $D_i$s works.
For example, $\texttt{prog}_{\mc{A}}$ might be called in the following context:
\begin{align*}
\mathtt{def}\ &\mathtt{prog}_{\text{status}}(R_1,R_3):\\
&R_2 := R_1\\
&D_1 := R_1\\
&D_2 := R_2\\
&D_3 := R_3\\
&\texttt{prog}_{\mc{A}}(D_1,D_2,D_3)
\end{align*}
which does not say how the inputs $R_1$ or $R_3$ are set but does model that
$R_2$ is assigned $R_1$.
We can graphically represent these relationships as a \emph{graphical model}, similar to the one in Figure~\ref{fig:causal-privacy} with $n-2 = 1$ and an intermediate variable $D$ representing the database put between the data points and the output.
Note that while $D_1$ and $D_2$ are associated, equal in fact, neither
causes the other and they can be changed independently of one another,
which can be seen from neither being downstream from the other.

To make the above intuitions about causation formal, we use\tufte{ a slight modification of Pearl's causal models.\footnote{The models we use are suggested by Pearl for handling ``inherent'' randomness~\cite[p.\,220]{pearl09book} and differs from the model he typically uses~\cite[Def.\,7.1.6]{pearl09book} by allowing randomization in the structural equations $F_V$.  We find this randomization helpful for modeling the randomization within the algorithm $\mc{A}$.}
Pearl uses} \emph{structural equation models} (SEMs). %
An SEM $\mc{M} = \langle \mathcal{V}_{\msf{en}}, \mathcal{V}_{\msf{ex}}, \mathcal{E}\rangle$ includes a set of \emph{variables} partitioned into \emph{endogenous} (or dependent) variables $\mathcal{V}_{\msf{en}}$ and \emph{background} (or exogenous, or independent) variables $\mathcal{V}_{\msf{ex}}$.
You can think of the endogenous variables as being those assigned values by the programs above and the background variables as being those provided as inputs to the programs. %
$\mc{M}$ also includes a set $\mathcal{E}$ of structural equations, corresponding to the assignments.
Each endogenous variable $X$ has a structural equation $X := F_X(\vec{Y})$ where
$F_X$ is a possibly randomized function and
$\vec{Y}$ is a list of other variables, modeling the direct causes of $X$.
To avoid circularity, $\vec{Y}$ may not include $X$.
We call the variables $\vec{Y}$ the \emph{parents} of $X$, denoted as $\msf{pa}(X)$.

We limit ourselves to \emph{recursive} SEMs, those in which the variables may be ordered such that all background variables come before all endogenous variables and no variable has a parent that comes before it in the ordering.
We may view such SEMs as similar to a program where the background variables are inputs to the program and the ordering determines the order of assignment statements in the program.
We can make this precise by computing the values of endogenous
variables from the values of the background variables using a method
similar to assigning a semantics to a program.

The only difference is that, rather then a single value, the inputs are assigned probability distributions over values, which allows us to talk about the probabilities of the endogenous variables taking on a value.
Let a \emph{probabilistic SEM} $\langle \mc{M}, \mc{P}\rangle$ be an SEM $\mc{M}$ with a probability distribution $\mc{P}$ over its background variables.
We can raise the structural equations (assignments) to work over $\mc{P}$ instead of a concrete assignment of values.
(Appendix~\ref{app:causation-details} provides details.)

Finally, to define causation, let $\mc{M}$ be an SEM, $Z$ be an endogenous variable of $\mc{M}$, and $z$ be a value that $Z$ can take on.
Pearl defines the \emph{sub-model} $\mc{M}[Z{:=}z]$ to be the SEM that results from replacing the equation $Z := F_Z(\vec{Z})$ in $\mathcal{E}$ of $\mc{M}$ with the equation $Z := z$.
You can think of this as using GDB to assign a value to a variable or as aspect-oriented programming jumping into a function to alter a variable.
The sub-model $\mc{M}[Z{:=}z]$ shows the \emph{effect} of setting $Z$ to $z$.
Let $\Fr_{\olangle \mc{M}, \mc{P}\orangle}[Y{=}y \given \doo(Z{:=}z)]$ be $\Fr_{\olangle \mc{M}[Z{:=}z], \mc{P}\orangle}[Y{=}y]$.
This is well defined even when $\Fr_{\olangle \mc{M}, \mc{P}\orangle}[Z{=}z] = 0$ as long as $z$ is within in the range of values $\mc{Z}$ that $Z$ can take on.

Returning to our example,
let $\mc{M}^{\mc{A}}_{\text{st}}$ be an SEM representing $\texttt{prog}_{\text{status}}$
and $\mc{P}$ be the naturally occurring distribution of data points.
$\Fr_{\olangle \mc{M}^{\mc{A}}_{\text{st}}, \mc{P}\orangle}[O{=}o]$ is the probability of the algorithm's output being $o$ under $\mc{P}$ and coin flips internal to $\mc{A}$.
$\Fr_{\olangle \mc{M}^{\mc{A}}_{\text{st}}, \mc{P}\orangle}[O{=}o \given D_i{=}\pos]$ is that probability conditioned upon seeing $D_1 = \pos$.
$\Fr_{\olangle \mc{M}^{\mc{A}}_{\text{st}}, \mc{P}\orangle}[O{=}o \given \doo(D_1{=}\pos)]$ is that probability given an intervention setting the value of $D_1$ to $\pos$, which is
$\Fr_{\olangle \mc{M}^{\mc{A}}_{\text{st}}[D_i{:=}\pos], \mc{P}\orangle}[O{=}o]$.
$\mc{M}^{\mc{A}}_{\text{st}}[D_1{:=}\pos]$ is the program with the line assigning $R_1$ to $D_1$ replaced with $D_1 := \pos$.
$\Fr_{\olangle \mc{M}^{\mc{A}}_{\text{st}}, \mc{P}\orangle}[O{=}o \given \doo(D_1{=}\pos)]$ depends upon how the intervention on $D_1$ will flow downstream to\tufte{ $D$ and then} $O$.

This probability differs from the conditional probability in that setting $D_1$ to $\pos$ provides no information about $D_j$ for $j \neq 1$, whereas if $D_1$ and $D_j$ are associated, then seeing the value $D_1$ does provide information about $D_j$.
Intuitively, this lack of information is because the artificial setting of $D_1$ to $\pos$ has no causal influence on $D_j$ due to the data points not affecting one another and the artificial setting, by being artificial, tells us nothing about the associations found in the naturally occurring world.
On the other hand, artificially setting the attribute itself $R_1$ to $\pos$ will provide information about $D_2$ since $R_1$ has an effect on $D_2$ in addition to $D_1$. %
A second difference is that $\Fr_{\olangle \mc{M}^{\mc{A}}_{\text{st}}, \mc{P}\orangle}[O{=}o \given \doo(D_i{=}d_i)]$ is defined even when $\Fr_{\olangle \mc{M}^{\mc{A}}_{\text{st}}, \mc{P}\orangle}[D_i{=}d_i] = 0$.

Importantly, interventions on a data point $D_i$ do not model modifying the attributes they record nor affect other inputs.
Instead, interventions on $D_i$ model changing the values provided as inputs to the algorithm, which can be changed without affecting the attributes or other inputs.
This corresponds to an \emph{atomicity} property: the inputs $D_i$ are causally isolated from one another and they can be intervened upon separately.

Making the distinction between the inputs $D_i$ and the attributes $R_i$ might seem nitpicky, but it is key to understanding \DP{}.
Recall that its motivation is to make people comfortable with truthfully sharing data instead of withholding it or lying, which is an acknowledgment that the inputs people provide might not be the same as the attributes they describe.
Furthermore, that changing inputs do not change attributes or other inputs is a reflection of how the program works.
It is not an implicit or hidden assumption of independence; it is a fact about the program analyzed.

\section{Differential Privacy as Causation}
\label{sec:dp-causal}

Due to differential privacy's behavior on associated inputs and its requirement of considering zero-probability database values, \DP{} is not a straightforward property about the independence or the degree of association of the database and the algorithm's output.
The would-be conditioning upon zero-probability values corresponds to a form of counterfactual reasoning asking what the algorithm would have performed had the database taken on a particular value that it might never actually take on.
Experiments with such counterfactuals, which may never naturally occur, form the core of causation.
The behavior of \DP{} on associated inputs corresponds to the atomicity property found in causal reasoning, that one can change the value of an input without changing the values of other inputs.
With these motivations, we will show that \DP{} is equivalent to a causal property that makes the change in a single data point explicit.

\subsection{With the Whole Database}
\label{sec:dp-causal-whole}

We first
show an equivalence between \DP{} and a causal property on the whole database to echo Strong Adversary Differential Privacy (Def.~\ref{view:dp-utc}).
To draw out the parallels between the associative and causal properties, we quantify over all populations as we did in Definition~\ref{view:dp-utc}, but as we will see, doing so is not necessary.

Let $\mc{M}^{\mc{A}}$ be an SEM modeling a slightly modified version of
$\texttt{prog}_{\text{status}}$ that lacks the first assignment and treats all of any fixed number of attributes $R_i$ as inputs (i.e., as exogenous variables) with $D_i := R_i$.
(Appendix~\ref{app:causation-details} provides details.)
We could instead use a version of $\mc{M}^{\mc{A}}$ that also accounts for $D_i$ possibly being assigned a value other than $R_i$ to model withholding an attribute's actual value.
While the proofs would become more complex, the results would remain the same since we only intervene on the $D_i$ and not the $R_i$.

\begin{view}[Universal Whole Database Intervention D.P.]%
\label{view:dp-tc-univ}
A randomized algorithm $\mc{A}$ is
\emph{$ε$-differentially private as universal intervention on the whole database}
if for all population distributions $\mc{P}$, for all $i$, for all data points
$d_1, \ldots, d_n$ in $\mc{D}^n$
and $d'_i$ in $\mc{D}$,
and for all output values $o$,
\begin{fw}
\begin{align}
\aln\Fr_{\olangle\mc{M}^\mc{A},\mc{P}\orangle}[O{=}o \given \doo(D_1{=}d_1,\ldots, D_n{=}d_n)] \brkalnaln{} ≤ e^ε * \Fr_{\olangle\mc{M}^\mc{A},\mc{P}\orangle}[O{=}o \given \doo(D_1{=}d_1, \ldots, D_i{=}d'_i, \ldots, D_n{=}d_n)] \ndss{\notag}
\end{align}
\end{fw}
where $O := \mc{A}(D)$ and $D := \langle D_1, \ldots, D_n\rangle$.
\end{view}

\begin{prpview}\label{prpview:dp-tc-univ}
Definitions~\ref{dfn:dp} and~\ref{view:dp-tc-univ} are equivalent.
\end{prpview}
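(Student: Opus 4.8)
The plan is to observe that intervening on the entire database removes any dependence on the population distribution $\mc{P}$, and then to chain through Proposition~\ref{prp:dp-tc}. The only substantive content is that first observation; the rest is bookkeeping, and in particular no non-zero-probability side conditions are needed because interventions are defined even on zero-probability events.

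First I would unfold the intervention semantics from Section~\ref{sec:cause}. By the definition of the sub-model, $\mc{M}[D_1{:=}d_1, \ldots, D_n{:=}d_n]$ replaces each structural equation for $D_i$ with the constant assignment $D_i := d_i$. Since $D := \langle D_1, \ldots, D_n\rangle$ and $O := \mc{A}(D)$, in this sub-model the database is the fixed tuple $\langle d_1, \ldots, d_n\rangle$ and the output is computed as $O := \mc{A}(\langle d_1, \ldots, d_n\rangle)$, with the only surviving source of randomness being $\mc{A}$'s internal coins. The background variables, and hence $\mc{P}$, have been severed from every variable on the path to $O$. Therefore
\[ \Fr_{\mc{A},\mc{P}}[O{=}o \given \doo(D_1{=}d_1,\ldots, D_n{=}d_n)] = \Fr_{\mc{A}}[\mc{A}(d_1,\ldots,d_n){=}o], \]
a quantity manifestly independent of $\mc{P}$.

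Given this identity, the universal quantifier over $\mc{P}$ in Definition~\ref{view:dp-tc-univ} is vacuous: for each fixed index $i$, pair of tuples, and output $o$, the asserted inequality is literally the same for every $\mc{P}$. Consequently Definition~\ref{view:dp-tc-univ} holds if and only if the identical family of inequalities holds for one arbitrarily chosen $\mc{P}$, which is precisely Definition~\ref{dfn:dp-tc}. Invoking Proposition~\ref{prp:dp-tc}, which already establishes that Definition~\ref{dfn:dp-tc} is equivalent to Definition~\ref{dfn:dp}, completes the equivalence between Definition~\ref{dfn:dp} and Definition~\ref{view:dp-tc-univ}.

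The main obstacle, such as it is, lies in justifying rigorously that intervening on all of $D_1, \ldots, D_n$ eliminates the dependence on $\mc{P}$, i.e., correctly invoking the sub-model semantics so that no background variable can influence $O$. But this is exactly the content already exploited in the proof of Proposition~\ref{prp:dp-tc} (Pearl's Property~1 applied to the parents of $D$), so it introduces no genuinely new technical difficulty; the remaining steps are purely logical. I would therefore keep this proof short and simply point back to Proposition~\ref{prp:dp-tc}, flagging only the new wrinkle that quantifying over all $\mc{P}$ changes nothing once the interventional quantity is seen to be $\mc{P}$-free.
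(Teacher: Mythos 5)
Your proof is correct and follows essentially the same route as the paper's: the paper's own (one-line) proof likewise observes that the argument for Proposition~\ref{prp:dp-tc} is uniform in $\mc{P}$, which is exactly your point that the interventional quantity equals $\Fr_{\mc{A}}[\mc{A}(d_1,\ldots,d_n){=}o]$ (the content of Lemma~\ref{lem:doing-on-all}) and hence the quantification over $\mc{P}$ is vacuous. You merely spell out explicitly what the paper leaves implicit, so there is no substantive difference.
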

\begin{proof}

  Pearl's Property~1 says that conditioning upon all the parents
  of a variable and causally intervening upon them all yields the same
  probability~\cite[p.\,24]{pearl09book}.
  Intuitively, this is for the same reason that Strong Adversary
  Differential Privacy is equivalent to \DP{}: it blocks
  other paths of influence from one data point to the output via another data point by
  fixing all the data points.

  We can apply Property~1 since all the $D_i$s are being intervened upon and they make up all the parents of $D$.
  We can apply it again on $D$ and $O$.
  We then get that
$\Fr_{\olangle\mc{M}^\mc{A},\mc{P}\orangle}[O{=}o \given \doo(D_1{=}d_1,\ldots, D_n{=}d_n)]$ is equal to $\Fr_{\olangle\mc{M}^\mc{A},\mc{P}\orangle}[O{=}o \given D_1{=}d_1,\ldots, D_n{=}d_n]$,
that is to Strong Adversary Differential Privacy,
which we already know to be equivalent to \DP{} by Proposition~\ref{prpview:dp-utc}.
\tufte{\par Alternately, we can grind out the calculations.
See Lemma~\ref{lem:doing-on-all} in Appendix~\ref{app:causation-details}.}
\end{proof}

Notice that this causal property is simpler than the associative one in that it does not need qualifications around zero-probability data points because we can causally fix data points to values with zero probability.
In fact, the population distribution $\mc{P}$ did not matter at all since intervening upon all the data points makes it irrelevant, intuitively by overwriting it.
For this reason, we could instead look at any population, such as the naturally occurring one
(or even elide it from the definition altogether, as in Definition~\ref{dfn:dp}, if we are not too picky about formalism).
Next, we state such a simplified definition.

\begin{view}[Whole Database Intervention D.P.]%
\label{dfn:dp-tc}
Given a population distribution $\mc{P}$,
a randomized algorithm $\mc{A}$ is
\emph{$ε$-differentially private as intervention on the whole database for $\mc{P}$}
if for all $i$, for all data points
$d_1, \ldots, d_n$ in $\mc{D}^n$
and $d'_i$ in $\mc{D}$,
and for all output values $o$,
\begin{fw}
\begin{align}
\aln\Fr_{\olangle\mc{M}^\mc{A},\mc{P}\orangle}[O{=}o \given \doo(D_1{=}d_1,\ldots, D_n{=}d_n)] \brk &\ndss{\:}≤ e^ε * \Fr_{\olangle\mc{M}^\mc{A},\mc{P}\orangle}[O{=}o \given \doo(D_1{=}d_1, \ldots, D_i{=}d'_i, \ldots, D_n{=}d_n)]\ndss{\notag}
\end{align}
\end{fw}
where $O := \mc{A}(D)$ and $D := \langle D_1, \ldots, D_n\rangle$.
\end{view}

\begin{prp}\label{prp:dp-tc}
Definitions~\ref{dfn:dp} and~\ref{dfn:dp-tc} are equivalent.
\end{prp}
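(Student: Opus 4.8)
The plan is to reduce both definitions to a single, identical inequality by exploiting the fact that intervening on the \emph{entire} database collapses the interventional probability to the pure algorithmic probability. The one fact driving the whole argument is the identity
\[ \Fr_{\mc{A},\mc{P}}[O{=}o \given \doo(D_1{=}d_1,\ldots, D_n{=}d_n)] = \Fr_{\mc{A}}[\mc{A}(d_1,\ldots,d_n){=}o], \]
which I claim holds for every population distribution $\mc{P}$ and every choice of $d_1,\ldots,d_n$ and $o$. Once this is in hand the proposition is immediate.

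First I would establish this identity using the sub-model semantics of Section~\ref{sec:cause}. By definition of Pearl's $\doo$ notation, the left-hand side is $\Fr_{\langle \mc{M}[D_1{:=}d_1,\ldots,D_n{:=}d_n], \mc{P}\mc{A}\rangle}[O{=}o]$, where the sub-model replaces each structural equation $D_i := F_{D_i}(\msf{pa}(D_i))$ by the constant assignment $D_i := d_i$. In that sub-model the database is deterministically $D = \langle d_1,\ldots,d_n\rangle$, so evaluating the structural equation $O := \mc{A}(D)$ leaves the algorithm's internal randomization $\mc{A}$ as the only remaining source of randomness. Crucially, the interventions sever every edge from the background variables into the $D_i$, so the population distribution $\mc{P}$ no longer influences $O$; this is exactly why the $\mc{P}$ subscript is vestigial and why, unlike the associative characterizations, no non-zero-probability side condition is needed (the right-hand side is defined even for inputs $\mc{P}$ assigns probability zero).

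With the identity in hand, the proof finishes without splitting into two directions. Substituting it into both occurrences of the interventional probability in Definition~\ref{dfn:dp-tc} turns that definition's inequality into
\[ \Fr_{\mc{A}}[\mc{A}(d_1,\ldots,d_n){=}o] \leq e^\epsilon * \Fr_{\mc{A}}[\mc{A}(d_1,\ldots,d'_i,\ldots,d_n){=}o], \]
which is verbatim the inequality of Definition~\ref{dfn:dp}. Since the two definitions quantify over the index $i$, the data points $d_1,\ldots,d_n$, the alternative value $d'_i$, and the output $o$ in exactly the same way, the substitution makes the two statements literally identical, so each holds if and only if the other does.

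The only real obstacle is the first step: carefully justifying that the full-database intervention agrees with plainly evaluating $\mc{A}$ on a fixed input, i.e. that the interventional semantics of the probabilistic SEM reproduces $\Fr_{\mc{A}}[\mc{A}(d_1,\ldots,d_n){=}o]$. This rests entirely on the definition of the sub-model and the recursive evaluation of probabilistic SEMs recalled in the primer (detailed in Appendix~\ref{app:causation-details}); once that semantics is spelled out, the identity is a direct computation. Notably, and in contrast with Proposition~\ref{prpview:dp-utc}, no clever choice of $\mc{P}$ is required here, precisely because intervening on all data points renders the answer independent of $\mc{P}$.
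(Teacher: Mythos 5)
Your proof is correct, and it is essentially the computation the paper relegates to its appendix: your key identity is verbatim Lemma~\ref{lem:doing-on-all}, which the paper's own proof cites only as a fallback (``alternately, we can grind out the calculations''). The paper's featured argument is different: it invokes Pearl's Property~1 (intervening on all parents of a variable yields the same distribution as conditioning on them all) twice---once for $D$'s parents $D_1,\ldots,D_n$ and once for $O$'s parent $D$---thereby reducing Definition~\ref{dfn:dp-tc} to Strong Adversary Differential Privacy (Definition~\ref{view:dp-utc}), whose equivalence with Definition~\ref{dfn:dp} was already established in Proposition~\ref{prpview:dp-utc}. That route buys economy and intuition: it reuses an existing equivalence and explains \emph{why} the result holds (fixing every data point blocks all paths of influence between data points). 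But it has to pass through conditioning, which is undefined on zero-probability databases, so it implicitly leans on the quantification over population distributions built into Definition~\ref{view:dp-utc}. Your route---evaluating the sub-model directly, observing that the interventions make $D$ deterministic, that the only residual randomness is $\mc{A}$'s, and that $\mc{P}$ marginalizes out to $1$---never conditions on anything, which is exactly why no non-zero-probability side condition and no clever choice of $\mc{P}$ are needed; you also correctly note that this independence from $\mc{P}$ is what makes the single inequality of Definition~\ref{dfn:dp-tc} collapse term-by-term onto that of Definition~\ref{dfn:dp}, giving both directions of the equivalence at once.
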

\begin{proof}
The proof follows in the same manner as Proposition~\ref{prpview:dp-tc-univ} since that proof applies to all population distributions $\mc{P}$.
\end{proof}

\subsection{With a Single Data Point}
\label{sec:dp-causal-single}

Definitions~\ref{view:dp-tc-univ} and~\ref{dfn:dp-tc}, by fixing every data point, do not capture the local nature of the decision facing a single potential survey participant.
We can define a notion similar to \DP{} that uses a causal intervention on a single data point as follows:
\begin{chr}[Data-point Intervention D.P.]%
\label{chr:dp-spc}
Given a population $\mc{P}$, a randomized algorithm $\mc{A}$ is
\emph{$ε$-differentially private as intervention on a data point for $\mc{P}$}
if for all $i$, for all data points
$d_i$ and $d'_i$ in $\mc{D}$,
and for all output values $o$,
\begin{align}
\Fr_{\mc{M}^\mc{A},\mc{P}}[O{=}o \given \doo(D_i{=}d_i)] ≤ e^ε \Fr_{\mc{M}^\mc{A},\mc{P}}[O{=}o \given \doo(D_i{=}d'_i)] \ndss{\notag}
\end{align}
where $O := \mc{A}(D)$ and $D := \langle D_1, \ldots, D_n\rangle$.
\end{chr}

This definition is strictly weaker than \DP{}.
The reason is similar to why we had to quantify over all distributions $\mc{P}$ with Strong Adversary Differential Privacy.
In both cases, we can give a counterexample with a population $\mc{P}$ that hides the effects of a possible value of the data point by assigning the value a probability of zero.
For the associative definition, the counterexample involves only a single data point (Appendix~\ref{app:chr-csb}).
However, for this causal definition, the counterexample has to have two data points.
The reason is that, since the $\doo$ operation acts on a single data point at a time, it can flush out the effects of a single zero-probability value but not the interactions between two zero-probability values.

\begin{prpchr}
\label{prpchr:dp-spc}
Definition~\ref{dfn:dp} implies Definition~\ref{chr:dp-spc}, but not the other way around.
\end{prpchr}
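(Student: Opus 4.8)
The plan is to treat the two directions by entirely different means: the forward implication (Definition~\ref{dfn:dp} $\Rightarrow$ Definition~\ref{chr:dp-spc}) is a short marginalization-and-averaging computation, while the failure of the converse is witnessed by an explicit two-data-point counterexample. Neither direction requires quantifying over all populations, so both arguments can be carried out for a fixed but arbitrary $\mc{P}$.

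For the forward direction, the crucial point is that intervening on the single data point $D_i$ does not disturb the distribution of the remaining data points, since by the atomicity of the model (Section~\ref{sec:cause}) no $D_j$ is a descendant of $D_i$. First I would expand the interventional probability by marginalizing over the untouched coordinates,
\[
\Fr_{\mc{P},\mc{A}}[O{=}o \given \doo(D_i{=}d_i)] = \sum_{d_{-i}} \Fr_{\mc{P}}[D_{-i}{=}d_{-i}] \cdot \Fr_{\mc{A}}[\mc{A}(d_1,\ldots,d_i,\ldots,d_n){=}o],
\]
where $d_{-i}$ ranges over the assignments to the data points other than the $i$-th and the weight $\Fr_{\mc{P}}[D_{-i}{=}d_{-i}]$ is unchanged by the intervention; this is the single-coordinate analogue of the marginalization behind Lemma~\ref{lem:doing-on-all} in Appendix~\ref{app:causation-details}. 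For each fixed $d_{-i}$, the two tuples $(d_1,\ldots,d_i,\ldots,d_n)$ and $(d_1,\ldots,d'_i,\ldots,d_n)$ differ only in coordinate $i$, so Definition~\ref{dfn:dp} gives $\Fr_{\mc{A}}[\mc{A}(\ldots,d_i,\ldots){=}o] \leq e^\epsilon\,\Fr_{\mc{A}}[\mc{A}(\ldots,d'_i,\ldots){=}o]$. Since the weights are nonnegative, I can multiply through and sum over $d_{-i}$; the right-hand side reassembles into $e^\epsilon\,\Fr_{\mc{P},\mc{A}}[O{=}o \given \doo(D_i{=}d'_i)]$, which is exactly Definition~\ref{chr:dp-spc}.

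For the converse I would exhibit a $\mc{P}$ and $\mc{A}$ with $n=2$ and $\mc{D}=\{0,1\}$ satisfying Definition~\ref{chr:dp-spc} but not Definition~\ref{dfn:dp}. Take $\mc{P}$ to be the point mass on $(0,0)$, and let $\mc{A}$ output a fair coin flip on each of $(0,0)$, $(0,1)$, $(1,0)$ but deterministically output $0$ on $(1,1)$. Because every single-point intervention $\doo(D_i{=}d_i)$ fixes one coordinate while marginalizing the other over $\mc{P}$ (which concentrates on $0$), it only ever evaluates $\mc{A}$ at inputs of the form $(d_1,0)$ or $(0,d_2)$, all of which are uniform; hence Definition~\ref{chr:dp-spc} holds, even with $\epsilon=0$. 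Yet the input $(1,1)$ is reached by no single intervention, and the neighbors $(1,0)$ and $(1,1)$ give $\Fr_{\mc{A}}[\mc{A}(1,0){=}1]=1/2$ while $\Fr_{\mc{A}}[\mc{A}(1,1){=}1]=0$, violating Definition~\ref{dfn:dp} for every finite $\epsilon$.

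The main obstacle is the marginalization identity in the forward direction: I must justify that $\doo(D_i{=}d_i)$ leaves the joint distribution of $D_{-i}$ equal to its $\mc{P}$-marginal, which rests entirely on the causal isolation of the data points asserted in Section~\ref{sec:cause} (that $D_i$ is an ancestor of none of the $D_j$). Once that identity is established, the remainder is a one-line nonnegative-weighted-average of the pointwise differential-privacy inequalities, and the counterexample is a direct verification. I would also confirm that the construction genuinely needs two data points: with a single data point, $\doo(D_1{=}d_1)$ evaluates $\mc{A}$ at every value of $\mc{D}$ and so flushes out all of its behavior, which is precisely why the associative counterexample needs only one point whereas this causal one needs two.
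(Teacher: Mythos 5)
Your proof is correct and takes essentially the same approach as the paper's: the forward direction is the same marginalization identity (the paper's Lemma~\ref{lem:doing-on-one}) followed by a nonnegative weighted average of the pointwise differential-privacy inequalities, and your counterexample has the identical structure to the paper's---the violating behavior is hidden at an input both of whose coordinates have zero probability under $\mc{P}$, so no single-point intervention can ever reach it. The only cosmetic difference is that you use $\mc{D}=\{0,1\}$ with a point mass at $(0,0)$ where the paper uses $\{0,1,2\}$ with zero probability on the value $2$ in each coordinate.
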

\begin{proof}
W.l.o.g., assume $i = n$.
Assume Definition~\ref{dfn:dp} holds:
\begin{fw}
\begin{align}
\aln\Fr_{\mc{A}}[\mc{A}(\langle d_1, \ldots, d_{n-1}, d_n\rangle){=}o] \brkaln{XXXXXXXXX} ≤ e^ε * \Fr_{\mc{A}}[\mc{A}(\langle d_1, \ldots, d_{n-1}, d'_n\rangle){=}o] \ndss{\notag}
\end{align}
\end{fw}
 for all $o$ in $\mc{O}$, $\langle d_1, \ldots, d_n\rangle$ in $\mc{D}^n$, and $d'_n$ in $\mc{D}$.
This implies that for any $\mc{P}$,
\begin{fw}
\begin{align}
\aln\Fr_{\mc{P}}\left[ \wedge_{i=1}^{n-1} D_i{=}d_i \right] * \Fr_{\mc{A}}[\mc{A}(\langle d_1, \ldots, d_{n-1}, d_n\rangle){=}o] \brkalnaln{}%
≤ e^ε * \Fr_{\mc{P}}\left[ \wedge_{i=1}^{n-1} D_i{=}d_i \right] * \Fr_{\mc{A}}[\mc{A}(\langle d_1, \ldots, d_{n-1}, d'_n\rangle){=}o] \ndss{\notag}
\end{align}
\end{fw}
for all $o$ in $\mc{O}$, $d_1, \ldots, d_n$ in $\mc{D}^n$, and $d'_n$ in $\mc{D}$.
Thus,
\begin{fw}
\ndss{\small}
\begin{align}
\aln\bigsum{\langle d_1,\ldots,d_{n-1}\rangle \in \mc{D}^{n-1}} \Fr_{\mc{P}}\left[ \wedge_{i=1}^{n-1} D_i{=}d_i \right] * \Fr_{\mc{A}}[\mc{A}(\langle d_1, \ldots, d_{n-1}, d_n\rangle){=}o] \brkalnaln{x}≤ \bigsum{\langle d_1,\ldots,d_{n-1}\rangle \in \mc{D}^{n-1}} e^ε * \Fr_{\mc{P}}\left[ \wedge_{i=1}^{n-1} D_i{=}d_i \right] * \Fr_{\mc{A}}[\mc{A}(\langle d_1, \ldots, d_{n-1}, d'_n\rangle){=}o]\ndss{\notag}\\
\aln\bigsum{\langle d_1,\ldots,d_{n-1}\rangle \in \mc{D}^{n-1}} \Fr_{\mc{P}}\left[ \wedge_{i=1}^{n-1} D_i{=}d_i \right] * \Fr_{\mc{A}}[\mc{A}(\langle d_1, \ldots, d_{n-1}, d_n\rangle){=}o] \brkalnaln{x}≤ e^ε \tufte{*} \bigsum{\langle d_1,\ldots,d_{n-1}\rangle \in \mc{D}^{n-1}} \Fr_{\mc{P}}\left[ \wedge_{i=1}^{n-1} D_i{=}d_i \right] * \Fr_{\mc{A}}[\mc{A}(\langle d_1, \ldots, d_{n-1}, d'_n \rangle){=}o]\ndss{\notag}\\
\aln\Fr_{\mc{M}^{\mc{A}},\mc{P}}[O{=}o \given \doo(D_n{=}d_n)] \tufte{&}≤ e^ε * \Fr_{\mc{M}^{\mc{A}},\mc{P}}[O{=}o \given \doo(D_n{=}d'_n)]\ndss{\notag}
\end{align}
\end{fw}
where the last line follows from Lemma~\ref{lem:doing-on-one} in Appendix~\ref{app:causation-details}.

Definition~\ref{chr:dp-spc} is, however, weaker than \DP{}.
Consider the case of a database holding two data points whose value could be $0$, $1$, or $2$.
Suppose the population $\mc{P}$ is such that $\Fr_{\mc{P}}[D_1{=}2] = 0$ and $\Fr_{\mc{P}}[D_2{=}2] = 0$.
Consider an algorithm $\mc{A}$ such that
\begin{fw}
\begin{align}
\Fr_{\mc{A}}[\mc{A}(\langle 2,2\rangle){=}0] &= 1   & \Fr_{\mc{A}}[\mc{A}(\langle 2,2\rangle){=}1] &= 0 \ndss{\notag}\\
\Fr_{\mc{A}}[\mc{A}(\langle d_1,d_2\rangle){=}0] &= 1/2 & \Fr_{\mc{A}}[\mc{A}(\langle d_1,d_2\rangle){=}1] &= 1/2
\ndss{\notag}
\tufte{&&\text{when $d_1 \neq 2$ or $d_2 \neq 2$}}%
\end{align}
\ndss{when $d_1 \neq 2$ or $d_2 \neq 2$.}
\end{fw}
The algorithm does not satisfy Definition~\ref{dfn:dp} due to its behavior when both of the inputs are $2$.
However, using Lemma~\ref{lem:doing-on-one} in Appendix~\ref{app:causation-details},
\begin{DIFnomarkup}
\end{DIFnomarkup}
\[ \Fr_{\mc{M}^{\mc{A}}, \mc{P}}[O{=}o \given \doo(D_1{=}d'_1)] = 1/2 \]
for all $o$ and $d'_1$ since $\Fr_{\mc{P}}[D_2{=}2] = 0$.
A similar result holds switching the roles of $D_1$ and $D_2$.
Thus, the algorithm satisfies Definition~\ref{chr:dp-spc} for $\mc{P}$ but not Definition~\ref{dfn:dp}.
\end{proof}

Despite being only implied by, not equivalent to, \DP{}, Definition~\ref{chr:dp-spc} captures the intuition behind
the characterization~$(*)$ of \DP{} that
``changing a single individual’s data in the database leads to a small change in the distribution on outputs''~\cite[p.\,2]{kasiviswanathan14jpc}.
To get an equivalence, we can quantify over all populations as we did to get an equivalence for association, but this time we need not worry about zero-probability data points or independence.
This simplifies the definition and makes it a more natural characterization of \DP{}.

\begin{view}[Universal Data-point Intervention D.P.]%
\label{view:dp-spca}
A randomized algorithm $\mc{A}$ is
\emph{$ε$-differentially private as universal intervention on a data point}
if
for all population distributions $\mc{P}$,
for all $i$, for all data points
$d_i$ and $d'_i$ in $\mc{D}$,
and for all output values $o$,
\begin{align}
\Fr_{\mc{M}^{\mc{A}}, \mc{P}}[O{=}o \given \doo(D_i{=}d_i)] ≤ e^ε \Fr_{\mc{M}^{\mc{A}}, \mc{P}}[O{=}o \given \doo(D_i{=}d'_i)] \ndss{\notag}
\end{align}
where $O := \mc{A}(D)$ and $D := \langle D_1, \ldots, D_n\rangle$.
\end{view}

\begin{prpview}\label{prpview:dp-spca}
Definitions~\ref{dfn:dp} and~\ref{view:dp-spca} are equivalent.
\end{prpview}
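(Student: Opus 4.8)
The plan is to prove the two implications separately, reusing the single-population result of Proposition~\ref{prpchr:dp-spc} for the forward direction and exploiting a degenerate (point-mass) population for the converse. The forward direction, that Definition~\ref{dfn:dp} implies Definition~\ref{view:dp-spca}, is essentially free: Proposition~\ref{prpchr:dp-spc} already shows that differential privacy implies Definition~\ref{chr:dp-spc} for an arbitrary fixed population $\mc{P}$, and that argument imposes no condition whatsoever on $\mc{P}$. Since Definition~\ref{view:dp-spca} is exactly Definition~\ref{chr:dp-spc} quantified over all population distributions, the same reasoning applied simultaneously to every $\mc{P}$ yields Definition~\ref{view:dp-spca}. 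So I would dispatch this direction in a sentence.

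The converse, that Definition~\ref{view:dp-spca} implies Definition~\ref{dfn:dp}, is where the universal quantification over populations does the real work, and it is the step I expect to carry the weight. Fix an index $i$ (w.l.o.g. $i = n$), data points $d_1,\ldots,d_n$, an alternate value $d'_n$, and an output $o$; the goal is to recover the bare inequality $\Fr_{\mc{A}}[\mc{A}(d_1,\ldots,d_n){=}o] \le e^\epsilon * \Fr_{\mc{A}}[\mc{A}(d_1,\ldots,d_{n-1},d'_n){=}o]$. The idea is to instantiate Definition~\ref{view:dp-spca} at the population $\mc{P}$ under which $D_1{=}d_1,\ldots,D_{n-1}{=}d_{n-1}$ hold with probability one (the marginal of $D_n$ being irrelevant, since we intervene on it). By Lemma~\ref{lem:doing-on-one}, the interventional distribution $\Fr_{\mc{P},\mc{A}}[O{=}o \given \doo(D_n{=}d_n)]$ expands as $\sum_{\langle d_1,\ldots,d_{n-1}\rangle} \Fr_{\mc{P}}[\wedge_{j=1}^{n-1} D_j{=}d_j]\, \Fr_{\mc{A}}[\mc{A}(d_1,\ldots,d_{n-1},d_n){=}o]$, and under the chosen point-mass $\mc{P}$ every weight vanishes except the one for the designated assignment, collapsing the sum to the single term $\Fr_{\mc{A}}[\mc{A}(d_1,\ldots,d_{n-1},d_n){=}o]$; the same happens for $d'_n$. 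The inequality that Definition~\ref{view:dp-spca} guarantees at this $\mc{P}$ is then precisely the desired differential-privacy inequality, and since $i$, the $d_j$, $d'_n$, and $o$ were arbitrary, Definition~\ref{dfn:dp} follows.

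The one point deserving care, and the reason this works more cleanly than the associative analogue, is that the point-mass construction is legitimate only because $\doo$ is defined even on zero-probability values (as noted just after the sub-model definition). In the associative setting of Proposition~\ref{prpview:dp-utc}, a degenerate population would leave the required conditional probabilities undefined, forcing the use of an i.i.d. full-support distribution and a non-zero-probability side condition; here the intervention flushes out the behaviour of $\mc{A}$ on the fixed surrounding data points regardless of their probability, so no independence assumption and no support condition is needed. This is the same phenomenon that powered Proposition~\ref{prpchr:dp-spc}, now leveraged to upgrade the one-directional implication into a full equivalence.
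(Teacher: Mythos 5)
Your proof is correct and follows essentially the same route as the paper's: the forward direction is dispatched by Proposition~\ref{prpchr:dp-spc}, and the converse instantiates Definition~\ref{view:dp-spca} at a point-mass population over $D_1,\ldots,D_{n-1}$ so that the sum from Lemma~\ref{lem:doing-on-one} collapses to the single desired term. Your closing observation about why the point-mass construction is legitimate (interventions being defined on zero-probability values) matches the paper's own discussion of why the causal equivalence avoids the side conditions of the associative one.
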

\begin{proof}
That Definition~\ref{dfn:dp} implies \ref{view:dp-spca} follows from Proposition~\ref{prpchr:dp-spc}.

Assume Definition~\ref{view:dp-spca} holds.
W.l.o.g.\@, assume $i = n$.
Then, for all $\mc{P}$, $o$ in $\mc{O}$, and $d'_n$ in $\mc{D}$,
\begin{fw}
\begin{align}
\aln\Fr_{\mc{M}^{\mc{A}}, \mc{P}}[O{=}o \given \doo(D_i{=}d_i)] ≤ e^ε * \Fr_{\mc{M}^{\mc{A}}, \mc{P}}[O{=}o \given \doo(D_i{=}d'_i)] \ndss{\notag}\\
\aln\bigsum{\langle d_1,\ldots,d_{n-1}\rangle \in \mc{D}^{n-1}} \Fr_{\mc{P}}\left[ \wedge_{i=1}^{n-1} D_i{=}d_i \right] * \Fr_{\mc{A}}[\mc{A}(\langle d_1, \ldots, d_{n-1}, d_n\rangle){=}o]  \brkalnaln{}≤ e^ε  \bigsum{\langle d_1,\ldots,d_{n-1}\rangle \in \mc{D}^{n-1}} \Fr_{\mc{P}}\left[ \wedge_{i=1}^{n-1} D_i{=}d_i \right]  \Fr_{\mc{A}}[\mc{A}(\langle d_1, \ldots, d_{n-1}, d'_n\rangle){=}o]\label{eqn:dp-spca-expanded}
\end{align}
\end{fw}
follows from Lemma~\ref{lem:doing-on-one} in Appendix~\ref{app:causation-details}.

For any $d^\dagger_1, \ldots, d^\dagger_{n-1}$ in $\mc{D}^{n-1}$,
let $\mc{P}^{d^\dagger_1, \ldots, d^\dagger_{n-1}}$ be such that
\begin{align}
\Fr_{\mc{P}^{d^\dagger_1, \ldots, d^\dagger_{n-1}}}\left[ \wedge_{i=1}^{n-1} D_i{=}d^\dagger_i \right] &= 1
\end{align}

For any $d^\dagger_1, \ldots, d^\dagger_n$ in $\mc{D}^n$ and $d'_n$ in $\mc{D}$,
\eqref{eqn:dp-spca-expanded} implies
{%
\begin{align}
&\bigsum{\langle d_1,\ldots,d_{n-1}\rangle \in \mc{D}^{n-1}} \Fr_{\mc{P}^{d^\dagger_1,\ldots,d^\dagger_{n-1}}}\left[ \wedge_{i=1}^{n-1} D_i{=}d_i \right] \Fr_{\mc{A}}[\mc{A}(\langle d_1, \smldots, d_{n-1}, d^\dagger_n\rangle){=}o] \notag\\
&≤ e^ε \bigsum{\langle d_1,\ldots,d_{n-1}\rangle \in \mc{D}^{n-1}} \Fr_{\mc{P}^{d^\dagger_1,\ldots,d^\dagger_{n-1}}}\left[ \wedge_{i=1}^{n-1} D_i{=}d_i \right] \ndss{\notag\\[-2.5ex] & \phantom{XXXXXXXXXXXX}} * \Fr_{\mc{A}}[\mc{A}(\langle d_1, \smldots, d_{n-1}, d'_n\rangle){=}o] \ndss{\notag}
\end{align}
}
Thus,
\begin{align}
\aln\Fr_{\mc{A}}[\mc{A}(\langle d^\dagger_1, \ldots, d^\dagger_{n-1}, d^\dagger_n\rangle){=}o]  \brkalnaln{XXXXXXXX}≤ e^ε \Fr_{\mc{A}}[\mc{A}(\langle d^\dagger_1, \ldots, d^\dagger_{n-1}, d'_n\rangle){=}o] \ndss{\notag}
\end{align}
since both sides has a non-zero probability for
\[ \Fr_{\mc{P}^{d^\dagger_1,\ldots,d^\dagger_{n-1}}}\left[ \wedge_{i=1}^{n-1} D_i{=}d_i \right] \]
at only the\tufte{ single} sequence of data point values $d^\dagger_1, \ldots, d^\dagger_{n-1}$.
\end{proof}

\section{Bounding Effects: Generalizing D.P., Understanding Alternatives}
\label{sec:brp}

To recap, we have shown that reasoning about \DP{} as a
causal property is more straightforward than reasoning about it
as an associative property.
Still, one might wonder, Why express \DP{} in either
form? Why not just stick with its even simpler expression in
terms of functions in Definition~\ref{dfn:dp}?

In this section, we show what is gained by the causal view.  We show
that \DP{} bounds a general notion of
\emph{effect size}.
Essentially, \DP{} limits the causal consequences of a decision
to contribute data to a data set.
If the consequences are small,
then an individual will need less encouragement (e.g., financial incentives)
to set aside privacy concerns.

We show that this general notion can also capture alternative privacy
definitions, including some arising from concerns over dependent data points.
A common causal framework allows us to precisely compare these definitions.

\subsection{Bounded Relative Probability (\textsc{brp})}

Generalizing from the decision to participate in a data set, we define a
more general notation for any two random variables $X$ and $Y$.  To do so, we
need a description of how $X$ and $Y$ relate to one another.  Recall that a
probabilistic SEM $\langle \mc{M}, \mc{P}\rangle$ shows the causal and
statistical relations between random variables by providing a list of
structural equations $\mc{M}$ and a distribution $\mc{P}$ over variables not defined
in terms of others (exogenous variables).  (See
Appendix~\ref{app:causation-details} for details.)

We will measure the size of the effects of $X$ on $Y$ using
\emph{relative probabilities}, better known as \emph{relative risk} and
as \emph{risk ratio} with clinical studies of risks in mind.
For three (binary) propositions $\rho$, $\phi$, and $\psi$,
let
\begin{align}
\rp_{\olangle \mc{M},\mc{P}\orangle}(\rho, \phi, \psi) &= \frac{\Fr_{\olangle \mc{M},\mc{P}\orangle}[\rho \given \doo(\phi)]}{\Fr_{\olangle \mc{M},\mc{P}\orangle}[\rho \given \doo(\psi)]} \ndss{\notag}
\end{align}
denote the relative probability.
(Some authors also allow using conditioning instead of interventions.)
For two random variables $X$ and $Y$, we can characterize the maximum
effect of $X$ on $Y$ as
\begin{align}
\mrp_{\mc{M},\mc{P}}(Y, X) &= \max_{y, x_1, x_2} \rp_{\mc{M},\mc{P}}(Y{=}y, X{=}x_1, X{=}x_2)  \ndss{\notag}
\end{align}
Expanding these definitions out shows that $\epsilon$-differential
privacy places a bound on the maximum of the maximum relative
probabilities:
\begin{align}
\max_{\mc{P}, i} \mrp_{\mc{M}, \mc{P}}(O, D_i) \leq e^\epsilon  \ndss{\notag}
\end{align}
where $\mc{M}$ describes the differentially private algorithm $\mc{A}$.
Note that our use of maximization is similar Yang~et~al.~\cite[p.\,749, Def.~4]{yang15sigmod}, which we
quote in Section~\ref{sec:dp-associative}.

With this in mind, we propose to use $\mrp$ for a general purpose
effect-size restriction:
\begin{dfn}[BRP]
A causal system described by $\mc{M}$ has
\emph{$\epsilon$-bounded relative probability (\textsc{brp}) for $X$ to $Y$}
iff
\[ \max_{\mc{P}} \mrp_{\mc{M}, \mc{P}}(Y, X) \leq e^\epsilon \]
\end{dfn}

Differential privacy is equivalent to requiring $\epsilon$-\textsc{brp}
for all data points $D_i$.

\subsection{Composition}

\textsc{Brp} enjoys many of the same properties as \DP{}.
Recall that \DP{} has additive sequential composition for using two differentially private algorithms one after the next, even if the second is selected using the output of the first~\cite{mcsherry2007mechanism}.
Similarly, \textsc{brp} has additive sequential composition for two random variables.

To model the second output $Z$ depending upon the first $Y$, but not the other way around,
we say random variables $X$, $Y$, and $Z$ are \emph{in sequence} if
$X$ may affect $Y$ and $Z$, and $Y$ may affect $Z$, but $Z$
may not affect $X$ nor $Y$, and $Y$ may not affect $X$.
That is,
\begin{center}
\begin{DIFnomarkup}
\begin{tikzcd}[math mode=true, row sep=2em, column sep=3em]
   X \arrow[dr]\arrow[r] & Y\arrow[d] \\
                         & Z
\end{tikzcd}
\end{DIFnomarkup}
\end{center}

To model that the second output $Z$ could be computed with one of any of a set of algorithms but that each of algorithm has a bounded effect from $X$ to $Z$, we look at $Z$'s behavior in sub-models $\mc{M}[Y := y]$ where each setting of $Y$ corresponds to a selecting one available algorithm.

\begin{thm}\label{thm:brp-compo}
For any SEM $\mc{M}$ such that $X$, $Y$, and $Z$ are in sequence and the parents of $Z$ are $\{X, Y\}$,
if $X$ has $\epsilon_1$-\textsc{brp} to $Y$ in $\mc{M}$
and $\epsilon_2$-\textsc{brp} to $Z$ in $\mc{M}[Y := y]$ for all $y$ in $\mc{Y}$, then
$X$ has $(\epsilon_1+\epsilon_2)$-\textsc{brp} to $\langle Y, Z\rangle$ in $\mc{M}$.
\end{thm}
\begin{proof}
Consider any probability distribution $\mc{P}$, $x$ and $x'$ in $\mc{X}$, $y$ in $\mc{Y}$, and $z$ in $\mc{Z}$.
Since the effect of $X$ on $Y$ is bounded by $\epsilon_1$-\textsc{brp},
\[ \Fr_{\mc{M},\mc{P}}[Y{=}y \given \doo(X{=}x)] \leq e^{\epsilon_1} \Fr_{\mc{M},\mc{P}}[Y{=}y \given \doo(X{=}x')] \]

Since the parents of $Z$ are $\{X, Y\}$, Pearl's Property~1~\cite[p.\,24]{pearl09book}
shows that
for any $y$ such that $\Fr_{\mc{M},\mc{P}}[Y{=}y] > 0$,
\begin{multline*}
\Fr_{\mc{M},\mc{P}}[Z{=}z \given Y{=}y, \doo(X{=}x)] \brk= \Fr_{\mc{M},\mc{P}}[Z{=}z \given \doo(Y{=}y), \doo(X{=}x)]
\end{multline*}
Since there's $\epsilon_2$-\textsc{brp} from $X$ to $Z$ in $M[Y := y]$ for all $y$, this implies that
\begin{multline*}
\Fr_{\mc{M},\mc{P}}[Z{=}z \given Y{=}y, \doo(X{=}x)] \brk\leq e^{\epsilon_2} \Fr_{\mc{M},\mc{P}}[Z{=}z \given Y{=}y, \doo(X{=}x')]
\end{multline*}

Thus, %
\begin{align*}
&\Fr_{\mc{M},\mc{P}}[\langle Y, Z\rangle = \langle y, z\rangle \given \doo(X{=}x)]\\
&= \Fr_{\mc{M},\mc{P}}[Z {=} z \given Y{=}y, \doo(X{=}x)] \tufte{*} \Fr_{\mc{M},\mc{P}}[Y{=}y \given \doo(X{=}x)]\\
&\leq e^{\epsilon_2} \Fr_{\mc{M},\mc{P}}[Z {=} z \given Y{=}y, \doo(X{=}x')]\brkaln{XXXXXXXXXXX} * e^{\epsilon_1} \Fr_{\mc{M},\mc{P}}[Y{=}y \given \doo(X{=}x')]\\
&= e^{\epsilon_1+\epsilon_2} \Fr_{\mc{M},\mc{P}}[\langle Y, Z\rangle = \langle y, z\rangle \given \doo(X{=}x')] \qedhere
\end{align*}
\end{proof}

We can generalize this theorem for $Z$ having additional parents by requiring \textsc{brp} for all of their values as well.

The special case of this theorem where $\epsilon_2 = 0$ is known as the \emph{postprocessing} condition:
\begin{center}
\begin{tikzcd}[math mode=true, row sep=2em, column sep=3em]
  X \arrow[r, "\epsilon"] & Y \arrow[r] & Z
\end{tikzcd}
\end{center}
For this causal diagram, Theorem~\ref{thm:brp-compo} ensures that if the arrow from $X$ to $Y$ is $\epsilon$-\textsc{brp},
then any subsequent consequence $Z$ of $Y$ is also going to be $\epsilon$-\textsc{brp}.
This captures the central intuition behind \DP{} and $\textsc{brp}$ that
they limit any downstream causal consequences of a variable $X$.

\subsection{Application}

While the explicit causal reasoning in \textsc{brp} can sharpen our
intuitions about privacy, \textsc{brp} is not itself a privacy
definition.
Only some choices of variables to bound yield reasonable privacy
guarantees.
Below, we use \textsc{brp} to express some of well known variations of \DP{}.
Doing so both shows some reasonable ways of using \textsc{brp} to provide privacy guarantees and demonstrates that \textsc{brp} provides a common framework for precisely stating and comparing these variations.

First, consider the randomized response method of providing privacy in which each survey
participant adds noise to his own response before
responding~\cite{warner65asa}.
Let each person's actual attribute be $R_i$, let the noisy response he provides be $D_i$, and let $O$ be the output computed from all the $D_i$.
Unlike with (standard) \DP{}, the causal path from $D_i$ to $O$ has unbounded \textsc{brp}, may not
contain any random algorithms, and misses the privacy protection
altogether.
Similarly, the path from $R_i$ and $O$ has unbounded \textsc{brp} due to the possibility of the $R_i$ having effects upon one another.
However, the randomized response method does ensure $\epsilon$-\textsc{brp} from $R_i$ to $D_i$ for all $i$ where $\epsilon$ depends upon the amount of noise added to $D_i$.

Second, we consider \emph{group privacy}, the idea that a group of individuals may be so closely related that their privacy is intertwined.
Differential privacy approaches group privacy by summing the privacy losses, measured in terms of $\epsilon$, of each individual in the group~\cite[p.\,9]{dwork06icalp}.
Similarly, we can add the relative probabilities of multiple random variables to get a total effect size.
Alternately, \textsc{brp} can easily be extended to measure simultaneous joint interventions by using multiple instances of the $\doo$ operator.
The total effect size may be larger than the joint effect size since,
in cases where the intervened upon variables affect one another, interventions on a downstream variable can mask interventions on its parents.
Returning to the example of Section~\ref{sec:mot-example}, the total effect for both Ada's attribute $R_1$ and Byron's $R_2$ is $3\epsilon$ with $2\epsilon$ of that coming from $R_1$.
However, the joint effect is $2\epsilon$ since $R_1$ achieved half of its effect via $R_2$.
In examples like this where the variables correspond to different moral entities, the total effect size strikes us as more reasonable since it accounts for both Ada and Byron experiencing a privacy loss.
If on the other hand, the variables correspond to a single topic about a single person, such as weight and waist size, then the joint effect size seems more reasonable.
However, we see this choice as under explored since it does not emerge for \DP{} given that data points cannot not affect one another.

Third,
we consider a line of papers providing definitions of privacy that
account for dependencies between data points, but which are ambiguous about association versus causation~\cite{chen14vldbj,zhu15tifs,liu16ndss}.
For example, Liu~et~al.\@ use the word ``cause''
in a central definition of their work~\cite[Def.\,3]{liu16ndss},
but do no causal modeling, instead using a joint
probability distribution to model just associations %
in their adversary model~\cite[\S{}3]{liu16ndss}.
Using causal modeling and \textsc{brp}
would allow them to actually model causation instead of approximating it with associations, or, if associations really is what they wish to model, would provide a foil making their goals more clear.

Fourth, as a more complex example, Kifer and Machanavajjhala consider applying
\DP{} to social networks~\cite[\S{}3]{kifer11sigmod}.
They note that \DP{} applied to a network is typically
taken to mean either considering nodes or edges labeled with an individual's id $i$ in the network as that individual's data
point $D_i$, but that participation in a social network is likely to
leave far more evidence than just %
those nodes and edges.
They consider an example in which Bob joins a network and introduces
Alice and Charlie to one another, leading them to create an edge
between them that does not involve Bob.
Arguably, protecting Bob's privacy requires counting this edge as Bob's as well despite
neither edge nor node \DP{} doing so.
To capture this requirement, they distinguish between differential
privacy's deleting of data points from a data set and their desire to
``hide the \emph{evidence of participation}''~\cite[\S{}2.2.1]{kifer11sigmod}.

Because ``It is difficult to formulate a general and
formal definition for what evidence of participation
means''~\cite[\S{}3, p.\,5]{kifer11sigmod}, they use correlations
in its place for modeling public health and census records~\cite[\S\S{}2.1.3, 2.2, 4.1 \& 4.3.1]{kifer11sigmod}.
However, for modeling social networks, they use statistical models that
they interpret as providing ``a measure of the causal influence of Bob's edge'', that is, informal causal models~\cite[\S{}3, p.\,6]{kifer11sigmod}.

We believe that the causal framework presented herein provides the
necessary mathematical tools to precisely reason about evidence of participation.
Causal models would allow them to precisely state which
aspects of the system they wish to protect, for example, by requiring that
Bob's joining the network should have a bounded effect upon a data
release.
While accurately modeling a social process is a difficult task, at least
the requirement is clearly stated, allowing us to return to empirical
work.
Furthermore, such formalism can allow for multiple models to be
considered and we can demand privacy under each of them, and erring on
the side of safety by over-estimating effect sizes remains an option.

Finally, causal modeling can make the choices between privacy notions more clear.
The distinction between \emph{direct} and \emph{indirect}
effects~\cite{pearl01uai} can model the difference between node
privacy, which only captures the direct effects of joining a social network, and all of
the evidence of participation, which includes hard-to-model indirect
effects.
Edge privacy captures the direct effect of posting additional content.
Given that Facebook has reached near universal membership but worries about disengagement, this effect might be the more concerning one from paractical perspective.

\section{Restrictions on Knowledge}
\label{sec:knowledge}

Privacy is often thought of as preventing an adversary from learning sensitive information.
To make this intuition precise, we can model an adversary's beliefs
using Bayesian probabilities, or \emph{credences}.
We denote them with $\Cr$, instead of $\Fr$, which
we have been using to denote
natural frequencies over outcomes without regard to any agent's beliefs. %
We denote the adversary's background
knowledge as $B$. The knowledge of an adversary about the database
$D$ after observing the output can be expressed as $\Cr[D{=}d \given O{=}o, B]$. A natural
privacy property, termed \emph{statistical nondisclosure} by Dalenius~\cite{dalenius77statistik},
requires that
$\Cr[D{=}d \given O{=}o, B] = \Cr[D{=}d \given B]$, that is, that the beliefs
about the database before and after observing the output are the same.

This requirement limiting the difference between prior and posterior
beliefs
has been shown to be impossible to
achieve under arbitrary background knowledge by Dwork and Naor, even for approximate relaxations of
statistical nondisclosure, as long as the output provides some information~\cite{dwork08jpc}.
As \DP{} also falls under the purview of
this impossibility result, it only provides this associative guarantee under restrictive
background knowledge assumptions, such as independent data points or strong adversaries.
To see the need for assumptions, consider that statistical nondisclosure implies $0$-Bayesian$_0$ Differential Privacy (Def.\,\ref{view:dp-ucb}) since both are equivalent to requiring independence between $D$ and $O$ in the case where the adversary's background information is the true distribution over data points.
We believe such a need underlies the view that \DP{} only works with assumptions (Appendix~\ref{app:history-camp2}).
Kasiviswanathan and Smith's Semantic Privacy
is a property about the adversary's ability to do inferences
that does not require such assumptions~\cite{kasiviswanathan14jpc}.
It requires that the probability that
the adversary assigns to the input data points does not change much whether
an individual $i$ provides data or not.
The probability assigned by the adversary when each person provides his data point is
\begin{DIFnomarkup}
\begin{align}
\Cr[D{=}d \given O{=}o, B]
&= \frac{\Fr_{\mc{A}}[\mc{A}(d){=}o]*\Cr[D{=}d \given B]}{\sum_{d'} \Fr_{\mc{A}}[\mc{A}(d'){=}o] * \Cr[D{=}d' \given B]} \ndss{\notag}
\end{align}
\end{DIFnomarkup}
where $D{=}d$ is shorthand for $\bigwedge_{j = 1}^{n} D_j{=}d_j$ with $d = \langle d_1, \ldots, d_n\rangle$.
The probability where person $i$ does not provide data or provides fake data is
\begin{DIFnomarkup}
\[
\frac{\Fr_{\mc{A}}[\mc{A}(d_{-i}d'_i){=}o]*\Cr[D{=}d \given B]}{\sum_{d'} \Fr_{\mc{A}}[\mc{A}(d_{-i}d'_i){=}o] * \Cr[D{=}d' \given B]}
\]
\end{DIFnomarkup}
where
$d'_i$ is the value (possibly the null value) provided instead of the real value
and $d_{-1}d'_i$ is shorthand for $d$ with its $i$th component replaced with $d'_i$.
While we leave fully formalizing the combining of Bayesian credences and frequentist probabilities to future work, intuitively, this probability is $\Cr_{\mc{M}^{\mc{A}},\mc{P}}[D{=}d \given O{=}o, \doo(D_i{=}d'_i), B]$ in our causal notation.

Kasiviswanathan and Smith prove that \DP{} and
Semantic Privacy are closely
related~\cite[Thm.\,2.2]{kasiviswanathan14jpc}.
In essence, they show that \DP{} ensures that
$\Cr[D{=}d \given O{=}o, B]$
and
$\Cr[D{=}d \given O{=}o, \doo(D_i {=} d'_i), B]$
are close in nearly the same sense as it ensures that
$\Fr[O{=}o]$
and
$\Fr[O{=}o \given \doo(D_i{=}d'_i)]$
are close.
That is, it guarantees that an adversary's beliefs will not change
much relative to whether you decide to provide data or not, providing an inference-based view of \DP{}.

To gain intuition about these results, let us consider
the findings of Wang and Kosinski~\cite{wang2017gaydar}, which
show the possibility of training a neural network to predict a person's sexual orientation from a photo of their face.
If this model had been produced with \DP{}, then each study participant would know that their participation had little to do with the model's final form or success.
However, inferential threats would remain.
An adversary can use the model and a photo of an individual to infer the individual's sexual orientation, whether that individual participated in the study or not.
Less obviously, an adversary might have some background knowledge allowing it to repurpose the model to predict people's risks of certain health conditions.
Such difficult to predict associations may already be used for marketing~\cite{hill12forbes} (cf.\,\cite{piatetsky14kdnuggets}).

An individual facing the option of participating in such a study
may attempt to reason about how likely such repurposing is.
Doing so requires the difficult task of characterizing the adversary's background knowledge since Dwork and Naor's proof shows that the possibility cannot be categorically eliminated~\cite{dwork08jpc}.
Furthermore, if the individual decides that the study is too risky, merely declining to participate will do little to mitigate the risk since \DP{} ensures that the individual's data would have had little effect on the model.
Rather, the truly concerned individual would have to lobby others to not participate. %
For this reason, both the causal and associative views of privacy have
their uses, with the causal view being relevant to a single
potential participant's choice and the associative, to the
participants collectively.
One can debate whether such collective properties are privacy per se or some other value since it goes beyond protecting personal data~\cite{mcsherry16blog1}.

\section{Conclusion and Further Implications}
\label{sec:conc}

Although it is possible to view \DP{} as an associative property with an independence assumption,
we have shown that it is cleaner to view \DP{} as a causal property without such an assumption.
We believe that this difference in goals helps to explain why one line of research claims that \DP{} requires an assumption of independence while another line denies it: the assumption is not required but does yield stronger conclusions.

We believe these results have implications beyond explaining the differences between these two lines.
Having shown a precise sense in which \DP{} is a causal property, we can use the results of statistics, experimental design, and science about causation while studying \DP{}.
For example, various papers have attempted to reverse engineer or test whether a system has \DP{}~\cite{tang17arxiv,ding18ccs,bichsel18ccs}.
Authors of follow up works may leverage by pre-existing experimental methods and statistical analyses for measuring effect sizes
that apply with or without access to causal models.
\tufte{In more detail,
Tang et al.\ studied Apple's claim that MacOS uses \DP{} and attempted to reverse engineer the degree $\epsilon$ of privacy used by Apple from the compiled code and configuration files~\cite{tang17arxiv}.
 Consider a version of this problem in which the system purportedly providing \DP{} is a server controlled by some other entity.
 In this case, the absence of code and configuration files necessitates a blackbox investigation of the system.
 From the outside, we can study whether such a system has \DP{} as advertised by using experiments and significance testing~\cite{fisher35doe} similar to how Tschantz et al.\@'s prior work uses it for studying information flow as a causal property~\cite{tschantz15csf}.  %
 Indeed, Ding~et~al.\@ recently used significance testing~\cite{ding18ccs}
 and  Bichsel~et~al.\@ confidence intervals~\cite{bichsel18ccs}
 to find violations of \DP{},
 without naming their connections to causation or effect sizes.
 Alternately, using the associative view, we could approach the problem using observational studies.%
}

In the opposite direction, the
natural sciences can use \DP{} as an effect-size metric, inheriting all the pleasing properties known of \DP{}.
For example, \DP{} composes cleanly with itself, both in sequence and in parallel~\cite{mcsherry09sigmod}.
The same results would also apply to the effect-size metric that \DP{} suggests.

Finally, showing that \DP{} is in essence a measure of effect sizes explains why it, or properties based upon it, has shown up in areas other than privacy, including fairness~\cite{dwork12itcs}, ensuring statistical validity~\cite{dwork15stoc,dwork15nips,dwork15science}, and adversarial machine learning~\cite{lecuyer18arxiv}.
While it may be surprising that privacy is related to such a diverse set of areas, it is not surprising that causation is, given the central role the concept plays in science.
What is actually happening is that causal reasoning is making its importance felt in each of these areas, including in privacy.
That it has implicitly shown up in at least four areas of research suggests that causal reasoning should play a more explicit role in computer science.

\ndss{\paragraph*{Acknowledgements}
We thank
Arthur Azevedo de Amorim,
Deepak Garg,
Ashwin Machanavajjhala,
and Frank McSherry
for comments on this work.
We received funding from
the NSF
(Grants 1514509, %
1704845, %
and 1704985) %
and DARPA (FA8750-16-2-0287). %
The opinions in this paper are those of the authors and do not
necessarily reflect the opinions of any funding sponsor or the United
States Government.}
\tufte{\paragraph{Acknowledgements.}
We thank
Arthur Azevedo de Amorim,
Ashwin Machanavajjhala,
and Frank McSherry
for comments on earlier versions of this work.
We thank Deepak Garg for conversations about causation.
We gratefully acknowledge funding support from
the National Science Foundation
(Grants 1514509, %
1704845, %
and 1704985) %
and DARPA (FA8750-16-2-0287). %
The opinions in this paper are those of the authors and do not
necessarily reflect the opinions of any funding sponsor or the United
States Government.}

\appendix
\tufte{\clearpage\centerline{\Large\bf Appendices}\nopagebreak}

\section{Two Views of Differential Privacy:\tufte{\\} A Brief History}
\label{app:history}

Throughout this paper, we have mentioned two lines of work about \DP{}.
The historically first line, associated with its creators, views \DP{} as not requiring additional assumptions, such as independent data points or an adversary that already knows all but one data point.
The historically second line views such assumptions as needed by or implicit in \DP{}.
Here, we briefly recount the history of the two lines. 
\tufte{We start with their historical antecedents that pre-date \DP{}.  Having not participated in differential privacy's formative years, we welcome refinements to our account.}
\subsection{Before Differential Privacy}

The idea of a precise framework for mathematically modeling the conditions under which an adversary does not learn something perhaps starts with Shannon's work on \emph{perfect security} in 1949~\cite{shannon49bell}.
In 1984, this idea led to Goldwasser and Silvio's cryptographic notion of \emph{semantic security}, which relaxes Shannon's requirement by applying to only polynomially computationally bounded adversaries~\cite{goldwasser84jcss}
(with antecedents in their earlier 1982 work~\cite{goldwasser82stoc}).

Apparently independently, the statistics community also considered limiting what an adversary would learn.
One early work cited by \DP{} papers (e.g.,~\cite{dwork06icalp}) is Dalenius's 1977 paper on \emph{statistical disclosure}~\cite{dalenius77statistik}.
Dalenius defines statistical disclosures in terms of a \emph{frame} of \emph{objects}, for example, a sampled population of people~\cite[\S 4.1]{dalenius77statistik}.
The objects have data related to them~\cite[\S 4.2]{dalenius77statistik}.
A survey releases some \emph{statistics} over such data for the purpose of fulfilling some \emph{objective}~\cite[\S 4.3]{dalenius77statistik}.
Finally, the adversary may have access to \emph{extra-objective data}, which is auxiliary information other than the statistics released as part of the survey.
Dalenius defines a \emph{statistical disclosure} as follows~\cite[\S 5]{dalenius77statistik}:
\begin{quote}
If the release of the statistics $S$ makes it possible to determine the value $D_K$ more accurately than is possible without access to $S$, a disclosure has taken place [\ldots]
\end{quote}
where $D_K$ is the value of the attribute $D$ held by the object (e.g., person) $K$.
The attribute $D$ and object $K$ may be used in the computation of $S$ or not.
The extra-objective data may be used in computing the estimate of $D_K$.

As pointed out by Dwork~\cite{dwork06icalp}, Dalenius's work is both similar to and different from the aforementioned work on cryptosystems.
The most obvious difference is looking at databases and statistics instead of cryptosystems and messages.
However, the more significant difference is the presence of the \emph{objective} with a \emph{benefit}, or the need for \emph{utility} in Dwork's nomenclature.
That is, the released statistics is to convey some information to the public;
whereas, the encrypted message, the cryptosystem's analog to the statistic, only needs to convey information to the intended recipient.
Dalenius recognized that this additional need makes the elimination of statistical disclosures ``not operationally feasible'' and ``would place unreasonable restrictions on the kind of statistics that can be released''~\cite[\S 18]{dalenius77statistik}.

Even before the statistics work on statistical nondisclosure,
statistical research by S.~L.~Warner in 1965 introduced the \emph{randomized response} method of providing \DP{}~\cite{warner65asa}.
(His work is more similar to the local formulation of \DP{}~\cite{dwork06eurocrypt}.)
The randomized response model and statistical disclosure can be viewed as the prototypes of the first and second lines of reseach respectively, although these early works appear to have had little impact on the actual formation of the lines of reseach over a quarter century later.

\subsection{Differential Privacy}

In March~2006, Dwork, McSherry, Nissim, and Smith presented a paper containing the first modern instance of \DP{} under the name of ``$\epsilon$-indistinguishable''~\cite{dwork06crypto}.
The earliest use of the term ``differential privacy'' comes from a paper by Dwork presented in July~2006~\cite{dwork06icalp}.
This paper of Dwork explicitly rejects the view that \DP{} provides associative or inferential privacy~\cite[p.\,8]{dwork06icalp}:
\begin{quote}
Note that a bad disclosure can still occur [despite \DP{}], but [\DP{}] assures the individual that it will not be the presence of her data that causes it, nor could the disclosure be avoided through any action or inaction on the part of the user.
\end{quote}
and further contains a proof that preventing Dalenius's statistical disclosures while releasing useful statistics is impossible.
(The proof was joint work with Naor, with whom Dwork later further developed the impossibility result~\cite{dwork08jpc}.)
\tufte{
She defines \DP{} as follows, a presentation that implicitly promotes a causal view~\cite[pp.\,8--9]{dwork06icalp}:
 \begin{quote}
 As noted in the example of Terry Gross’ height, an auxiliary information generator with information about someone not even in the database can cause a privacy breach to this person. In order to sidestep this issue we change from absolute guarantees about disclosures to relative ones: any given disclosure will be, within a small multiplicative factor, just as likely whether or not the individual participates in the database. As a consequence, there is a nominally increased risk to the individual in participating, and only nominal gain to be had by concealing or misrepresenting one’s data. Note that a bad disclosure can still occur, but our guarantee assures the individual that it will not be the presence of her data that causes it, nor could the disclosure be avoided through any action or inaction on the part of the user.
 
 \noindent\textbf{Definition 2.} \emph{A randomized function $\mc{K}$ gives \emph{$ε$-differential privacy} if for all data sets $D_1$ and $D_2$ differing on at most one element, and all $S ⊆ \mathit{Range}(\mc{K})$,}
 \begin{align*}
 \Pr[\mc{K}(D_1) ∈ S] &≤ \exp(ε) × \Pr[\mc{K}(D_2) ∈ S] \quad (1)
 \end{align*}
 
 A mechanism $\mc{K}$ satisfying this definition addresses concerns that any participant might have about the leakage of her personal information $x$: even if the participant removed her data from the data set, no outputs (and thus consequences of outputs) would become significantly more or less likely. For example, if the database were to be consulted by an insurance provider before deciding whether or not to insure Terry Gross, then the presence or absence of Terry Gross in the database will not significantly affect her chance of receiving coverage.
\end{quote}}
Later works further expound upon their position~\cite{dwork06eurocrypt,kasiviswanathan08arxiv}.

\subsection{Questions Raised about Differential Privacy}
\label{app:history-camp2}

In 2011, papers started to question whether \DP{} actually provides a meaningful notion of privacy~\cite{cormode11kdd,kifer11sigmod,gehrke11tcc}.
These papers point to the fact that a released statistic can enable inferring sensitive information about a person, similar to the attacks Dalenius wanted to prevent~\cite{dalenius77statistik}, even when that statistic was computed using a differentially private algorithm.
While the earlier work on \DP{} acknowledged this limitation, these papers provide examples where correlations, or more generally associations, between data points can enable inferences that some people might not expect to be possible under \DP{}.
These works kicked off a second line of research (including, e.g., \cite{kifer12pods,kifer14database,he14sigmod,chen14vldbj,zhu15tifs,liu16ndss}) attempting to find stronger definitions that account for such correlations.
In some cases, these papers assert that such inferential threats are violations of privacy and not what people expect of \DP{}.
For example, Liu~et~al.'s abstract states that associations between data points can lead to ``degradation in expected privacy levels''~\cite{liu16ndss}.
The rest of this subsection provides details about these papers.

In 2011, Kifer and Machanavajjhala published a paper stating that the first popularized claim about \DP{} is that ``It makes no assumptions about how data are generated''~\cite[p.\,1]{kifer11sigmod}.
The paper then explains that ``a major criterion for a privacy definition is the following: can it hide the evidence of an individual's participation in the data generating process?''~\cite[p.\,2]{kifer11sigmod}.
It states~\cite[p.\,2]{kifer11sigmod}:
\begin{quote}
We believe that under any reasonable formalization of evidence of participation, such evidence can be encapsulated by exactly one tuple [as done by \DP{}] only when all tuples are independent (but not necessarily generated from the same distribution). We believe this independence assumption is a good rule of thumb when considering the applicability of differential privacy.
\end{quote}
For this reason, the paper goes on to say ``Since evidence of participation requires additional assumptions about the data (as we demonstrate in detail in Sections~3 and~4), this addresses the first popularized claim -- that differential privacy requires no assumptions about the data''~\cite[p.\,2]{kifer11sigmod}.  
From context, we take ``addresses'' to mean \emph{invalidates} since the paper states ``The goal of this paper is to clear up misconceptions about differential privacy''~\cite[p.\,2]{kifer11sigmod}.

In 2012, Kifer and Machanavajjhala published follow up work stating that ``we use [the Pufferfish framework] to formalize and prove the statement that differential privacy assumes independence between records''~\cite[p.\,1]{kifer12pods}.
It goes on to say ``Assumptionless privacy definitions are a myth: if one wants to publish useful, privacy-preserving sanitized data then one \emph{must} make assumptions about the original data and data-generating process''~\cite[p.\,1, emphasis in original]{kifer12pods}.
In 2014, Kifer and Machanavajjhala published a journal version of their 2012 paper, which makes a similar statement: ``Note that assumptions are absolutely necessary -- privacy definitions that can provide privacy guarantees without making any assumptions provide little utility beyond the default approach of releasing nothing at all''~\cite[p.\,3:5]{kifer14database}.
However, this version is, overall, more qualified.
For example, it states ``The following theorem says that if we have any correlations between records, then some differentially private algorithms leak more information than is allowable (under the odds ratio semantics in Section 3.1)''~\cite[3:12--13]{kifer14database}, which makes it clear that the supposed shortcoming of \DP{}
in the face of correlated data points %
is relative to a particular notion of privacy presented in that paper, roughly, reducing uncertainty about some sensitive fact about a person.

Also in 2014, He~et~al.\@ published a paper building upon the Pufferfish framework~\cite{he14sigmod}.
Referring to the conference version~\cite{kifer12pods}, He~et~al.\@ states~\cite[p.\,1]{he14sigmod}:
\begin{quote}
[Kifer and Machanavajjhala] showed that differential privacy is equivalent to a specific instantiation of the Pufferfish framework, where (a) every property about an individual's record in the data is kept secret, and (b) the adversary assumes that every individual is independent of the rest of the individuals in the data (no correlations). We believe that these shortcomings severely limit the applicability of differential privacy to real world scenarios that either require high utility, or deal with correlated data.
\end{quote}
and ``Recent work [by Kifer and Machanavajjhala] showed that differentially private mechanisms could still lead to an inordinate disclosure of sensitive information when adversaries have access to publicly known constraints about the data that induce correlations across tuples''~\cite[p.\,3]{he14sigmod}.

In 2013, Li~et~al.\@ published a paper that states ``differential privacy's main assumption is independence''~\cite[p.\,2]{li13ccs}.
Similar, to the papers by Kifer and Machanavajjhala, this paper assumes a technical definition of privacy, \emph{positive membership privacy}, and makes this assertion since independence is required for \DP{} to imply it.
The paper also claims that ``the original definition of differential privacy assumes that the adversary has precise knowledge of all the tuples in the dataset''~\cite[p.\,10]{li13ccs}, which we take as a reference to the strong adversary assumption.

Chen~et~al.'s 2014 paper is the first of three attempting to provide an associative version of privacy, motivated by Pufferfish, in the face of correlated data~\cite{chen14vldbj}.
It states ``$\epsilon$-differential privacy fails to provide the claimed privacy guarantee in the correlated setting''~\cite[p.\,2]{chen14vldbj} and ``$\epsilon$-differential privacy is built on the assumption that all underlying records are independent of each other''~\cite[p.\,7]{chen14vldbj}.

The second paper, Zhu~et~al.'s paper, published in 2015, provides a more accurate accounting of correlations~\cite{zhu15tifs}.
It states~\cite[p.\,229]{zhu15tifs}: 
\begin{quote}
An adversary with knowledge on correlated information will have higher chance of obtaining the privacy information, and violating the definition of differential privacy. Hence, how to preserve rigorous differential privacy in a correlated dataset is an emerging issue that needs to be addressed.
\end{quote}
It further asserts~\cite[p.\,231]{zhu15tifs}:
\begin{quote}
In the past decade, a growing body of literature has been published on differential privacy. Most existing work assumes that the dataset consists of independent records.
\end{quote}
and ``a major disadvantage of traditional differential privacy is the overlook of the relationship among records, which means that the query result leaks more information than is allowed''~\cite[p.\,232]{zhu15tifs}.

The third paper, by Liu~et~al.\@ in 2016, provides an even more accurate accounting of correlations~\cite{liu16ndss}.
A blog post by one of the authors, Mittal, announcing the paper states ``To provide its guarantees, DP implicitly assumes that the data tuples in the database, each from a different user, are all independent.''~\cite{mittal16blog}.
In five comments on this blog post, McSherry posted a summary of his concerns about their paper and blog post.
McSherry also treats the paper at length in a blog post~\cite{mcsherry16blog2}.
McSherry highlights three statements made by the paper that he finds false~\cite{mcsherry16blog2}:
(1) ``For providing this guarantee, differential privacy mechanisms assume independence of tuples in the database''~\cite[p.\,1]{liu16ndss},
(2) ``To provide its guarantees, DP mechanisms assume that the data tuples (or records) in the database, each from a different user, are all independent.''~\cite[p.\,1]{liu16ndss}, and
(3) ``However, the privacy guarantees provided by the existing DP mechanisms are valid only under the assumption that the data tuples forming the database are pairwise independent''~\cite[p.\,2]{liu16ndss}.

A somewhat different tack is taken in a 2016 paper by Cuff and Yu, which instead focuses on the strong adversary assumption~\cite[p.\,2]{cuff16ccs}:
\begin{quote}
The definition of $(\epsilon, \delta)$-DP involves a notion of neighboring database instances.  Upon examination one realizes that this has the affect of assuming that the adversary has already learned about all but one entry in the database and is only trying to gather additional information about the remaining entry.  We refer to this as the strong adversary assumption, which is implicit in the definition of differential privacy.
\end{quote}

Yang~et~al.\@'s 2015 paper allows either assumption~\cite[\S1.2]{yang15sigmod}:
\begin{quote}
Differential privacy is designed to preserve the privacy in the face of intrusions by the strongest adversary who exactly knows everything about all individual entities except the object of its attack.
[\dots]
In fact, as we will show in Section~3, differential privacy does guarantee privacy against intrusion by any adversary when all the entities in the database are independent. 
\end{quote}

\subsection{Responses}

In addition to the aforementioned blog post by McSherry~\cite{mcsherry16blog2},
other works by those promoting the original view of \DP{} have also re-asserted that \DP{} was never intended to prevent all inferential privacy threats and that doing so is impossible~\cite{bassily13focs,kasiviswanathan14jpc,mcsherry16blog1}. %
In a different blog post, 
McSherry goes the furthest, questioning whether wholesale inferential privacy is the normal meaning of ``privacy'' or even an appealing concept~\cite{mcsherry16blog1}.
He calls it ``forgettability'', invoking the European Union's right to be forgotten, and points out that preventing inferences prevents people from using data and scientific progress.
He suggests that perhaps people should only have an expectation to the privacy of data they own, as provided by \DP{}, and not to the privacy of data about them.
He challenges the line of research questioning \DP{} (Appendix~\ref{app:history-camp2}) to justify the view that forgettability is a form of privacy.

We know no works explicitly responding to this challenge.

\section{Counterexample Involving Zero Probability for Strong Adversary D.P.}
\label{app:chr-csb}

Consider Definition~\ref{view:dp-utc} modified to look at one distribution $\mc{P}$, which represents the actual distribution of the world.
\begin{chr}
\label{chr:dp-csb}
A randomized algorithm $\mc{A}$ is said to be
\emph{$ε$-Strong Adversary Differentially Private for One Distribution $\mc{P}$}
if for all databases $d,d' ∈ \mc{D}^n$ at Hamming distance at most $1$,
and for all output values $o$,
if
$\Fr[D{=}d] > 0$ and
$\Fr[D{=}d'] > 0$
then
\begin{align}
\Fr_{\mc{P},\mc{A}}[O{=}o \given D{=}d] ≤ e^ε * \Fr_{\mc{P},\mc{A}}[O{=}o \given D{=}d']
\end{align}
where $O = \mc{A}(D)$ and $D = \langle D_1, \ldots, D_n\rangle$.
\end{chr}

To prove that this
does not imply Definition~\ref{dfn:dp}, consider the case of a database holding a single data point whose value could be $0$, $1$, or $2$.
Suppose the population $\mc{P}$ is such that $\Fr_{\mc{P}}[D_1{=}2] = 0$.
Consider an algorithm $\mc{A}$ such that for the given population $\mc{P}$,
\begin{align}
\Fr_{\mc{A}}[\mc{A}(0){=}0] &= 1/2 & \Fr_{\mc{A}}[\mc{A}(0){=}1] &= 1/2 \\
\Fr_{\mc{A}}[\mc{A}(1){=}0] &= 1/2 & \Fr_{\mc{A}}[\mc{A}(1){=}1] &= 1/2 \\
\Fr_{\mc{A}}[\mc{A}(2){=}0] &= 1   & \Fr_{\mc{A}}[\mc{A}(2){=}1] &= 0
\end{align}
The algorithm does not satisfy Definition~\ref{dfn:dp} due to its behavior on the input $2$.
However, using~\eqref{eqn:conditioning-on-all},
\begin{align}
\Fr_{\mc{P},\mc{A}}[O{=}0 \given D_1{=}0] &= 1/2 & \Fr_{\mc{P},\mc{A}}[O{=}1 \given D_1{=}0] &= 1/2 \ndss{\notag}\\
\Fr_{\mc{P},\mc{A}}[O{=}0 \given D_1{=}1] &= 1/2 & \Fr_{\mc{P},\mc{A}}[O{=}1 \given D_1{=}1] &= 1/2 \ndss{\notag}
\end{align}
While~\eqref{eqn:conditioning-on-all} says nothing about $D_1{=}2$ since that has zero probability,
this is sufficient to show that the algorithm satisfies
Definition~\ref{chr:dp-csb} since it only applies to data points of non-zero probability.
Thus, the algorithm satisfies Definition~\ref{chr:dp-csb} but not Definition~\ref{dfn:dp}.

\section{Details of Causation}
\label{app:causation-details}

\ndss{We use a slight modification of Pearl's models.  The models we use are suggested by Pearl for handling ``inherent'' randomness~\cite[p.~220]{pearl09book} and differs from the model he typically uses (his Definition~7.1.6) by allowing randomization in the structural equations $F_V$.  We find this randomization helpful for modeling the randomization within the algorithm $\mc{A}$.}

Formally, let $\llb \mc{M} \rrb(\vec{x}).\vec{Y}$ be the joint distribution over values for the variables $\vec{Y}$ that results from the background variables $\vec{X}$ taking on the values $\vec{x}$ (where these vectors use the same ordering).
That is, $\llb \mc{M} \rrb(\vec{x}).\vec{Y}(\vec{y})$ represents the probability of $\vec{Y} = \vec{y}$ given that the background variables had values $\vec{X} = \vec{x}$.
Since the SEM is non-recursive this can be calculated in a bottom up fashion.
We show this for the model $\mc{M}^{\mc{A}}$ with $D_i := R_i$ for all $i$, $D := \langle D_1, \ldots, D_n\rangle$, and $O := \mc{A}(D)$:
\begin{fw}
\begin{align}
\llb \mc{M}^{\mc{A}} \rrb(r_1, \ldots, r_n).R_i(r_i) &= 1 \ndss{\notag}
\end{align}
\begin{align}
\llb \mc{M}^{\mc{A}} \aln\rrb(r_1, \ldots, r_n).D_i(r_i) \brkalnaln{}= \Fr_{F_{D_i}}[F_{D_i}(R_i){=}r_i] = \Fr_{F_{D_i}}[R_i{=}r_i] = 1 \ndss{\notag}
\end{align}
\begin{align}
\llb \mc{M}^{\mc{A}} \aln\rrb(r_1, \ldots, r_n).D(\langle r_1, \ldots, r_n\rangle)
  \brkalnaln{} = \Fr_{F_{D}}[F_{D}(D_1,\ldots,D_n){=}\langle r_1,\ldots,r_n\rangle] \ndss{\notag}\\
  &= \Fr_{F_{D}}[F_{D}(F_{D_1}(R_1),\ldots,F_{D_n}(R_n)){=}\langle r_1,\ldots,r_n\rangle] \ndss{\notag}\\
  &= \Fr_{F_{D}}[F_{D}(R_1,\ldots,R_n){=}\langle r_1,\ldots,r_n\rangle] \ndss{\notag}\\
  &= \Fr_{F_{D}}[\langle R_1,\ldots,R_n\rangle{=}\langle r_1,\ldots,r_n\rangle] = 1 \ndss{\notag}
\end{align}
and
\begin{align}
\llb \mc{M}^{\mc{A}} \rrb(r_1, \ldots, r_n).O(o)
  &= \Fr_{F_O}[F_O(D){=}o] \brkaln{}
  = \Fr_{\mc{A}}[\mc{A}(\langle r_1, \ldots, r_n\rangle){=}o] \ndss{\notag}
\end{align}
\end{fw}

We can raise the calculations above to work over $\mc{P}$ instead of a concrete assignment of values $\vec{x}$.
Intuitively, the only needed change is that, for background variables $\vec{X}$,
\begin{align}
\Fr_{\olangle \mc{M}^{\mc{A}}, \mc{P}\orangle}[\vec{Y}{=}\vec{y}] &= \sum_{\vec{x} \in \vec{\mc{X}}} \Fr_{\mc{P}\mc{A}}[\vec{X}{=}\vec{x}] * \llb \mc{M}^{\mc{A}} \rrb(\vec{x}).\vec{Y}(\vec{y}) \ndss{\notag}
\end{align}
where $\vec{X}$ are all the background variables.\footnote{This is Pearl's equation (7.2) raised to work on probabilistic structural equations $F_V$~\cite[p.~205]{pearl09book}.}
The following lemma will not only be useful, but will illustrate the above general points on the model $\mc{M}^{\mc{A}}$ that concerns us.
\begin{lem}
\label{lem:doing-on-all}
For all algorithms $\mc{A}$, $\mc{P}$, all $o$, and all $d_1, \ldots, d_n$,
\begin{align}
\aln\Fr_{\olangle \mc{M}^{\mc{A}}, \mc{P}\orangle}[O{=}o \given \doo(D_1{:=}d_1, \ldots, D_n{:=}d_n)] \brkalnaln{XXXXXXXXXXXXXX}= \Fr_{\mc{A}}[\mc{A}(d_1,\ldots,d_n){=}o] \ndss{\notag} %
\end{align}
\end{lem}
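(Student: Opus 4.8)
The plan is to unfold both sides using the definition of the $\doo$ operator and of the semantics $\llb\cdot\rrb$ raised to work over $\mc{P}$, and then observe that the intervention eliminates all dependence on the background distribution.

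First I would apply the definition of intervention to rewrite the left-hand side as an ordinary (non-interventional) probability in the sub-model $\mc{M}' := \mc{M}^{\msf{dp}}[D_1{:=}d_1, \ldots, D_n{:=}d_n]$, in which each structural equation $D_i := F_{D_i}(R_i)$ has been replaced by the constant assignment $D_i := d_i$:
\[
\Fr_{\langle \mc{M}^{\msf{dp}}, \mc{P}\rangle}[O{=}o \given \doo(D_1{:=}d_1, \ldots, D_n{:=}d_n)] = \Fr_{\langle \mc{M}', \mc{P}\rangle}[O{=}o].
\]

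Next I would expand the right-hand side with the rule raising $\llb\cdot\rrb$ over $\mc{P}$, turning it into a sum over the background values $\vec{r}$ of the variables $R_1, \ldots, R_n$:
\[
\Fr_{\langle \mc{M}', \mc{P}\rangle}[O{=}o] = \sum_{\vec{r}} \Fr_{\mc{P}}[\vec{R}{=}\vec{r}] * \llb \mc{M}' \rrb(\vec{r}).O(o).
\]
The heart of the argument is to compute the inner term $\llb \mc{M}' \rrb(\vec{r}).O(o)$ by the same bottom-up evaluation shown above for $\mc{M}^{\msf{dp}}$, but now with the severed equations. Because $D_i := d_i$ no longer reads $R_i$, we get $D = \langle d_1, \ldots, d_n\rangle$ for every $\vec{r}$, and hence $\llb \mc{M}' \rrb(\vec{r}).O(o) = \Fr_{\mc{A}}[\mc{A}(d_1, \ldots, d_n){=}o]$ independently of $\vec{r}$.

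Finally I would pull this constant out of the sum, leaving $\Fr_{\mc{A}}[\mc{A}(d_1, \ldots, d_n){=}o] * \sum_{\vec{r}} \Fr_{\mc{P}}[\vec{R}{=}\vec{r}]$, and use that the background probabilities sum to one. The one step needing care—and the entire content of the lemma—is the observation that cutting the links $D_i := F_{D_i}(R_i)$ makes the conditional value of $O$ constant in $\vec{r}$, so $\mc{P}$ drops out and the identity holds uniformly for every population. Everything else is bookkeeping on the sum, so I do not anticipate a genuine obstacle.
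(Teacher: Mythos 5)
Your proposal is correct and follows essentially the same route as the paper's proof: replace the intervention by the sub-model with constant equations $D_i := d_i$, expand $\Fr_{\langle \mc{M}', \mc{P}\rangle}[O{=}o]$ as a sum over background values $\vec{r}$, observe by bottom-up evaluation that $\llb \mc{M}' \rrb(\vec{r}).O(o) = \Fr_{\mc{A}}[\mc{A}(d_1,\ldots,d_n){=}o]$ independently of $\vec{r}$, and conclude since the background probabilities sum to one. The paper merely makes the bottom-up evaluation explicit variable by variable ($R_i$, $D_i$, $D$, then $O$), which you compress into one observation.
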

\begin{proof}
Let $F_{d_i}()$ represent the constant function with no arguments that always returns $d_i$.
The structural equation for $D_i$ is $F_{d_i}$ in $\mc{M}^{\mc{A}}[D_1{:=}d_1]\cdots[D_n{:=}d_n]$.
As before, we compute bottom up, but this time on the modified SEM:
\begin{fw}
\begin{align}
\llb \mc{M}^{\mc{A}}[D_1{:=}d_1]\cdots[D_n{:=}d_n] \rrb(r_1, \ldots, r_n).R_i(r_i) &= 1 \ndss{\notag}
\end{align}
\begin{multline}
\llb \mc{M}^{\mc{A}}[D_1{:=}d_1]\cdots[D_n{:=}d_n] \rrb(r_1, \ldots, r_n).D_i(d_i) \brk = \Fr_{F_{d_i}}[F_{d_i}(){=}d_i] = 1
\end{multline}
\begin{align}
\llb \mc{M}^{\mc{A}}[D_1{:=}d_1]\aln\cdots[D_n{:=}d_n] \rrb(r_1, \ldots, r_n).D(\langle d_1, \ldots, d_n\rangle)
  \brkalnaln{}
   = \Fr_{F_{D}}[F_{D}(D_1,\ldots,D_n){=}\langle d_1,\ldots,d_n\rangle] \ndss{\notag}\\
  &= \Fr_{F_{D}}[F_{D}(F_{D_1}(),\ldots,F_{D_n}()){=}\langle d_1,\ldots,d_n\rangle] \ndss{\notag}\\
  &= \Fr_{F_{D}}[F_{D}(d_1,\ldots,d_n){=}\langle d_1,\ldots,d_n\rangle] \ndss{\notag}\\
  &= \Fr_{F_{D}}[\langle d_1,\ldots,d_n\rangle{=}\langle d_1,\ldots,d_n\rangle] = 1 \ndss{\notag}
\end{align}
\begin{multline}
\llb \mc{M}^{\mc{A}}[D_1{:=}d_1]\cdots[D_n{:=}d_n] \rrb(r_1, \ldots, r_n).O(o)
  \brk = \Fr_{F_O}[F_O(D){=}o]
  = \Fr_{\mc{A}}[\mc{A}(\langle d_1, \ldots, d_n\rangle){=}o] \ndss{\notag}
\end{multline}
\end{fw}
Thus,
\begin{fw}
\begin{align}
\aln\Fr_{\olangle \mc{M}^{\mc{A}}, \mc{P}\orangle}[O{=}o \given \doo(D_1{:=}d_1, \ldots, D_n{:=}d_n)]
\brkalnaln{} = \Fr_{\olangle \mc{M}^{\mc{A}}[D_1{:=}d_1]\cdots[D_n{:=}d_n], \mc{P}\orangle}[O{=}o]  \ndss{\notag}\\
&= \sum_{\vec{r} \in \mc{R}^n} \Fr_{\mc{P}}[\vec{R}{=}\vec{r}] * \llb \mc{M}^{\mc{A}}[D_1{:=}d_1]\cdots[D_n{:=}d_n] \rrb(\vec{r}).O(o) \ndss{\notag}\\
&= \sum_{\vec{r} \in \mc{R}^n} \Fr_{\mc{P}}[\vec{R}{=}\vec{r}] * \Fr_{\mc{A}}[\mc{A}(\langle d_1, \ldots, d_n\rangle){=}o] \ndss{\notag}\\
&= \Fr_{\mc{A}}[\mc{A}(\langle d_1, \ldots, d_n\rangle){=}o] * \sum_{\vec{r} \in \mc{R}^n} \Fr_{\mc{P}}[\vec{R}{=}\vec{r}] \ndss{\notag}\\
&= \Fr_{\mc{A}}[\mc{A}(\langle d_1, \ldots, d_n\rangle){=}o] * 1 \ndss{\notag}\\
&= \Fr_{\mc{A}}[\mc{A}(\langle d_1, \ldots, d_n\rangle){=}o] \ndss{\notag}
\end{align}
\end{fw}
\end{proof}

\begin{lem}
\label{lem:doing-on-one}
For all algorithms $\mc{A}$, $\mc{P}$, $o$, $j$, and $d'_j$,
\begin{fw}
\begin{align}
&\Fr_{\olangle \mc{M}^{\mc{A}}, \mc{P}\orangle}[O{=}o \given \doo(D_j{=}d'_j)] \ndss{\notag}\\
&= \bigsum{\langle r_1,\ldots,r_{j-1},r_{j+1},\ldots,r_{n}\rangle \in \mc{R}^{n-1}} \Fr_{\mc{P}}\left[\wedge_{i\in\{1,\ldots,j-1,j+1,\ldots,n\}} R_i{=}r_i \right] \brkaln{XXXXXX} * \Fr_{\mc{A}}[\mc{A}(r_1,\ldots,r_{j-1},d'_j,r_{j+1},\ldots,r_n){=}o] \ndss{\notag}\\
&= \bigsum{\langle d_1,\ldots,d_{j-1},d_{j+1},\ldots,d_{n}\rangle \in \mc{D}^{n-1}} \Fr_{\olangle \mc{M}^{\mc{A}}, \mc{P}\orangle}\left[\wedge_{i\in\{1,\ldots,j-1,j+1,\ldots,n\}} D_i{=}d_i \right] \brkaln{XXXXXX} * \Fr_{\mc{A}}[\mc{A}(d_1,\ldots,d_{j-1},d'_j,d_{j+1},\ldots,d_n){=}o] \ndss{\notag}
\end{align}
\end{fw}
\end{lem}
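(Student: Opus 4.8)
The plan is to mirror the proof of Lemma~\ref{lem:doing-on-all}, intervening on the single variable $D_j$ rather than on all of them. First I would unfold the definition of the $\doo$-operator, so that
\[
\Fr_{\langle \mc{M}^{\msf{dp}}, \mc{P}\rangle}[O{=}o \given \doo(D_j{=}d_j)] = \Fr_{\langle \mc{M}^{\msf{dp}}[D_j{:=}d_j], \mc{P}\rangle}[O{=}o],
\]
reducing everything to a computation in the sub-model $\mc{M}^{\msf{dp}}[D_j{:=}d_j]$, in which the structural equation for $D_j$ is the constant $d_j$ while every other equation is left unchanged.

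Next I would evaluate this sub-model bottom-up on a fixed assignment $\vec{r}$ of the background variables, exactly as in Lemma~\ref{lem:doing-on-all}. For $i \neq j$ we still have $D_i = r_i$ (since $D_i := R_i$), while $D_j = d_j$ regardless of $r_j$; hence $D = \langle r_1, \ldots, r_{j-1}, d_j, r_{j+1}, \ldots, r_n\rangle$ with probability one, and therefore
\[
\llb \mc{M}^{\msf{dp}}[D_j{:=}d_j] \rrb(\vec{r}).O(o) = \Fr_{\mc{A}}[\mc{A}(r_1, \ldots, r_{j-1}, d_j, r_{j+1}, \ldots, r_n){=}o].
\]
Applying the lifting formula $\Fr_{\langle \mc{M}, \mc{P}\rangle}[O{=}o] = \sum_{\vec{r}} \Fr_{\mc{P}}[\vec{R}{=}\vec{r}] * \llb \mc{M} \rrb(\vec{r}).O(o)$ then expresses the interventional probability as a sum over all $\vec{r} \in \mc{R}^n$.

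The one step that carries the content is the marginalization of $R_j$. Because the intervention severs $D_j$'s dependence on $R_j$, the summand $\Fr_{\mc{A}}[\mc{A}(r_1, \ldots, d_j, \ldots, r_n){=}o]$ does not depend on $r_j$; I can therefore sum over $r_j$ first, collapsing $\sum_{r_j} \Fr_{\mc{P}}[\vec{R}{=}\vec{r}]$ to the marginal $\Fr_{\mc{P}}[\wedge_{i \neq j} R_i{=}r_i]$. This yields the first displayed equality. The second equality is then purely cosmetic: since $D_i := R_i$ is an identity copy for each $i \neq j$, the marginal $\Fr_{\mc{P}}[\wedge_{i \neq j} R_i{=}r_i]$ equals $\Fr_{\langle \mc{M}^{\msf{dp}}, \mc{P}\rangle}[\wedge_{i \neq j} D_i{=}d_i]$ after renaming $r_i$ to $d_i$, while the $j$th argument of $\mc{A}$ is already the intervened value $d_j$. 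The main point to get right is this decoupling argument, which is precisely where intervention differs from conditioning: under conditioning on $D_j{=}d_j$ the value $r_j$ would remain informative and could not be summed out freely.
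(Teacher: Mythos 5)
Your proposal is correct and follows essentially the same route as the paper's proof: unfold $\doo$ into the sub-model $\mc{M}^{\msf{dp}}[D_j{:=}d_j]$, evaluate it bottom-up to get $\Fr_{\mc{A}}[\mc{A}(r_1,\ldots,d_j,\ldots,r_n){=}o]$, apply the lifting formula over $\mc{P}$, marginalize out $r_j$ (the paper does this via the factorization $\Fr_{\mc{P}}[R_1{=}r_1,\ldots,R_n{=}r_n] = \Fr_{\mc{P}}[R_j{=}r_j \given \ldots]\,\Fr_{\mc{P}}[\ldots]$ and summing the conditional to $1$, which is the same marginalization you state directly), and finally rename $r_i$ to $d_i$ using $D_i := R_i$ for $i \neq j$. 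Your observation that the summand's independence of $r_j$ is exactly what distinguishes intervention from conditioning matches the paper's intent precisely.
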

\begin{proof}
With out loss of generality, assume $j$ is $1$.
Let $F_{d'_1}()$ represent the constant function with no arguments that always returns $d'_1$.
The structural equation for $D_1$ is $F_{d'_1}$ in $\mc{M}^{\mc{A}}[D_1{:=}d'_1]$.
As before, we compute bottom up, but this time on the modified SEM:
\begin{fw}
\begin{align}
\llb \mc{M}^{\mc{A}}[D_1{:=}d'_1] \rrb(r_1, \ldots, r_n).R_i(r_i) &= 1 \ndss{\notag}
\end{align}
holds as before.
The behavior of $D_i$ varies based on whether $i = 1$:
\begin{align}
\llb \mc{M}^{\mc{A}}[D_1{:=}d'_1] \rrb(r_1, \ldots, r_n).D_1(d'_1) &= \Fr_{F_{d'_1}}[F_{d'_1}(){=}d'_1] = 1 \ndss{\notag}\\
\llb \mc{M}^{\mc{A}}[D_1{:=}d'_1] \rrb(r_1, \ldots, r_n).D_i(r_i) &= \Fr_{F_{D_i}}[F_{D_i}(R_i){=}r_i] \brkaln{} = \Fr_{F_{D_i}}[R_i{=}r_i] = 1 \tufte{&&\text{for all $i \neq 1$}} \ndss{\notag}
\end{align}
\ndss{for all $i \neq 1$.}
Thus,
\begin{align}
\aln\llb \mc{M}^{\mc{A}}[D_1{:=}d'_1] \rrb(r_1, \ldots, r_n).D(\langle d'_1, r_2, \ldots, r_n\rangle)
  \brkalnaln{} = \Fr_{F_{D}}[F_{D}(D_1,D_2\ldots,D_n){=}\langle d'_1,r_2,\ldots,r_n\rangle] \ndss{\notag}\\
  &= \Fr_{F_{D}}[F_{D}(F_{d'_1}(),F_{D_2}(R_2),\ldots,F_{D_n}(R_n)){=}\langle d'_1,r_2,\ldots,r_n\rangle] \ndss{\notag}\\
  &= \Fr_{F_{D}}[F_{D}(d'_1,r_2,\ldots,r_n){=}\langle d'_1,r_2,\ldots,r_n\rangle] \ndss{\notag}\\
  &= \Fr_{F_{D}}[\langle d'_1,r_2,\ldots,d_n\rangle{=}\langle d'_1,r_2,\ldots,r_n\rangle] = 1 \ndss{\notag}
\end{align}
and
\begin{multline}
\llb \mc{M}^{\mc{A}}[D_1{:=}d'_1]\rrb(r_1, \ldots, r_n).O(o)
  \brk = \Fr_{F_O}[F_O(D){=}o]
  = \Fr_{\mc{A}}[\mc{A}(\langle d'_1, r_2, \ldots, r_n\rangle){=}o] \ndss{\notag}
\end{multline}
\end{fw}
Thus,
\begin{fw}
\begin{align}
&\Fr_{\olangle \mc{M}^{\mc{A}}, \mc{P}\orangle}[O{=}o \given \doo(D_1{:=}d'_1)] \ndss{\notag}\\
&= \Fr_{\olangle \mc{M}^{\mc{A}}[D_1{:=}d'_1], \mc{P}\orangle}[O{=}o]  \ndss{\notag}\\
&= \sum_{r_1, \ldots, r_n \in \mc{R}^n} \Fr_{\mc{P}}[R_1{=}r_1, \ldots, R_n{=}r_n] \brkaln{XXXXXXXXX} *  \llb \mc{M}^{\mc{A}}[D_1{:=}d'_1]\rrb(r_1, \ldots, r_n).O(o) \ndss{\notag}\\
&= \sum_{r_1, \ldots, r_n \in \mc{R}^n} \Fr_{\mc{P}}[R_1{=}r_1, \ldots, R_n{=}r_n] \brkaln{XXXXXXXXX} *  \Fr_{\mc{A}}[\mc{A}(\langle d'_1, r_2, \ldots, r_n\rangle){=}o] \ndss{\notag}\\
&= \sum_{r_1, \ldots, r_n \in \mc{R}^n} \Fr_{\mc{P}}[R_1{=}r_1 \given R_2{=}r_2, \ldots, R_n{=}r_n] \brkaln{XXXXXXXXX} *  \Fr_{\mc{P}}[R_2{=}r_2, \ldots, R_n{=}r_n] \brkaln{XXXXXXXXX} *  \Fr_{\mc{A}}[\mc{A}(\langle d'_1, r_2, \ldots, r_n\rangle){=}o] \ndss{\notag}\\
&= \sum_{r_2, \ldots, r_n \in \mc{R}^n}\: \sum_{r_1 \in \mc{R}} \Fr_{\mc{P}}[R_1{=}r_1 \given R_2{=}r_2, \ldots, R_n{=}r_n] \brkaln{XXXXXXXXX} *  \Fr_{\mc{P}}[R_2{=}r_2, \ldots, R_n{=}r_n] \brkaln{XXXXXXXXX} *  \Fr_{\mc{A}}[\mc{A}(\langle d'_1, r_2, \ldots, r_n\rangle){=}o] \ndss{\notag}\\
&= \sum_{r_2, \ldots, r_n \in \mc{R}^n} \Fr_{\mc{P}}[R_2{=}r_2, \ldots, R_n{=}r_n] \brkaln{XXXXXXXXX} *  \Fr_{\mc{A}}[\mc{A}(\langle d'_1, r_2, \ldots, r_n\rangle){=}o] \brkaln{XXXXXXXXX} *  \sum_{r_1 \in \mc{R}} \Fr_{\mc{P}}[R_1{=}r_1 \given R_2{=}r_2, \ldots, R_n{=}r_n] \ndss{\notag}\\
&= \sum_{r_2, \ldots, r_n \in \mc{R}^n} \Fr_{\mc{P}}[R_2{=}r_2, \ldots, R_n{=}r_n] \brkaln{XXXXX} *  \Fr_{\mc{A}}[\mc{A}(\langle d'_1, r_2, \ldots, r_n\rangle){=}o] *  1 \ndss{\notag}\\
&= \sum_{r_2, \ldots, r_n \in \mc{R}^n} \Fr_{\mc{P}}[R_2{=}r_2, \ldots, R_n{=}r_n] \brkaln{XXXXXXXXX} *  \Fr_{\mc{A}}[\mc{A}(\langle d'_1, r_2, \ldots, r_n\rangle){=}o] \ndss{\notag}\\
&= \sum_{d_2, \ldots, d_n \in \mc{D}^n} \Fr_{\mc{P}}[D_2{=}d_2, \ldots, D_n{=}d_n] \brkaln{XXXXXXXXX} *  \Fr_{\mc{A}}[\mc{A}(\langle d'_1, d_2, \ldots, d_n\rangle){=}o] \ndss{\notag}
\end{align}
\end{fw}
where the last line follows since $D_i = R_i$ for $i \neq 1$.
\end{proof}

\bibliographystyle{IEEEtran}

\begin{thebibliography}{10}
\providecommand{\url}[1]{#1}
\csname url@samestyle\endcsname
\providecommand{\newblock}{\relax}
\providecommand{\bibinfo}[2]{#2}
\providecommand{\BIBentrySTDinterwordspacing}{\spaceskip=0pt\relax}
\providecommand{\BIBentryALTinterwordstretchfactor}{4}
\providecommand{\BIBentryALTinterwordspacing}{\spaceskip=\fontdimen2\font plus
\BIBentryALTinterwordstretchfactor\fontdimen3\font minus
  \fontdimen4\font\relax}
\providecommand{\BIBforeignlanguage}[2]{{%
\expandafter\ifx\csname l@#1\endcsname\relax
\typeout{** WARNING: IEEEtran.bst: No hyphenation pattern has been}%
\typeout{** loaded for the language `#1'. Using the pattern for}%
\typeout{** the default language instead.}%
\else
\language=\csname l@#1\endcsname
\fi
#2}}
\providecommand{\BIBdecl}{\relax}
\BIBdecl

\bibitem{cuff16ccs}
P.~Cuff and L.~Yu, ``Differential privacy as a mutual information constraint,''
  in \emph{Proceedings of the 2016 ACM SIGSAC Conference on Computer and
  Communications Security}, ser. CCS '16.\hskip 1em plus 0.5em minus
  0.4em\relax ACM, 2016, pp. 43--54.

\bibitem{alivim11icalp}
M.~Alvim, M.~Andr\'{e}s, K.~Chatzikokolakis, and C.~Palamidessi, ``On the
  relation between differential privacy and quantitative information flow,'' in
  \emph{38th International Colloquium on Automata, Languages and Programming --
  ICALP 2011}, ser. Lecture Notes in Computer Science, L.~Aceto, M.~Henzinger,
  and J.~Sgall, Eds., vol. 6756.\hskip 1em plus 0.5em minus 0.4em\relax
  Springer, 2011, pp. 60--76.

\bibitem{clarkson15mathstructcompsci}
M.~R. Clarkson and F.~B. Schneider, ``Quantification of integrity,''
  \emph{Mathematical Structures in Computer Science}, vol.~25, no.~2, pp.
  207--258, 2015.

\bibitem{barthe11csf}
G.~Barthe and B.~Kopf, ``Information-theoretic bounds for differentially
  private mechanisms,'' in \emph{Proceedings of the 2011 IEEE 24th Computer
  Security Foundations Symposium}, ser. CSF '11.\hskip 1em plus 0.5em minus
  0.4em\relax IEEE Computer Society, 2011, pp. 191--204.

\bibitem{mcgregor11eccc}
A.~McGregor, I.~Mironov, T.~Pitassi, O.~Reingold, K.~Talwar, and S.~P. Vadhan,
  ``The limits of two-party differential privacy,'' \emph{Electronic Colloquium
  on Computational Complexity {(ECCC)}}, vol.~18, 2011, corrected version of
  paper presented at FOCS '10.

\bibitem{ghosh17arxiv}
A.~Ghosh and R.~Kleinberg, ``Inferential privacy guarantees for differentially
  private mechanisms,'' \emph{CoRR}, vol. abs/1603.01508v2, 2017, presented at
  the 8th Innovations in Theoretical Computer Science conference in 2017.

\bibitem{kifer11sigmod}
D.~Kifer and A.~Machanavajjhala, ``No free lunch in data privacy,'' in
  \emph{Proceedings of the 2011 ACM SIGMOD International Conference on
  Management of data}.\hskip 1em plus 0.5em minus 0.4em\relax ACM, 2011, pp.
  193--204.

\bibitem{kifer12pods}
------, ``A rigorous and customizable framework for privacy,'' in
  \emph{Proceedings of the 31st ACM SIGMOD-SIGACT-SIGAI Symposium on Principles
  of Database Systems}.\hskip 1em plus 0.5em minus 0.4em\relax ACM, 2012, pp.
  77--88.

\bibitem{li13ccs}
N.~Li, W.~Qardaji, D.~Su, Y.~Wu, and W.~Yang, ``Membership privacy: A unifying
  framework for privacy definitions,'' in \emph{Proceedings of the 2013 ACM
  SIGSAC Conference on Computer and Communications Security}, ser. CCS
  '13.\hskip 1em plus 0.5em minus 0.4em\relax ACM, 2013, pp. 889--900.

\bibitem{he14sigmod}
X.~He, A.~Machanavajjhala, and B.~Ding, ``Blowfish privacy: Tuning
  privacy-utility trade-offs using policies,'' in \emph{Proceedings of the ACM
  SIGMOD International Conference on Management of Data (SIGMOD 2014)}.\hskip
  1em plus 0.5em minus 0.4em\relax ACM, 2014.

\bibitem{chen14vldbj}
R.~Chen, B.~C. Fung, P.~S. Yu, and B.~C. Desai, ``Correlated network data
  publication via differential privacy,'' \emph{The VLDB Journal}, vol.~23,
  no.~4, pp. 653--676, 2014.

\bibitem{zhu15tifs}
T.~Zhu, P.~Xiong, G.~Li, and W.~Zhou, ``Correlated differential privacy: Hiding
  information in non-{IID} data set,'' \emph{IEEE Transactions on Information
  Forensics and Security}, vol.~10, no.~2, pp. 229--242, 2015.

\bibitem{liu16ndss}
C.~Liu, S.~Chakraborty, and P.~Mittal, ``Dependence makes you vulnerable:
  Differential privacy under dependent tuples,'' in \emph{Network and
  Distributed System Security Symposium (NDSS)}.\hskip 1em plus 0.5em minus
  0.4em\relax The Internet Society, 2016.

\bibitem{yang15sigmod}
B.~Yang, I.~Sato, and H.~Nakagawa, ``Bayesian differential privacy on
  correlated data,'' in \emph{Proceedings of the 2015 ACM SIGMOD International
  Conference on Management of Data}, ser. SIGMOD '15.\hskip 1em plus 0.5em
  minus 0.4em\relax ACM, 2015, pp. 747--762.

\bibitem{bassily13focs}
R.~Bassily, A.~Groce, J.~Katz, and A.~Smith, ``Coupled-worlds privacy:
  Exploiting adversarial uncertainty in statistical data privacy,'' in
  \emph{Proceedings of the 2013 IEEE 54th Annual Symposium on Foundations of
  Computer Science}.\hskip 1em plus 0.5em minus 0.4em\relax IEEE Computer
  Society, 2013, pp. 439--448.

\bibitem{kasiviswanathan14jpc}
S.~P. Kasiviswanathan and A.~Smith, ``On the `semantics' of differential
  privacy: A {B}ayesian formulation,'' \emph{Journal of Privacy and
  Confidentiality}, vol.~6, no.~1, pp. 1--16, 2014.

\bibitem{mcsherry16blog1}
F.~McSherry, ``Lunchtime for data privacy,'' Blog:
  \url{https://github.com/frankmcsherry/blog/blob/master/posts/2016-08-16.md},
  2016.

\bibitem{mcsherry16blog2}
------, ``Differential privacy and correlated data,'' Blog:
  \url{https://github.com/frankmcsherry/blog/blob/master/posts/2016-08-29.md},
  2016.

\bibitem{pearl09book}
J.~Pearl, \emph{Causality}, 2nd~ed.\hskip 1em plus 0.5em minus 0.4em\relax
  Cambridge University Press, 2009.

\bibitem{dwork12itcs}
C.~Dwork, M.~Hardt, T.~Pitassi, O.~Reingold, and R.~Zemel, ``Fairness through
  awareness,'' in \emph{Proceedings of the 3rd Innovations in Theoretical
  Computer Science Conference}.\hskip 1em plus 0.5em minus 0.4em\relax ACM,
  2012, pp. 214--226.

\bibitem{dwork15stoc}
C.~Dwork, V.~Feldman, M.~Hardt, T.~Pitassi, O.~Reingold, and A.~L. Roth,
  ``Preserving statistical validity in adaptive data analysis,'' in
  \emph{Proceedings of the Forty-seventh Annual ACM Symposium on Theory of
  Computing (STOC '15)}.\hskip 1em plus 0.5em minus 0.4em\relax ACM, 2015, pp.
  117--126.

\bibitem{dwork15nips}
C.~Dwork, V.~Feldman, M.~Hardt, T.~Pitassi, O.~Reingold, and A.~Roth,
  ``Generalization in adaptive data analysis and holdout reuse,'' in
  \emph{Proceedings of the 28th International Conference on Neural Information
  Processing Systems - Volume 2 (NIPS'15)}.\hskip 1em plus 0.5em minus
  0.4em\relax MIT Press, 2015, pp. 2350--2358.

\bibitem{dwork15science}
\BIBentryALTinterwordspacing
------, ``The reusable holdout: Preserving validity in adaptive data
  analysis,'' \emph{Science}, vol. 349, no. 6248, pp. 636--638, 2015. [Online].
  Available: \url{http://science.sciencemag.org/content/349/6248/636}
\BIBentrySTDinterwordspacing

\bibitem{lecuyer18arxiv}
M.~Lecuyer, V.~Atlidakis, R.~Geambasu, D.~Hsu, and S.~Jana, ``Certified
  robustness to adversarial examples with differential privacy,'' \emph{ArXiv},
  vol. 1802.03471v3, 2018, to appear at IEEE S\&P 2019.

\bibitem{dwork06icalp}
C.~Dwork, ``Differential privacy,'' in \emph{Automata, Languages and
  Programming, 33rd International Colloquium, {ICALP} 2006, Venice, Italy, July
  10--14, 2006, Proceedings, Part {II}}, ser. Lecture Notes in Computer
  Science, M.~Bugliesi, B.~Preneel, V.~Sassone, and I.~Wegener, Eds., vol.
  4052.\hskip 1em plus 0.5em minus 0.4em\relax Springer, 2006, pp. 1--12.

\bibitem{tschantz15csf}
M.~C. Tschantz, A.~Datta, A.~Datta, and J.~M. Wing, ``A methodology for
  information flow experiments,'' in \emph{Computer Security Foundations
  Symposium}.\hskip 1em plus 0.5em minus 0.4em\relax IEEE, 2015.

\bibitem{kifer14database}
D.~Kifer and A.~Machanavajjhala, ``Pufferfish: A framework for mathematical
  privacy definitions,'' \emph{ACM Trans. Database Syst.}, vol.~39, no.~1, pp.
  3:1--3:36, 2014.

\bibitem{mcsherry17blog1}
F.~McSherry, ``On ``differential privacy as a mutual information
  constraint'','' Blog:
  \url{https://github.com/frankmcsherry/blog/blob/master/posts/2017-01-26.md},
  2017.

\bibitem{lee12kdd}
J.~Lee and C.~Clifton, ``Differential identifiability,'' in \emph{Proceedings
  of the 18th ACM SIGKDD International Conference on Knowledge Discovery and
  Data Mining}.\hskip 1em plus 0.5em minus 0.4em\relax ACM, 2012, pp.
  1041--1049.

\bibitem{dwork06crypto}
C.~Dwork, F.~Mcsherry, K.~Nissim, and A.~Smith, ``Calibrating noise to
  sensitivity in private data analysis,'' in \emph{Theory of Cryptography
  Conference}.\hskip 1em plus 0.5em minus 0.4em\relax Springer, 2006, pp.
  265--284.

\bibitem{cormode11kdd}
G.~Cormode, ``Personal privacy vs population privacy: Learning to attack
  anonymization,'' in \emph{Proceedings of the 17th ACM SIGKDD International
  Conference on Knowledge Discovery and Data Mining}.\hskip 1em plus 0.5em
  minus 0.4em\relax ACM, 2011, pp. 1253--1261.

\bibitem{mcsherry2007mechanism}
F.~McSherry and K.~Talwar, ``Mechanism design via differential privacy,'' in
  \emph{Foundations of Computer Science, 2007. FOCS'07. 48th Annual IEEE
  Symposium on}.\hskip 1em plus 0.5em minus 0.4em\relax IEEE, 2007, pp.
  94--103.

\bibitem{warner65asa}
\BIBentryALTinterwordspacing
S.~L. Warner, ``Randomized response: A survey technique for eliminating evasive
  answer bias,'' \emph{Journal of the American Statistical Association},
  vol.~60, no. 309, pp. 63--69, 1965. [Online]. Available:
  \url{http://www.jstor.org/stable/2283137}
\BIBentrySTDinterwordspacing

\bibitem{pearl01uai}
J.~Pearl, ``Direct and indirect effects,'' in \emph{Proceedings of the 17th
  Conference on Uncertainy in Artificial Intelligence}.\hskip 1em plus 0.5em
  minus 0.4em\relax Kaufmann Morgan, 2001, pp. 411--420.

\bibitem{dalenius77statistik}
T.~Dalenius, ``Towards a methodology for statistical disclosure control,''
  \emph{Statistik Tidskrift}, vol.~15, pp. 429--444, 1977.

\bibitem{dwork08jpc}
C.~Dwork and M.~Naor, ``On the difficulties of disclosure prevention in
  statistical databases or the case for differential privacy,'' \emph{Journal
  of Privacy and Confidentiality}, vol.~2, no.~1, pp. 93--107, 2008.

\bibitem{wang2017gaydar}
Y.~Wang and M.~Kosinski, ``Deep neural networks are more accurate than humans
  at detecting sexual orientation from facial images,'' 2017.

\bibitem{hill12forbes}
K.~Hill, ``How {T}arget figured out a teen girl was pregnant before her father
  did,'' \emph{Forbes}, 2012.

\bibitem{piatetsky14kdnuggets}
G.~Piatetsky, ``Did {T}arget really predict a teen's pregnancy? {T}he inside
  story,'' \emph{KDnuggets}, 2014.

\bibitem{tang17arxiv}
J.~Tang, A.~Korolova, X.~Bai, X.~Wang, and X.~Wang, ``Privacy loss in {A}pple's
  implementation of differential privacy on {MacOS} 10.12,'' \emph{ArXiv}, vol.
  1709.02753, 2017.

\bibitem{ding18ccs}
Z.~Ding, Y.~Wang, G.~Wang, D.~Zhang, and D.~Kifer, ``Detecting violations of
  differential privacy,'' in \emph{Proceedings of the 2018 ACM SIGSAC
  Conference on Computer and Communications Security (CCS~'18)}.\hskip 1em plus
  0.5em minus 0.4em\relax ACM, 2018, pp. 475--489.

\bibitem{bichsel18ccs}
B.~Bichsel, T.~Gehr, D.~Drachsler-Cohen, P.~Tsankov, and M.~Vechev,
  ``{DP-Finder}: Finding differential privacy violations by sampling and
  optimization,'' in \emph{Proceedings of the 2018 ACM SIGSAC Conference on
  Computer and Communications Security (CCS~'18)}.\hskip 1em plus 0.5em minus
  0.4em\relax ACM, 2018, pp. 508--524.

\bibitem{fisher35doe}
R.~A. Fisher, \emph{The Design of Experiments}.\hskip 1em plus 0.5em minus
  0.4em\relax Oliver \& Boyd, 1935.

\bibitem{mcsherry09sigmod}
F.~D. McSherry, ``Privacy integrated queries: An extensible platform for
  privacy-preserving data analysis,'' in \emph{Proceedings of the 2009 ACM
  SIGMOD International Conference on Management of data}, ACM.\hskip 1em plus
  0.5em minus 0.4em\relax Association for Computing Machinery, Inc., 2009, pp.
  19--30.

\bibitem{shannon49bell}
C.~E. Shannon, ``Communication theory of secrecy systems,'' \emph{Bell Labs
  Technical Journal}, vol.~28, no.~4, pp. 656--715, 1949.

\bibitem{goldwasser84jcss}
S.~Goldwasser and S.~Micali, ``Probabilistic encryption,'' \emph{Journal of
  Computer and System Sciences}, vol.~28, no.~2, pp. 270--299, 1984.

\bibitem{goldwasser82stoc}
------, ``Probabilistic encryption \& how to play mental poker keeping secret
  all partial information,'' in \emph{Proceedings of the Fourteenth Annual ACM
  Symposium on Theory of Computing}, ser. STOC '82.\hskip 1em plus 0.5em minus
  0.4em\relax ACM, 1982, pp. 365--377.

\bibitem{dwork06eurocrypt}
C.~Dwork, K.~Kenthapadi, F.~McSherry, I.~Mironov, and M.~Naor, ``Our data,
  ourselves: Privacy via distributed noise generation,'' in \emph{Proceedings
  of the 24th Annual International Conference on The Theory and Applications of
  Cryptographic Techniques}.\hskip 1em plus 0.5em minus 0.4em\relax
  Springer-Verlag, 2006, pp. 486--503.

\bibitem{kasiviswanathan08arxiv}
S.~P. Kasiviswanathan and A.~D. Smith, ``A note on differential privacy:
  Defining resistance to arbitrary side information,'' \emph{ArXiv}, vol.
  0803.3946, 2008.

\bibitem{gehrke11tcc}
J.~Gehrke, E.~Lui, and R.~Pass, ``Towards privacy for social networks: A
  zero-knowledge based definition of privacy,'' in \emph{Proceedings of the 8th
  Conference on Theory of Cryptography}.\hskip 1em plus 0.5em minus 0.4em\relax
  Springer-Verlag, 2011, pp. 432--449.

\bibitem{mittal16blog}
P.~Mittal, ``Differential privacy is vulnerable to correlated data --
  introducing dependent differential privacy,'' Freedom to Tinker blog:
  \url{https://freedom-to-tinker.com/2016/08/26/differential-privacy-is-vulnerable-to-correlated-data-introducing-dependent-differential-privacy/},
  2016.

\end{thebibliography}
\begin{fw}

\end{fw}

\section*{Colophon}

The authors typeset this document using \LaTeX{} with the \texttt{tufte-handout} document class.
We configured the class with the \texttt{nobib}, \texttt{nofonts}, and \texttt{justified} options.
We also altered the appearance of section headers.

The varying line widths in the document are a purposeful attempt at balancing two competing concerns in typesetting.
On the one hand, people find reading long lines of text difficult, which argues for short line lengths.
On the other hand, series of equations are easier to follow when the individual equations are not broken up across lines, which argues for line lengths long enough to hold the longest equation.
To balance these two concerns, we use a short line length for text, but exceed that length as needed for wide equations.

This compromise sacrifices the consistency of line lengths.
We welcome comments on whether this sacrifice is too high a price to pay for balancing the two aforementioned concerns.

\end{document}